\documentclass[letterpaper, 10 pt, conference]{ieeeconf}  %

\IEEEoverridecommandlockouts                              %

\overrideIEEEmargins                                      %

\usepackage{hyperref}
\usepackage[T1]{fontenc}
\usepackage{epsfig}

\title{\LARGE \bf
Privacy-preserving Nash Equilibrium Synthesis with Partially Ordered Temporal Objectives
}

\author{Caleb Probine$^{1}$ and Abhishek Kulkarni$^{2}$ and Ufuk Topcu$^{1}$%
\thanks{$^{1}$C. Probine and U. Topcu are with the University of Texas at Austin.
}
\thanks{$^{2}$A. Kulkarni is with vijil (was with the University of Texas at Austin). %
        }%
}

\usepackage{amsmath}
\usepackage{amssymb}
\usepackage{xcolor}
\usepackage{comment}
\usepackage{xargs}
\usepackage{acronym}
\usepackage{mathtools}
\usepackage{amsfonts}
\usepackage{paralist}

\usepackage{amsthm}
\acrodef{mdp}[MDP]{Markov Decision Process}
\acrodef{pomdp}[POMDP]{Partially Observable Markov Decision Process}
\acrodef{momdp}[MOMDP]{Multi-objective MDP}
\acrodef{ltl}[LTL]{Linear TtlemporTLal LoTeminating Labeled gic}
\acrodef{dfa}[DFA]{Deterministic finite automaton}
\acrodef{tlmdp}[TLMDP]{terminating labeled Markov decision process}
\acrodef{lmdp}[LMDP]{labeled Markov decision process}
\acrodef{pdfa}[PDFA]{preference deterministic finite automaton}
\acrodef{pdra}[PDRA]{preference deterministic Rabin automaton}
\acrodef{cpltlf}[CPLTL$_f$]{Conditional Preference over LTL$_f$}
\acrodef{cpa}[CPA]{Conditional Preference Automaton}
\acrodef{ltl}[LTL]{linear temporal logic}
\acrodef{ltlf}[LTL$_f$]{linear temporal logic over finite traces}
\acrodef{ne}[NE]{Nash equilibrium}
\acrodef{nea}[NE]{Nash equilibria}

\newcommandx{\ak}[2][1=inline]{\todo[linecolor=blue,backgroundcolor=blue!25,bordercolor=blue,#1]{\scriptsize{[AK]~ #2}}}
\newcommandx{\cp}[2][1=inline]{\todo[linecolor=green,backgroundcolor=green!25,bordercolor=green,#1]{\scriptsize{[CP]~ #2}}}

\newcommand{\ie}{i.e.}

\newcommand{\calP}{\mathcal{P}}
\newcommand{\calM}{\mathcal{M}}

\newcommand{\calD}{\mathcal{D}}
\newcommand{\calU}{\mathcal{U}}
\newcommand{\calE}{\mathcal{E}}

\newcommand{\calY}{\mathcal{Y}}
\newcommand{\calT}{\mathcal{T}}
\newcommand{\calH}{\mathcal{H}}
\newcommand{\calC}{\mathcal{C}}
\newcommand{\calF}{\mathcal{F}}
\newcommand{\calQ}{\mathcal{Q}}

\newcommand{\frU}{\mathfrak{U}}

\newcommand{\bfy}{\mathbf{y}}
\newcommand{\bfu}{\mathbf{u}}

\newcommand{\Paths}{\mathsf{Path}}

\newcommand{\trace}{L}

\newcommand{\Lift}{\mathsf{Semi}}

\newcommand{\PathsInd}[4]{\mathsf{Path}_{#1} (#2,#3,#4)}
\newcommand{\gterm}[1]{{#1}_{\tau}}
\newcommand{\PathSet}[1]{\mathsf{Paths}_{#1}}
\newcommand{\StratSpace}[2]{\Pi_{#1}^{#2}}

\newcommand{\bgobj}[2]{F_{#1}({#2})}
\newcommand{\bgobjP}[3]{F_{#1}^{#3}({#2})}
\newcommand{\lsr}[2]{U_{#1}({#2})}
\newcommand{\lsrP}[3]{U_{#1}^{#3}({#2})}

\newcommand{\qmem}[1]{\calU_{#1}}
\newcommand{\dmempl}[1]{\Gamma_{#1}^+}
\newcommand{\dmemmi}[1]{\Gamma_{#1}^-}
\newcommand{\npvar}{NewPref}

\newcommand{\mpre}{E}
\newcommand{\mPreI}[1]{E_{#1}}
\newcommand{\wpre}{E_{*}}
\newcommand{\wPreI}[1]{E_{#1,*}}
\newcommand{\ppref}[1]{\calE_{#1}}
\newcommand{\upc}[2]{\{#1\}_{#2}^\uparrow}

\newcommand{\qc}{q}
\newcommand{\rc}{m}
\newcommand{\dppl}[1]{\dpc_{#1}^{+}}
\newcommand{\repResp}{\mathsf{R}^*}
\newcommand{\resp}{\mathsf{Resp}}

\newcommand{\dpmi}[1]{\dpc_{#1}^{-}}

\newcommand{\dpc}{\Xi}

\newcommand{\genk}{GenKPrefs}
\newcommand{\genfc}{GenFromConstraints}

\newtheorem{lemma}{Lemma}
\newtheorem{theorem}{Theorem}

\newtheorem{remark}{Remark}

\theoremstyle{definition}
\newtheorem{assumption}{Assumption}
\newtheorem{definition}{Definition}
\newtheorem{proposition}{Proposition}
\newtheorem{problem}{Problem}

\newtheorem{example}{Example}

\newenvironment{proof_alt}[1][Proof]{%
  \par\noindent\textit{#1:} }{\hfill$\blacksquare$\par}

\bibliographystyle{plain} 
\usepackage{algorithm}
\usepackage[noend]{algorithmic}

\begin{document}

\maketitle
\thispagestyle{empty}
\pagestyle{empty}

\setcounter{footnote}{2}

\begin{abstract}

Nash equilibrium is a central solution concept for reasoning about self-interested agents. 
We address the problem of synthesizing Nash equilibria in two-player deterministic games on graphs, where players have private, partially-ordered preferences over temporal goals. 
Unlike prior work, which assumes preferences are common knowledge, we develop a communication protocol for equilibrium synthesis in settings where players' preferences are private information. 
In the protocol, players communicate to synthesize equilibria by exchanging information about when they can force desirable outcomes. 
We incorporate privacy by ensuring the protocol stops before enough information is revealed to expose a player's preferences.
We prove completeness by showing that, when no player halts communication, the protocol either returns an equilibrium or certifies that none exists. 
We then prove privacy by showing that, with stopping, the messages a player sends are always consistent with multiple possible preferences and thus do not reveal some given secret regarding a player's true preference ordering.
Experiments demonstrate that we can synthesize non-trivial equilibria while preserving privacy of preferences, highlighting the protocol’s potential for applications in strategy synthesis with constrained information sharing.
\footnote{\label{fn:extended_version}We give extended details for proofs and experiments in the appendix.}

\end{abstract}

\section{Introduction}

Sequential interactions among self-interested agents arise across domains such as planning, negotiation, and control. 
These interactions demand strategies that not only pursue individual goals but also adapt to the dynamic choices of others. 
Nash equilibrium serves as a key solution concept in these settings. 
At Nash equilibrium, no player benefits from a unilateral deviation from their strategy. 

We develop algorithms for privacy-preserving Nash equilibrium synthesis in two-player games on graphs, where players hold partial preferences on temporal objectives. 
Unlike more commonly studied total orders, partial preferences allow agents to express incomparable outcomes \cite{rahmani_probabilistic_2023}, an important feature in complex environments.
By encoding preferences over the states of finite automata, we capture rich specifications of temporal goals, such as those describable in finite-trace linear temporal logic (LTL$_f$) \cite{giacomo_linear_nodate}. 

Preserving privacy is essential in domains such as healthcare, autonomous logistics, and competitive markets. Agents may be unwilling to share preferences derived from sensitive data, or may wish to withhold strategic information for future interactions \cite{kearns_private_2012}. These reasons motivate the need to design Nash equilibrium synthesis methods that reduce the information revealed about players' preferences. 

Previous work, such as \cite{kulkarni_nash_2024}, provides methods for synthesizing Nash equilibria in games where players have preferences on temporal goals.
However, these approaches assume the players' preferences are common knowledge.

We design a private communication protocol that enables two agents to compute a Nash equilibrium by exchanging partial preference information. 
At each round, an agent communicates a solution to a zero-sum game derived from its preferences, which reveals whether the agent can guarantee an outcome it prefers strictly over some reference outcome. 
Importantly, the protocol includes a privacy check where each agent monitors whether continued communication risks revealing sensitive aspects of its preference structure. 

We make the following contributions.
\begin{itemize}
    \item We develop a protocol for Nash equilibrium synthesis in games on graphs where players have partial preferences on temporal objectives, and we incorporate a stopping mechanism  to avoid privacy violations.
    \item We prove the protocol's completeness by showing that, when no player stops the protocol, it either computes a Nash equilibrium or certifies that none exists.
    \item We prove the protocol's privacy by showing that, with the stopping mechanism, a player's messages in the protocol do not reveal a given secret.
    \item Through a package-delivery example, we show that we can synthesize non-trivial Nash equilibria without violating privacy constraints.
\end{itemize}
The methods we present lay foundations for further work on privacy-preserving synthesis in games with preferences.

\subsection*{Related Work}

\textbf{Preferences and games. }
Various works explore normal-form games where players have partial preferences, e.g.,  \cite{rozen_games_2018,bade_nash_2005,zanardi_posetal_2022}.
We study games on graphs, where players have partial preferences on the satisfaction of temporal objectives.

Existing work provides characterizations and synthesis methods for Nash equilibria in games on graphs where players have preferences on temporal objectives.
General results on \acf{ne} existence include \cite{le2013infinite,le2014winning,bruyere2017existence,le2018extending}, but these results are either non-constructive or make strict assumptions on preferences. 
\ac{ne} synthesis methods exist for LTL, B\"uchi, and reachability objectives \cite{fu2015concurrent,bouyer_pure_2015,guelev2025qualitative} and for settings with preferences defined on infinite words via automata \cite{bruyere2025games}.
We study games where players have preferences on finite words,  %
specified via semi-automata, and this setting includes \ac{ne} synthesis when players have preferences on
LTL$_f$ objectives.
When players have opposite preferences on LTL$_f$ objectives, one can compute a \ac{ne} by computing non-dominated-almost-sure winning strategies \cite{noauthor_sequential_2024}. %
For general preferences, one can compute a \ac{ne} by considering the alignment of the players' preferences \cite{kulkarni_nash_2024}. 
However, these methods assume access to both players' preferences and do not ensure players' privacy.

\textbf{Privacy and equilibrium computation. }
Prior work explores how one can compute equilibria while maintaining agents' privacy.
For example, one can maintain differential privacy when computing correlated equilibria of large normal-form games \cite{kearns_private_2012} or when computing Nash equilibria in congestion games \cite{rogers_asymptotically_2014} and aggregative games \cite{cummings_privacy_2015}.
While the above equilibrium computation approaches %
use a mediator,
various distributed differentially-private \ac{ne} synthesis methods also exist, e.g., \cite{ye_differentially_2022,huo_compression-based_2024,lin_statistical_2024,wang_differentially_2024-1,chen2025differentially}.
None of the above approaches address games on graphs or partial preferences.
Existing work explores the distinguishability of players' objectives in equilibrium play on games on graphs 
\cite{nakanishi2025strategies}.
However, the above approach conceals information revealed by the equilibrium solution itself rather than ensuring privacy during synthesis. Furthermore, the approach does not apply to settings with preferences on temporal objectives.

While not motivated by privacy, distributed iterative schemes for computing a NE, such as best-response dynamics, remove the requirement that preferences are common knowledge \cite{li_distributed_1987}. 
While prior work explores the convergence of such dynamics in games on graphs \cite{andersson_acyclicity_2010}, the results do not guarantee privacy and do not extend to settings where players have partial preferences on temporal objectives.

\section{Preliminaries}

\subsection{Interaction Model}

We model the interaction as a deterministic two-player turn-based game on a graph \cite{fijalkow2023games}, extended with a termination mechanism. 
The termination mechanism ensures that when evaluating satisfaction of temporal goals using a finite trace, as in LTL$_f$ \cite{giacomo_linear_nodate}, we use the same trace for each player.

\begin{definition}
    A deterministic two-player turn-based game $G$ is a tuple
	$
        \langle S
        = (S_1 \cup S_2)
        , s_0
        , A , T, AP, L \rangle.
	$
    $S$ is a state set partitioned into player 1 states $S_1$ and player 2 states $S_2$. 
    The initial state is $s_0$.
    $A = A_1 \cup A_2$ is a similarly partitioned finite set of actions. 
    $T: S \times A \rightarrow S$ is a transition function. %
    $AP$ contains atomic propositions.
	$L: S \rightarrow 2^{AP}$ is a labeling function.
    Given game $G$, we define a deterministic two-player turn-based game  $G_\tau$ \emph{with termination} as a tuple
    $$
		\gterm{G} = 
        \langle (S_1 \cup S_2)
        \times \{0, 1\} 
        , (s_0,0)
        , A \cup \{\tau\}, T', AP, L \rangle.
	$$
    $S\times \{0,1\}$ is an augmented state space where $(s_0,0)$ is the initial state, and $(s, 1)$ is a terminating state for any $s \in S$.
    We add \emph{termination action} $\tau$ to $A$.
    $T'$ is an augmented transition function %
    on $S\times \{0,1\}$
    where $T'((s,0),a) = (T(s,a),0)$ for all $s,a \in S\times A$, $T((s, j), \tau) = (s, 1)$ for $(s,j) \in S\times \{0,1\}$, and $(s,1)$ is absorbing for $s\in S$. 
        
\end{definition}

\textbf{Paths.} 
$\gterm{G}$ models game play until a player selects the action $\tau$, 
and the game moves to the absorbing state $(s,1)$.
A \emph{path} $\rho = (s_0,j_0)(s_1,j_1)\ldots$ is a 
sequence where, for each $i \geq 0$, $(s_{i+1}, j_{i+1}) = T'((s_i, j_i), a_i)$ for some $a_i \in A$. 
A path is \emph{terminating} if $j_i = 1$ for some $i \geq 0$.
The terminating state is $\mathsf{Last}(\rho)= (s_k, 1)$, where $k \geq 1$ is the smallest integer such that $j_k = 1$.
The trace $L(\rho)$ of a terminating path $\rho$ is a finite word in $\Sigma = {2^{AP}}^*$, comprised of the labels before the terminating state, \ie \ $\trace(\rho) = L(s_0) L(s_1) \cdots L(s_{k-1})$. 
We define $\mathsf{Occ}(\rho)$ as the set of states a path visits.

\textbf{Strategies.}
A strategy $\pi_i: (S \times \{0\})^* (S_i \times \{0\}) \rightarrow A_i \cup {\tau}$, for $i$ in $\{1,2\}$, selects actions for player $i$. 
A \emph{memoryless} strategy $\pi_i$ depends only on the current state, \ie, $\pi_i: (S_i \times \{0\}) \rightarrow A_i \cup {\tau}$. Given a starting state $(s,0)$, a strategy pair $(\pi_1, \pi_2)$ determines a unique path $\mathsf{Path}_{G_\tau}((s,0),\pi_1,\pi_2)$.
If omitted, we assume the state is the initial state, \ie, 
$\mathsf{Path}_{G_\tau}(\pi_1,\pi_2) = \mathsf{Path}_{G_\tau}((s_0,0),\pi_1,\pi_2)$
A path $\rho$ is \emph{consistent} with strategy $\pi_1$ if there exists $\pi_2'$ such that $\mathsf{Path}_{G_\tau}((s_0,0),\pi_1,\pi_2') = \rho$.
We introduce proper strategies \cite{patek1999stochastic} to reason about termination.
With a proper strategy, players ensure that either, they eventually terminate the game, or the game eventually leaves their states.

\begin{definition}
	A strategy $\pi_i$ is \emph{proper} if,
    for any path $\rho$ consistent with $\pi_i$, we have the following.
	\begin{itemize}
		\item $\exists n \geq 0$ such that $\pi_i(\rho[0] \ldots \rho[n]) = \tau$, or
		\item $\exists N$ such that for all $n \geq N$, $\rho[n]$ is in $S \setminus S_{i} \times \{0, 1\}$.
	\end{itemize}
\end{definition}

If all players use proper strategies, the game will terminate.

\begin{proposition}
    \label{prop:ProperPolicyTerm}
    If strategies $\pi_1$ and $\pi_2$ are proper, then $\mathsf{Path}_{G_\tau}((s,0),\pi_1,\pi_2)$ terminates.
\end{proposition}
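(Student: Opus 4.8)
The plan is to argue by contradiction: suppose $\rho = \mathsf{Path}_{G_\tau}((s,0),\pi_1,\pi_2)$ does not terminate, so $j_i = 0$ for all $i \geq 0$ and no player ever plays $\tau$ along $\rho$. I would first observe that $\rho$ is an infinite path that stays entirely within $S \times \{0\}$, since the only way to reach a state $(s,1)$ is via the action $\tau$, which by assumption is never taken. Because $\rho$ is consistent with both $\pi_1$ and $\pi_2$ (it is the unique path they jointly determine), each properness condition must apply to it.

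Next I would apply the definition of properness to each player in turn. Since $\tau$ is never played on $\rho$, the first disjunct of the properness definition fails for both players, so the second disjunct must hold for each: there exist $N_1$ and $N_2$ such that for all $n \geq N_1$, $\rho[n] \in (S \setminus S_1) \times \{0,1\}$, and for all $n \geq N_2$, $\rho[n] \in (S \setminus S_2) \times \{0,1\}$. Setting $N = \max(N_1, N_2)$, for every $n \geq N$ the state $\rho[n]$ lies in both $(S \setminus S_1) \times \{0,1\}$ and $(S \setminus S_2) \times \{0,1\}$, hence in $\bigl((S \setminus S_1) \cap (S \setminus S_2)\bigr) \times \{0,1\}$.

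The contradiction then comes from the partition structure. Since $S = S_1 \cup S_2$ is a partition, we have $(S \setminus S_1) \cap (S \setminus S_2) = S \setminus (S_1 \cup S_2) = \emptyset$, so there is no state $\rho[n]$ can occupy for $n \geq N$. This contradicts the assumption that $\rho$ is an infinite (non-terminating) path, completing the argument.

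The main subtlety to handle carefully is the edge case where the partition $S_1 \cup S_2$ might overlap or where a terminating state could be reached through means other than $\tau$; I would confirm from the definition of $T'$ that transitions out of non-terminating states either stay in $S \times \{0\}$ (via ordinary actions) or jump to $S \times \{1\}$ (only via $\tau$), so that an infinite $\tau$-free path is forced to remain in $S \times \{0\}$ as claimed. Granting that the state partition is genuinely disjoint, the argument is otherwise a direct disjunctive case analysis and should not present a deeper obstacle.
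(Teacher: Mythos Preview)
Your proposal is correct and follows essentially the same contradiction argument as the paper: assume the path never reaches $S\times\{1\}$, use the partition $S=S_1\cup S_2$ to show the second disjunct of properness cannot hold for both players simultaneously, and conclude some player must play $\tau$, forcing termination. The only cosmetic difference is that the paper phrases the contradiction as ``both second disjuncts cannot hold, hence some first disjunct holds,'' whereas you phrase it as ``both first disjuncts fail, hence both second disjuncts hold, which is impossible''; these are logically equivalent.
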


\paragraph*{Proof sketch\footref{fn:extended_version}}
We assume $\rho$ never visits $S \times \{1\}$, and demonstrate a contradiction.

\textbf{B\"uchi games.}
Given game arena $\langle S=S_1\cup S_2,A,T \rangle$ comprising states, actions, and transitions, a B\"uchi objective is a subset $F \subset S$. A winning strategy for B\"uchi objective $F$ ensures states in $F$ appear infinitely often.

\subsection{Preference Modeling}

We model preferences on temporal goals with \emph{preference automata} \cite{rahmani_probabilistic_2023}.
A preorder encodes preferences on a set.
\begin{definition}
    \label{def:preorder}
    A preorder on a set $U$ is a binary relation on $U$ that is reflexive and transitive.
\end{definition}
\noindent For a preorder $\mpre$ on $Q$, $q \succeq_\mpre q'$ if and only if $(q,q') \in E$.
For $q,q' \in Q$, $q \succ_\mpre q'$ denotes that $q$ is strictly preferred to $q'$, \ie \  $q \succeq_\mpre q'$ and $q' \not\succeq_\mpre q$.
The set of preorders on $Q$ is $\calP^Q$.
For preorder $\mpre$ 
and element $q \in Q$, the set $\{q\}^{\uparrow}_{\mpre} = \{q' \in Q: q' \succ_{\mpre} q\}$ is a strict upper closure of $E$.
$\mathcal{E}_\emptyset^Q = \{(q,q)| q \in Q\}$ is the smallest preorder on $Q$ while $\smash{\mathcal{E}_{(x,y)}^Q} = \mathcal{E}_\emptyset^Q \cup \{(x,y)\}$ is the smallest preorder such that $x \succ_{\calE} y$.

\begin{definition}
	\label{def:preference-automaton}
	A tuple
	$\langle Q, \Sigma, \delta, q_0, E \rangle$ is a \emph{preference automaton}, where
    $Q$ is a finite set of states,
    $\Sigma = 2^{AP}$ is the alphabet,
    $\delta: Q \times \Sigma \rightarrow Q$ is a deterministic transition function,
	$q_0 \in Q$ is an initial state, and
    $E \subseteq Q \times Q$ is a preorder on $Q$.
\end{definition}

The \emph{semi-automaton} $\langle Q, \Sigma, \delta, q_0 \rangle$ tracks progress toward temporal goals, and the preorder $E$ encodes preferences on goals.
One can construct preference automata from preferences on LTL$_f$ goals \cite{rahmani_probabilistic_2023}.
Abusing notation, we use $\delta$ for the extended transition function $\delta: Q\times \Sigma^* \rightarrow Q$ where
$\delta(q,\sigma w) = \delta(\delta(q,\sigma),w)$ for $w \in \Sigma^*$ and $\sigma \in \Sigma$, and $\delta(q,\epsilon) = q_0$, for $\epsilon$ empty.
The preorder $\wpre$ on $\Sigma^*$ satisfies $(w,w') \in \wpre$ if and only if $(\delta(q_0,w),\delta(q_0,w')) \in E$.

The following example illustrates how a preference automaton encodes preferences over temporal goals.

\begin{example}
	\begin{figure}[h]
		\centering
		\includegraphics[scale=1]{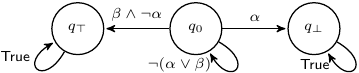}
		\caption{A semi-automaton over atomic propositions $\{\alpha,\beta\}$ that encodes if some event $\alpha$ occurs before some $\beta$.}
		\label{fig:pdfa_example}
	\end{figure}

    Figure~\ref{fig:pdfa_example} shows a semi-automaton to encode goals relating to the order of two events $\alpha$ and $\beta$.

    The preference automaton encodes preferences on goals
    through the preorder. 
    For example, we can specify that a player's most preferred outcome is to ensure $\alpha$ occurs strictly before $\beta$, by 
    specifying preorder $\mpre$ such that $q_\bot \succ_{\mpre} q_\top$.
\end{example}

\section{Problem Formulation}
\label{sec:problem}

For games where players have preferences on temporal goals, encoded via preference automata, we aim to find Nash equilibria while preserving privacy of players' preferences.

\textbf{Nash equilibrium.} 
We define Nash equilibria via preferences on paths.
Given proper strategies $\pi_1$ and $\pi_2$ 
, the unique terminating path $\rho = \Paths_{G_\tau}((s_0,0), \pi_1, \pi_2)$ determines the game's outcome.
Players evaluate this outcome using their preferences over traces, \ie, the sequences of labels induced by terminating paths.
Given words $w, w' \in \Sigma^*$, $w \succ_{\wPreI{i}} w'$ denotes that word $w$ is strictly preferred to $w'$ under player $i$'s preferences.
We define a Nash equilibrium in $\gterm{G}$ as follows.
\begin{definition}%

	Proper strategy pair $(\pi_1, \pi_2)$ is a Nash equilibrium (NE) if no player $i$ can deviate to another proper strategy $\pi_i'$ and induce a strictly preferred outcome. That is,
	\begin{align*}
		&\nexists \pi_1': L(\Paths_{\gterm{G}}( \pi_1', \pi_2)) \succ_{\wPreI{1}} L(\Paths_{\gterm{G}}(\pi_1, \pi_2)), \\
		&\nexists \pi_2': L(\Paths_{\gterm{G}}(\pi_1, \pi_2')) \succ_{\wPreI{2}}  L(\Paths_{\gterm{G}}( \pi_1, \pi_2)),
	\end{align*}
\end{definition}

\textbf{Privacy.} 
We consider games where players' preferences over the states of the semi-automaton are private knowledge.  

\begin{assumption}
	The semi-automaton $\langle Q, \Sigma, \delta, q_0 \rangle$ 
    is common knowledge, but preorders $E_1$ and $E_2$ are known only to the respective players. 
\end{assumption}

As preferences are private, this setting represents an incomplete information game  \cite{harsanyi1967games}. 
However, we are interested in whether the players can find an NE in a complete information game, where players' preferences are common knowledge,
through privacy-preserving communication.

A \emph{privacy-preserving} communication protocol in $\gterm{G}$ identifies a NE without requiring players to reveal private preferences. 
Players interact with a mediator by sending messages from a set $\calM$ in response to queries from a set $\calY$.
Set $\calM$ includes a $\mathsf{STOP}$ message, so that players may leave the protocol to preserve privacy.
Formally, a mediator is a function $\mathsf{Med}: \calY^* \times \calM^* \rightarrow \calY$ that defines the next query.
Players respond according to $\resp : \calP^Q \times \calY^* \rightarrow \calM$, a function defining the next response, for a given preference. 
Note that $\resp$ will also implicitly depend on the game $G_\tau$ and the player.
$\repResp: \calP^Q \times \calY^* \rightarrow \calM^*$ is the induced mapping from query sequences to response sequences.

For a given secret, privacy holds for a player if they do not reveal the truth-value of the secret by participating in the protocol.
Formally, a secret $\frU$ is a subset of the set of preorders $\calP^Q$, and player $i$ wants to ensure that communication does not reveal whether $E_i \in \frU$ or $E_i \in \frU^c$.

\begin{definition}
    A pair $(\mathsf{Med}, \mathsf{Resp})$ defines a privacy-preserving NE synthesis protocol if it satisfies the following.
    \begin{enumerate}
        \item If no player replies $\mathsf{STOP}$, the replies allow the mediator to construct an equilibrium or certify none exist.
        \item For any query sequence $\bfy \in \calY^*$ and preorder $\mpre$, there exist preorders $\tilde{\mpre}^+$ in $\frU$ and $\tilde{\mpre}^- $ in $ \frU^c$ so that $\repResp(\mpre,\bfy) = 
        \repResp(\tilde{\mpre}^+,\bfy) = 
        \repResp(\tilde{\mpre}^-,\bfy)$. That is, engaging in the protocol does not reveal the secret's value.
    \end{enumerate}
\end{definition}
We note that we define the mediator for convenience, but one can assume one of the players to choose the queries.

We consider the problem of designing such a protocol.

\begin{problem}
\label{prob:problem}
	Given a game $\gterm{G}$, 
    semi-automaton $\langle Q, \Sigma, \delta, q_0 \rangle$,
    and a secret $\frU \subset \calP^Q$, design a privacy-preserving NE synthesis protocol $(\mathsf{Mediator}, \resp)$.
\end{problem}

\section{Nash equilibrium synthesis}
\label{sec:NashSynth}

Following reactive synthesis approaches \cite{baier2008principles}, we define a product game to track the current automaton state.

\begin{definition}
{
Let $(Q,\Sigma, \delta, q_0)$ be a semi-automaton for some preference automaton, and let a game $\gterm{G}$ be given.
The \emph{terminating product game} is the tuple
\begin{equation*}
    \gterm{H} = \langle V',A',\Delta, (v_0,0) \rangle.
\end{equation*}
$V' = V\times \{0,1\} = S\times Q\times \{0,1\}$ is a state set, and we 
define $V_i = S_i \times Q$ for $i$ in $\{1,2\}$. $A' = A\cup \{\tau\}$ is a set of actions. $\Delta: V'\times A' \rightarrow V'$ is a deterministic transition function.
For $a \in A$, $\Delta((s,q,0),a) = (s',q',0)$ if and only if $(s', 0) = T'((s, 0), a)$ and $q' = \delta(q, L(s'))$, where $T'$ is the transition function of $G_\tau$. 
For a state $(s,q,0)$, $\Delta((s,q,0),\tau) = (s,q,1)$, and states in $V\times\{1\}$ are absorbing. 
The semi-automaton portion $q$ of the state does not evolve at or after termination. 
The initial state is $(v_0,0)$ where $v_0 = (s_0,\delta(q_0,L(s_0)))$.
We define $\Lift$ to return the semi-automaton part of a state such that $\Lift(s,q) = q$.
}

\end{definition}

As $(Q,\Sigma,\delta,q_0)$ is common knowledge, both players can construct $\gterm{H}$. 
However, players have distinct preferences on the states of $H_\tau$. 
For $i \in \{1, 2\}$, player $i$ privately computes a preorder $\ppref{i}$
on $V$ using $E_i$ such that
$((s,q),(s',q')) \in \ppref{i}$ if and only if $(\Lift(s,q), \Lift(s',q')) = (q,q') \in E_i$.

A terminating path $\rho = (s_0,0) \ldots (s_k,0) (s_k,1) \ldots$ in $\gterm{G}$ induces a terminating path $\varrho = (v_0,0) \ldots (v_k,0) (v_k,1) \ldots$ in $\gterm{H}$, where $v_j = (s_j,q_j)$, and  $q_j = \delta(q_0,L(s_0) \ldots L(s_j))$ for $j$ in $0,\ldots,k$. 
We refer to $\varrho$ as the trace of $\rho$ in $\gterm{H}$.
For paths in $\gterm{G}$, the terminating states of the corresponding traces in $\gterm{H}$ determine a player's preference on them.

Thus, finding a NE in $\gterm{G}$ reduces to finding a terminating state in $\gterm{H}$ from which no player can unilaterally deviate to terminate the game in a
state they strictly prefer.

\subsection{Equilibrium Existence with Proper Strategies}

We restrict players to proper strategies to ensure that game paths always have interpretations
in the semi-automaton.

We use the game in Figure~\ref{fig:cycExample} to show that restricting players to proper strategies leads to interesting \ac{ne} behavior.

\begin{figure}[h]
\centering
\includegraphics[scale=1]{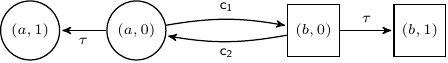}
\caption{A game, adapted from \protect\cite{bouyer2016stochastic}, where proper strategies induce interesting equilibrium properties. 
Player 1's states are circles, player 2's states are squares, and actions are transition annotations. 
The initial node is $(a,0)$.
We set $L(a) = \alpha$ and $L(b) = \beta$. 
The semi-automaton $\langle \{q_1,q_2\},\{\alpha,\beta\},\delta, q_1 \rangle$ tracks the last node, \ie,  $\delta(\cdot, \alpha) = q_1$ and $\delta(\cdot,\beta) = q_2$.
Each player prefers not to be the one terminating the game, \ie, 
$q_2 \succ_{E_1} q_1$ and $q_1 \succ_{E_2} q_2$.
}
\label{fig:cycExample}
\end{figure}

The proper strategy restriction prevents players from using infinite cycles to enforce desired outcomes. 
Suppose player 2 uses a memoryless strategy with $\pi_2((b,0)) = \mathsf{c}_2$. 
In this case, the game cycles forever unless player 1 terminates the game in $(a,1)$, achieving player 2's preferred outcome. 
However, $\pi_2$ is not proper,
If player 2 uses a proper strategy, player $1$ has a proper strategy to ensure the game terminates at $(b,1)$.

When players can not use infinite cycles to enforce desired outcomes, equilibria may not exist.
Indeed, in Figure~\ref{fig:cycExample}, each player can force the game to terminate in a chosen state if the other player uses a proper strategy.
Thus, one player can always deviate to induce their strictly preferred outcome.

\subsection{Synthesizing Nash Equilibria via Winning Region Sharing}
\label{sec:NashSynthMain}

We give an algorithm to synthesize equilibria using access to a query model that returns the solution of a certain zero-sum game.
This algorithm builds on prior work that constructs equilibria from solutions of zero-sum games \cite{bouyer_pure_2015}.

We subdivide the problem by the final semi-automaton state, which represents the outcome of the game. 
For each $q \in Q$, we give a \ac{ne} strategy pair where the game terminates in $S \times \{q\}$, or certify that no such pair exists.

To find a \ac{ne} strategy pair where the game terminates in $S \times \{q\}$, we solve a B\"uchi game. 
For $q\in Q$, we aim to find a state $w \in S \times \{q\}$, a \emph{nominal} path $\bfu$ from $(v_0,0)$ to $(w,1)$, and a pair of strategies that ensure no player has incentive to deviate from $\bfu$.
Finding $\bfu$ reduces to finding a path in the product game that remains in the losing region of a pair of B\"uchi games.
For $i$ in $\{1,2\}$ we first define $V^+_i(q)$ with
\begin{equation*}
    V^+_i(q) = \{v\in V: \Lift(v) \succ_{\mPreI{i}} q\} 
\end{equation*}
and we define $V^-_i(q) = V \setminus V_i^+(q).$

We define a B\"uchi objective on the product game $H_\tau$, such that player $i$ wants to ensure the following set of nodes appears infinitely often.
\begin{equation}
    \bgobj{i}{q} = \{(v,j) \in V_i^+(q) \times \{0,1\}\}.
\end{equation}

We design player $i$'s B\"uchi game so that, once the game is in player $i$'s winning region, they can force the game to terminate in an outcome they prefer to $q$.
Indeed, if $(v,1)$ appears,
for $v \in V^+_i(q)$, then the game has terminated in an outcome player $i$ strictly prefers to $q$.
Meanwhile, if $(v,0)$ appears infinitely often for some $v \in V_i^+(q)$, we can use the fact that the other player is using a proper strategy, to argue that 
player $i$ has a strategy to terminate the game in $V^+_i(q)$.

In this setting, 
B\"uchi 
games provide a way to account for the nuances that proper strategies introduce.

In player $i$'s losing region $\lsr{i}{q}$, \ie \ the winning region's complement,
the other player has a strategy $\sigma_{-i}$ to ensure the game terminates in a state player $i$ does not strictly prefer to $q$. 
As in \cite{brihaye_multiplayer_2012}, we term these strategies \emph{punishment} strategies.

We find \ac{ne} strategies by computing nominal paths that reach $S \times \{q\} \times \{1\}$ 
while remaining in the intersection $\lsr{1}{q} \cap \lsr{2}{q}$ of the players' losing regions.
If a player deviates from the nominal path, the other player uses the punishment strategy to ensure the deviating player does not achieve an outcome they strictly prefer to $q$.
Thus, the nominal path and punishment strategies define a \ac{ne}.

Algorithm~\ref{alg:MainAlg} encodes this approach to computing a \ac{ne}. The algorithm iterates over semi-automaton states $q$ and attempts to find a path that remains in $\lsr{1}{q} \cap \lsr{2}{q}$. 

\begin{algorithm}[t]
\caption{\ac{ne} synthesis via winning regions. 
}
\label{alg:MainAlg}
\begin{algorithmic}[1]
\FOR{$q \in Q$}
    \STATE \textcolor{orange}{\textbf{Query}} player $1$ for losing region $\lsr{1}{q}$ %
    \STATE \textcolor{orange}{\textbf{Query}} player $2$ for losing region $\lsr{2}{q}$ %
    \IF{$\exists \mathbf{u} = (v_0, 0) \ldots (w, 1) $ such that $\ldots$ \\ $ \mathsf{Occ}(\mathbf{u}) \subseteq (\lsr{1}{q} \cap \lsr{2}{q}) \wedge \Lift(w) = q$}
        \STATE \textcolor{orange}{\textbf{Query}} player $1$ for a memoryless punishment strategy $\sigma_{-1}$ in $\mathsf{Buchi}(\bgobj{1}{q})$
        \STATE \textcolor{orange}{\textbf{Query}} player $2$ for a memoryless punishment strategy $\sigma_{-2}$ in $\mathsf{Buchi}(\bgobj{2}{q})$
        \STATE \textbf{return} a Nash equilibrium using $\mathbf{u}$, $\sigma_{-1}$ and $\sigma_{-2}$ \label{line:NashConst}
    \ENDIF
\ENDFOR
\RETURN \FALSE
\end{algorithmic}
\end{algorithm}

When viewed as a protocol, a mediator decides on states to query and constructs a NE from responses. The agents implement response algorithms that return game information while maintaining privacy.
In orange,
we highlight points in Algorithm~\ref{alg:MainAlg} where the mediator queries players.

\begin{remark}
    \label{rem:Restriction}
    As $S \times \{q\} \times \{1\}$ is always in $\lsr{i}{q}$, player $i$ only needs to share the intersection of $\lsr{i}{q}$ with $V_r \cap (V \times \{0\})$ where $V_r$ are the reachable nodes of the product game. 
\end{remark}

\begin{remark}
If $m$ is the number of edges in the graph of $H_\tau$,
Algorithm~\ref{alg:MainAlg} has complexity $\mathcal{O}(m|V|^2)$.
We solve at most $|Q|$ B\"uchi games, which take $\mathcal{O}(m|V|)$ time to solve \cite{fijalkow2023games}.
\end{remark}

\subsection{Completeness of Algorithm~\ref{alg:MainAlg}}

We construct proper \ac{ne} strategies from the path $\mathbf{u}$ and the strategies $\sigma_{-1}$ and $\sigma_{-2}$.
We first define $\hat{\sigma}_{-i}$ as follows.
\begin{equation*}
    \hat{\sigma}_{-i}(v,0) = M_i({\sigma}_{-i})(v,0) = 
    \begin{cases}
       \tau & \text{ if } v \in V_i^-(q) \\ 
       {\sigma}_{-i}(v,0) & \text{ otherwise.}
    \end{cases}
\end{equation*}
The strategy $\hat{\sigma}_{-i}$ is a punishment strategy too, and satisfies a condition that will help construct proper strategies from $\hat{\sigma}_{-i}$.
\begin{lemma}
\label{lem:noRepMain}
    If $\sigma_{-2}$ is a punishment strategy in $\mathsf{Buchi}(\bgobj{2}{q})$, 
    then $\hat{\sigma}_{-2} = M_2(\sigma_{-2})$ is also a punishment strategy.
    Also, for any $u\in \lsr{2}{q}$ and $\pi_2 \in \Pi_2^{\gterm{H}}$, $\rho = 
    \PathsInd{\gterm{H}}{u}{\hat{\sigma}_{-2}}{\pi_2}$ visits each vertex in $V_1\times\{0\}$ at most once.
\end{lemma}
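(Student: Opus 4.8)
The plan is to prove both assertions by one reduction: any play under the modified strategy $\hat{\sigma}_{-2}$ that would witness a violation of either claim can be reassembled into a play under the original strategy $\sigma_{-2}$ that visits the B\"uchi set $F_2(q) = V_2^+(q)\times\{0,1\}$ infinitely often, contradicting the defining property of the punishment strategy $\sigma_{-2}$, namely that from the losing region $\lsr{2}{q}$ it guarantees $F_2(q)$ is visited only finitely often against \emph{every} player-$2$ strategy. I would take $\sigma_{-2}$ to be memoryless (such a strategy exists for the B\"uchi/co-B\"uchi winner), so that $\hat{\sigma}_{-2} = M_2(\sigma_{-2})$ is memoryless as well.

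First I would isolate the one structural fact that drives everything. By definition of $M_2$, the strategy $\hat{\sigma}_{-2}$ differs from $\sigma_{-2}$ only by outputting $\tau$ at player-$1$ states lying in $V_2^-(q)$. Hence along any play consistent with $\hat{\sigma}_{-2}$, \emph{up to the first moment player~$1$ plays $\tau$}, player~$1$'s moves coincide with $\sigma_{-2}$; and that first $\tau$, if it occurs, is played at a state of $V_2^-(q)$ and sends the game to $V_2^-(q)\times\{1\}$. Since $\lsr{2}{q}$ is a trap that $\sigma_{-2}$ never leaves, this lets me transfer any $\hat{\sigma}_{-2}$-prefix in which player~$1$ has not yet terminated into a genuine $\sigma_{-2}$-play.

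For the ``visits each $V_1\times\{0\}$ vertex at most once'' claim I argue by contradiction. Suppose $\rho = \PathsInd{\gterm{H}}{u}{\hat{\sigma}_{-2}}{\pi_2}$ revisits some $x=(v,0)\in V_1\times\{0\}$ at times $t_1<t_2$. All states of $\rho[0]\cdots\rho[t_2]$ carry flag $0$ (terminal states are absorbing), so no $\tau$ is played on this prefix; in particular player~$1$ never terminates, which forces every player-$1$ state it visits---including $x$---to lie in $V_2^+(q)$, since otherwise $\hat{\sigma}_{-2}$ would have played $\tau$ there. Thus $\rho[0]\cdots\rho[t_2]$ is consistent with player~$1$ using $\sigma_{-2}$. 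I then build the player-$2$ strategy $\pi_2'$ that replays $\rho[0]\cdots\rho[t_1]$ and afterwards loops the cycle $\rho[t_1]\cdots\rho[t_2]$ forever; this is consistent with $\sigma_{-2}$ precisely because the cycle meets player-$1$ states only inside $V_2^+(q)$. The resulting play starts in $u\in\lsr{2}{q}$ and visits $x\in F_2(q)$ infinitely often, contradicting that $\sigma_{-2}$ is winning.

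The punishment claim uses the same template. Assuming a play from $u\in\lsr{2}{q}$ terminates at $(w,1)$ with $w\in V_2^+(q)$, player~$1$ cannot be the cause, since $\hat{\sigma}_{-2}$ plays $\tau$ only in $V_2^-(q)$; so player~$2$ played $\tau$ at $(w,0)$, and player~$1$ played no $\tau$ earlier (else termination would have landed in $V_2^-(q)$), making the prefix $\sigma_{-2}$-consistent. Letting player~$2$ follow that prefix and then play $\tau$ at $(w,0)$ yields a $\sigma_{-2}$-play whose absorbing terminal $(w,1)\in V_2^+(q)\times\{1\}\subseteq F_2(q)$ is visited infinitely often, the same contradiction. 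I expect the main obstacle to be stating the prefix-transfer fact carefully enough that the constructed looping and terminating player-$2$ strategies are genuinely consistent with $\sigma_{-2}$---that is, verifying that every player-$1$ state touched before the deviation really lies in $V_2^+(q)$; once that bookkeeping is in place, both parts of the lemma follow from the single contradiction with the winning property of $\sigma_{-2}$.
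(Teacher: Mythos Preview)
Your approach is correct and close to the paper's. The paper proves the first claim by a direct two-case split (either the $\hat\sigma_{-2}$-play never touches $(V_2^-(q)\cap V_1)\times\{0\}$, in which case it coincides with a $\sigma_{-2}$-play and inherits the win; or it does, and the very next state is an absorbing $V_2^-(q)\times\{1\}$ state), and then \emph{uses} that result for the second claim: build the looping play under $\hat\sigma_{-2}$, invoke that $\hat\sigma_{-2}$ is winning to force the whole cycle into $V_2^-(q)$, and contradict the fact that $\hat\sigma_{-2}$ would then have played $\tau$ at the repeated $V_1$-vertex. Your uniform ``transfer the prefix to $\sigma_{-2}$'' framing is a mild repackaging of the same ingredients: for the no-repetition claim you deduce $x\in V_2^+(q)$ directly from ``no $\tau$ was played on the prefix,'' and then contradict $\sigma_{-2}$ being winning rather than $\hat\sigma_{-2}$'s terminating behaviour---logically equivalent, and it has the small advantage that part~2 no longer depends on part~1.

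One small omission: for the punishment claim you only spell out the case where the violating play terminates in $V_2^+(q)\times\{1\}$; you must also dispatch the case where it never terminates yet visits $V_2^+(q)\times\{0\}$ infinitely often. But there the entire play already coincides with a $\sigma_{-2}$-play (no $\tau$ is ever taken, so $\hat\sigma_{-2}=\sigma_{-2}$ along it), and the contradiction with $\sigma_{-2}$ winning from $u\in U_2(q)$ is immediate---your prefix-transfer framework covers it without new ideas.
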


\textit{Proof sketch\footref{fn:extended_version}:} 
Strategy $\hat{\sigma}_{-2}$ differs from $\sigma_{-2}$ on paths that visit $V_2^-(q) \cap V_1$, but if this event occurs, the game terminates in $V_2^-(q)$, player $1$ wins, and player $2$ loses. For the second part,
we use memorylessness of $\hat{\sigma}_{-2}$.
If $\mathsf{Path}_{H_\tau}(u, \hat{\sigma}_{-2}, \pi_2)$ 
visits a node $(v,0)$ in $ V_1 \times \{0\}$ twice for some $\pi_2$, then $\mathsf{Path}_{H_\tau}(u, \hat{\sigma}_{-2}, \pi_2')$ visits $v$ infinitely often for some $\pi_2'$.
As $\hat{\sigma}_{-2}$ is winning for player $1$, $v$ must lie in $V_1 \cap V_2^-$, which is a contradiction, as $\hat{\sigma}_{-2}$ terminates the game at $v$. 
$\square$

\noindent Note that $\hat{\sigma}_{-i}$ depends on $V_i^-(q)$, and so requires extra private information from player $i$ to construct. 
We assume player $i$ processes $\sigma_{-i}$ to get $\hat{\sigma}_{-i}$ before sharing it.

We can combine $\mathbf{u}$ with $\hat{\sigma}_{-i}$ to construct a full proper strategy.
Informally, let $C_1(\mathbf{u},\hat{\sigma}_{-2})$ be the strategy for player $1$ that follows the path $\mathbf{u}$ until player $2$ deviates, at which point player $1$ plays $\hat{\sigma}_{-2}$. This strategy is proper.
\begin{proposition}
    \label{prop:ProperCombination}
    The strategy $C_1(\mathbf{u}, M_2(\sigma_{-2}))$ is proper.
\end{proposition}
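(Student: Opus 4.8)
The plan is to unpack the definition of the combined strategy $C_1(\mathbf{u}, M_2(\sigma_{-2}))$ and verify the proper-strategy condition by case analysis on whether player~2 deviates from the nominal path $\mathbf{u}$. Recall that a strategy is proper if, along any consistent path, either player~1 eventually plays $\tau$, or the path eventually stays out of $S \setminus S_1 \times \{0,1\}$ forever (i.e., player~1's states stop occurring in the $\{0\}$ layer). I would fix an arbitrary path $\rho$ consistent with $C_1(\mathbf{u}, \hat{\sigma}_{-2})$, where $\hat{\sigma}_{-2} = M_2(\sigma_{-2})$, and split into two cases.

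First I would handle the no-deviation case: if player~2 never deviates from $\mathbf{u}$, then $\rho$ follows $\mathbf{u}$, and since $\mathbf{u} = (v_0,0)\ldots(w,1)$ is a finite terminating nominal path ending with the termination transition into $(w,1)$, player~1 plays $\tau$ at the appropriate step (or the game terminates via the path's own $\tau$), so the first clause of the proper-strategy definition holds. Second, the deviation case: once player~2 deviates from $\mathbf{u}$, player~1 switches to $\hat{\sigma}_{-2}$. Here I would invoke Lemma~\ref{lem:noRepMain}, which guarantees that the resulting path visits each vertex in $V_1 \times \{0\}$ at most once. The key consequence is that player~1's states in the non-terminated layer can only appear finitely often along $\rho$, since $V_1 \times \{0\}$ is a finite set and each of its vertices is visited at most once after the switch. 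This directly delivers the second clause of the proper-strategy definition: there exists $N$ beyond which $\rho[n]$ is never in $S_1 \times \{0\}$.

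The step I expect to require the most care is knitting the two phases together cleanly, since the strategy $C_1$ is defined with a switch point, and I must argue that the prefix following $\mathbf{u}$ (which is finite) does not interfere with the finiteness argument in the deviation phase. Concretely, the finitely many visits to $V_1 \times \{0\}$ along $\mathbf{u}$ before the deviation, plus the at-most-once visits guaranteed by Lemma~\ref{lem:noRepMain} after the deviation, together bound the total number of times $\rho$ occupies a player-1 state in the $\{0\}$ layer. Because both contributions are finite, after some index $N$ the path permanently avoids $S_1 \times \{0\}$, so either termination has already occurred or the second clause is satisfied. I would also note that the switch itself is well-defined because $\hat{\sigma}_{-2}$ is memoryless, so from the deviation point onward the continuation depends only on the current state, which is exactly what Lemma~\ref{lem:noRepMain} assumes. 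Assembling these observations yields that $C_1(\mathbf{u}, M_2(\sigma_{-2}))$ is proper.
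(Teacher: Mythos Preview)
Your proposal is correct and follows the paper's approach: case-split on whether player~2 deviates from $\mathbf{u}$, and in the deviation case invoke Lemma~\ref{lem:noRepMain} to conclude that $V_1\times\{0\}$ is visited only finitely often, which yields the second clause of the proper-strategy definition. The one hypothesis you should make explicit is that Lemma~\ref{lem:noRepMain} requires the post-deviation starting state to lie in $U_2(q)$; this holds because the deviation occurs at a node of $\mathbf{u}$, and $\mathbf{u}$ is constructed inside $U_1(q)\cap U_2(q)$ by Algorithm~\ref{alg:MainAlg}.
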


\textit{Proof sketch\footref{fn:extended_version}: } 
A non-terminating path deviates from $\mathbf{u}$, at which point, player $1$ uses $\hat{\sigma}_{-2}$. After deviation, the path visits $V_1\times \{0\}$ finitely many times by Lemma~\ref{lem:noRepMain}. 
$\square$

\noindent We define $C_2(\mathbf{u},\hat{\sigma}_{-1})$ analogously and note that versions of Lemma~\ref{lem:noRepMain} and Proposition~\ref{prop:ProperCombination} with flipped players also hold.

Strategies $C(\mathbf{u},\hat{\sigma}_{-2})$ and $C(\mathbf{u},\hat{\sigma}_{-1})$ then comprise a NE.
\begin{theorem}
    \label{thm:NESuff}
    If Algorithm~\ref{alg:MainAlg} terminates, the pair $(\pi_1,\pi_2) = (C_1(\mathbf{u}, M_2(\sigma_{-2})),  C_2(\mathbf{u}, M_1(\sigma_{-1})))$ is a NE. 
\end{theorem}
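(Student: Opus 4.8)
The plan is to verify the two clauses of the NE definition for the constructed pair. First I would establish that $\pi_1$ and $\pi_2$ are proper, invoking Proposition~\ref{prop:ProperCombination} together with its player-swapped analogue, so that $(\pi_1,\pi_2)$ is an admissible candidate and, by Proposition~\ref{prop:ProperPolicyTerm}, every deviation I consider (which the definition restricts to proper strategies) yields a terminating play. Next I would pin down the reference outcome: since $\pi_1 = C_1(\mathbf{u},M_2(\sigma_{-2}))$ follows $\mathbf{u}$ until player $2$ departs from it, and $\pi_2 = C_2(\mathbf{u},M_1(\sigma_{-1}))$ follows $\mathbf{u}$ until player $1$ departs from it, the joint play $\mathsf{Path}_{H_\tau}((v_0,0),\pi_1,\pi_2)$ is exactly $\mathbf{u}$ and terminates in $(w,1)$ with $\Lift(w)=q$. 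Thus the value each player compares a deviation against is the semi-automaton state $q$.

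The core is to show player $1$ has no profitable deviation; player $2$ is symmetric. Let $\pi_1'$ be any proper strategy and set $\rho = \mathsf{Path}_{H_\tau}((v_0,0),\pi_1',\pi_2)$. If $\rho$ never leaves $\mathbf{u}$, its outcome is again $q$ and player $1$ is indifferent. Otherwise, let $\mathbf{u}[k]$ be the first node where player $1$ departs from $\mathbf{u}$; this is a player-$1$ node on $\mathbf{u}$, hence in $U_1(q)\cap U_2(q)$. The key structural fact I would use is that $U_1(q)$, being player $2$'s winning region in the B\"uchi game $\mathsf{Buchi}(F_1(q))$, is a trap for player $1$: by positional determinacy, every action from a player-$1$ node of $U_1(q)$ stays in $U_1(q)$, including $\tau$, which from such a node $(v,0)$ leads to $(v,1)$ with $v\in V_1^-(q)$. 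Hence the post-deviation node lies in $U_1(q)$, and from that point $\pi_2$ plays the punishment strategy $\hat\sigma_{-1}=M_1(\sigma_{-1})$, which by the player-swapped Lemma~\ref{lem:noRepMain} is again a punishment strategy and so is winning for player $2$ from every node of $U_1(q)$.

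To finish, I would connect B\"uchi winning to the terminating preference. By properness the play terminates at some $(v^\ast,1)$. If $v^\ast\in V_1^+(q)$, then $(v^\ast,1)\in F_1(q)$ is absorbing, so $F_1(q)$ is visited infinitely often and player $1$ wins $\mathsf{Buchi}(F_1(q))$ from the post-deviation node, contradicting that $\hat\sigma_{-1}$ is winning for player $2$ there. Therefore $v^\ast\in V_1^-(q)$, i.e. $\Lift(v^\ast)\not\succ_{E_1} q$, so the deviation is not strictly preferred to $q$. Exchanging the roles of the players together with $\mathbf{u}$, $U_2(q)$, and $\hat\sigma_{-2}$ yields the same conclusion for player $2$, establishing the NE property.

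The step I expect to be the main obstacle is the one linking the abstract B\"uchi winning condition to the concrete preference on terminating outcomes, and in particular justifying that the first post-deviation node genuinely lands in $U_1(q)$ so the winning punishment strategy applies. This rests on the trap/positional-determinacy property of B\"uchi winning regions and on a careful treatment of the termination action $\tau$ at nodes of $U_1(q)$ (a player-$1$ node $(v,0)\in U_1(q)$ necessarily has $v\in V_1^-(q)$, so even immediate termination cannot reach $V_1^+(q)$). Everything else — properness, identifying the reference outcome $q$, and the contradiction via absorbing terminal states — is routine once this structural fact is in place.
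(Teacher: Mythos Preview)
Your approach is essentially the paper's, but there is one unnecessary detour and one genuine omission.

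The detour: you work to show that the \emph{post-deviation} node lands in $U_1(q)$ via a trap/positional-determinacy argument, and you flag this as the main obstacle. The paper does not need it. The deviation node $u_k$ itself lies on $\mathbf{u}\subseteq U_1(q)\cap U_2(q)$, and since $\hat\sigma_{-1}$ is a memoryless winning strategy for player~2 from every state of $U_1(q)$, it already wins from $u_k$ against any player-1 continuation---including whatever action $\pi_1'$ chooses at $u_k$. So the winning property can be applied directly at $u_k$, and the trap step is superfluous (though not wrong).

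The omission: you assert ``from that point $\pi_2$ plays the punishment strategy $\hat\sigma_{-1}$'', but that is precisely what requires justification. In the paper's formal construction, $C_2(\mathbf{u},\hat\sigma_{-1})$ carries a third clause: after player~1 deviates, player~2 follows $\hat\sigma_{-1}$ only as long as no $V_2\times\{0\}$ node has repeated, and otherwise plays $\tau$. This fallback is exactly what makes $\pi_2$ proper (the content of Proposition~\ref{prop:ProperCombination}, which you invoke). If it ever fired at a node $(v,0)$ with $v\in V_2\cap V_1^+(q)$, player~1's deviation would in fact succeed. The paper closes this by using the \emph{second} conclusion of (the player-swapped) Lemma~\ref{lem:noRepMain}: from $u_k\in U_1(q)$, the path under $\hat\sigma_{-1}$ visits each $V_2\times\{0\}$ vertex at most once, so the fallback never fires and $\pi_2$ coincides with $\hat\sigma_{-1}$ on the whole post-deviation suffix. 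You cite the lemma only for ``$\hat\sigma_{-1}$ is again a punishment strategy'' and never invoke its no-repetition clause, which is the ingredient actually doing the work here.
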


\textit{Proof sketch\footref{fn:extended_version}: } 
$\PathsInd{\gterm{H}}{(v_0,0)}{\pi_1}{\pi_2}$
trivially reaches $(w,1)$.
If player $2$ deviates with a proper strategy, player $1$ follows the proper punishment strategy $\hat{\sigma}_{-2}$, and the game must terminate in $V_2^-(q)$, as $\hat{\sigma}_{-2}$ is winning for player $1$ in player $2$'s B\"uchi game from any node in $\mathbf{u}$. 
$\square$

Algorithm~\ref{alg:MainAlg} is also necessary in the following sense. 
\begin{theorem}
    \label{thm:NENec}
    Suppose Algorithm~\ref{alg:MainAlg} returns \textbf{false}. Then no proper strategy NE exists in $\gterm{H}$. 
\end{theorem}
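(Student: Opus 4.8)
\textit{Proof proposal:}
The plan is to prove the contrapositive: assuming a proper-strategy NE $(\pi_1,\pi_2)$ exists in $\gterm{H}$, I will show that Algorithm~\ref{alg:MainAlg} cannot return \textbf{false}. Since both strategies are proper, Proposition~\ref{prop:ProperPolicyTerm} guarantees that the induced path $\varrho = \PathsInd{\gterm{H}}{(v_0,0)}{\pi_1}{\pi_2}$ terminates; let $(w,1)$ be its terminating state and set $q^\star = \Lift(w)$, the final semi-automaton state. The finite prefix $\mathbf{u}$ of $\varrho$ from $(v_0,0)$ to $(w,1)$ is then precisely a candidate of the form sought by the existential condition in the \textbf{if}-statement when the outer loop reaches $q = q^\star$, because it is a path from the initial state to a terminating state with $\Lift(w) = q^\star$.

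Next I would argue that some state along $\mathbf{u}$ escapes the intersection of the losing regions. If instead $\mathsf{Occ}(\mathbf{u}) \subseteq \lsr{1}{q^\star} \cap \lsr{2}{q^\star}$, then $\mathbf{u}$ itself witnesses the existential, so the algorithm would enter the \textbf{if} block and return a strategy pair rather than \textbf{false}. Hence, assuming the algorithm returns \textbf{false}, there is a state $(v,j) \in \mathsf{Occ}(\mathbf{u})$ lying outside $\lsr{1}{q^\star} \cap \lsr{2}{q^\star}$, that is, in the winning region (the complement of $\lsr{i}{q^\star}$) of some player $i$; without loss of generality take $i = 1$. By Remark~\ref{rem:Restriction}, the terminating state $(w,1)$ lies in both losing regions, so the escaping state is a non-terminal state $(v,0)$ visited strictly before termination, which is exactly where a deviation can begin.

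The core step is to convert this winning-region membership into a profitable deviation, contradicting that $(\pi_1,\pi_2)$ is an NE. I would have player $1$ deviate to the strategy $\pi_1'$ that follows $\varrho$ until it first reaches $(v,0)$ and thereafter plays the memoryless B\"uchi-winning strategy for $\mathsf{Buchi}(\bgobj{1}{q^\star})$, modified to play $\tau$ as soon as the game enters $V_1^+(q^\star)$ at one of player $1$'s states. Against the fixed proper strategy $\pi_2$, this deviation forces the game to terminate inside $V_1^+(q^\star)$: the B\"uchi-winning guarantee visits $\bgobj{1}{q^\star} = V_1^+(q^\star)\times\{0,1\}$ infinitely often against \emph{every} opponent, and, exactly as in the reasoning preceding the algorithm, properness of $\pi_2$ rules out an infinite recurrence and converts the guarantee into an actual termination in $V_1^+(q^\star)$. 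The resulting terminating state $v'$ satisfies $\Lift(v') \succ_{\mPreI{1}} q^\star = \Lift(w)$, so player $1$ strictly prefers the deviation outcome to that of $\varrho$, i.e. $L(\PathsInd{\gterm{H}}{(v_0,0)}{\pi_1'}{\pi_2}) \succ_{\wPreI{1}} L(\PathsInd{\gterm{H}}{(v_0,0)}{\pi_1}{\pi_2})$.

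It then remains to confirm that $\pi_1'$ is itself proper, so that the NE definition is genuinely violated; this follows as in Proposition~\ref{prop:ProperCombination}, since after the switch the game is forced to terminate. The symmetric argument handles the case $i = 2$. I expect the main obstacle to be the core step: rigorously establishing the winning-region analogue of Lemma~\ref{lem:noRepMain}, namely that from the complement of $\lsr{1}{q^\star}$ player $1$ can force termination strictly inside $V_1^+(q^\star)$ against any proper opponent using a proper strategy. This demands the same careful bookkeeping of termination versus infinite recurrence that underlies Lemma~\ref{lem:noRepMain}, now applied to the B\"uchi \emph{winner} rather than the punisher, together with the observation that fixing the particular proper strategy $\pi_2$ is subsumed by the universal guarantee of the winning strategy.
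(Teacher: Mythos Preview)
Your overall route matches the paper's: take an arbitrary proper pair, note that its induced path terminates at some $(w,1)$ with $q^\star=\Lift(w)$, observe that if the algorithm failed then this path must leave $\lsr{1}{q^\star}\cap\lsr{2}{q^\star}$ at some non-terminal vertex, and build a profitable deviation for the corresponding player from their B\"uchi winning strategy, using properness of the fixed opponent to convert the infinitely-often guarantee into an actual termination inside $V_i^+(q^\star)$. Your core-step argument for that termination against the fixed $\pi_2$ is correct and is exactly the paper's two-case split.

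The gap is in certifying that $\pi_1'$ is proper. Your appeal to Proposition~\ref{prop:ProperCombination} does not transfer: that result rests on Lemma~\ref{lem:noRepMain}, whose no-repetition argument uses that the modified punishment strategy plays $\tau$ on player-1 vertices in $V_2^-(q)$, so any repeated player-1 vertex would lie in $V_2^-(q)$ and trigger an immediate $\tau$. In your winning-side modification, $\sigma'$ plays $\tau$ only on player-1 vertices in $V_1^+(q^\star)$; a cycle through a player-1 vertex in $V_1^-(q^\star)$ and a player-2 vertex in $V_1^+(q^\star)$ still satisfies player~1's B\"uchi objective and is not excluded, so $\sigma'$ need not be proper against arbitrary opponents. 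You sense this difficulty, but the winning-region analogue of Lemma~\ref{lem:noRepMain} you propose is both harder and unnecessary. The paper's fix is simpler: you only need a proper deviation that works against the \emph{specific} $\pi_2$, not against all proper opponents. Since your core step already shows the deviation path terminates at some finite time $T$ against $\pi_2$, take $\pi_1'(h)$ to agree with the B\"uchi strategy for $|h|\le 2T$ and to be $\tau$ otherwise; this is trivially proper and reproduces the same terminating path against $\pi_2$.
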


\textit{Proof sketch\footref{fn:extended_version}: } 
Any proper strategy pair $(\sigma_{1},\sigma_{2})$ visits one player's winning region, so one player always has a potentially non-proper strategy $\sigma$ to win their B\"uchi game. 
We construct a proper strategy that wins the B\"uchi game using the fact that the other player follows a proper strategy.

\section{Privacy preservation methods}
\label{sec:priv_methods}

Given a secret $\frU$, \ie, a set of preorders on semi-automaton states $Q$, 
player $i$ must ensure the losing regions and punishment strategies they share, based on preorder $E_i$, do not reveal whether $E_i \in \frU$ or not.
Each player uses a protocol $\resp$ that defines a mapping $\mathsf{R}^*: \calP^Q \times \calY^* \rightarrow \calM^*$ from preferences and query sequences 
to response sequences.
Example queries and responses are semi-automaton states and losing regions, respectively.
A private protocol satisfies
\begin{equation}
\label{eq:privCondition}
 \exists \tilde{\mpre}\in \frU, \tilde{\mpre}' \in \frU^c:  \mathsf{R}^*(\mpre,\bfy) = \mathsf{R}^*(\tilde{\mpre}',\bfy) = \mathsf{R}^*(\tilde{\mpre},\bfy),
\end{equation}
for all preorders $\mpre$ and query-sequences $\bfy$.
That is, a player never returns a sequence of responses that only preferences in $\frU$ induce.
We design protocols where players truthfully respond until they can no longer certify that preferences from both $\frU$ and $\frU^c$ explain the shared responses.
We give algorithms for player $1$, but they easily translate to player $2$.

We consider secrets defined by the set of preorders where $x$ is strictly preferred to $y$, for $x$ and $y$ in $Q$, for example
$
    \frU_{xy} = \{\mpre | x \succ_{\mpre} y\}.
$
The methods we give generalize to conjunctions, e.g., $\frU = \frU_{x_1,y_1} \cap \frU_{x_2,y_2}
$.

\subsection{Privacy by preference-inference}

Player $1$ must ensure that incrementally revealed game information does not 
reveal whether $E_1 \in \frU$.
When player $1$'s preorder is $E$, let
$\bgobjP{1}{q}{E}$ be the objective in the B\"uchi game for semi-automaton state $q$ and let $\lsrP{1}{q}{E}$ be the losing region.
Suppose player $1$ provides losing regions $(\lsrP{1}{\qc_j}{E_1})_{j=1}^{k-1}$ for semi-automaton states $(\qc_j)_{j=1}^{k-1}$, and the mediator queries them for $\lsrP{1}{\qc_k}{E_1}$.
Before revealing $\lsrP{1}{\qc_k}{E_1}$, player $1$ must ensure there exist preorders $E \in \frU$ and $E' \in \frU^c$, that induce these losing regions, \ie \ such that 
\begin{equation}
    \lsrP{1}{\qc_j}{E} = \lsrP{1}{\qc_j}{E'} = \lsrP{1}{\qc_j}{E_1}    \quad \forall j  = 1,\ldots, k.
\end{equation}
If these preorders do not exist, player $1$ must reply $\mathsf{STOP}$, as revealing $\lsrP{1}{\qc_k}{E_1}$ will reveal whether or not $E_1 \in \frU$.

To ensure the $\mathsf{STOP}$ message itself does not reveal whether $E_1 \in \frU$, 
player $1$ can allocate
a pair of \emph{dummy preorders} $\dppl{k} \in \frU$ and $\dpmi{k} \in \frU^c$ on which the protocol stops for the first time at query $k$. 
Defining $\dppl{k}$ and $\dpmi{k}$ ensures that, for any query sequence $\bfy$, $\mathsf{STOP}$ is a feasible response for preferences in both $\frU$ and $\frU^c$.
Players compute new dummy preorders before responding to each query and ensure they are consistent with previous responses. For example, player $1$ should compute $\dppl{k}$ so that
\begin{equation}
    \dppl{k} \in \frU  \text{ and }  \lsrP{1}{q_j}{\dppl{k}} = \lsrP{1}{q_j}{E_1}
       \quad \forall j= 1 , \ldots,  k-1. 
\end{equation}

Algorithm~\ref{alg:mergedAlgo} defines a response function that preserves privacy by (1) ensuring losing regions and punishment strategies do not reveal information, and (2) ensuring $\mathsf{STOP}$ is not disambiguating.
The query comprises a semi-automaton state, and a flag to indicate whether to return a punishment strategy or losing region.
We note that, by Algorithm~\ref{alg:MainAlg}, a punishment flag can only appear on the final query.
On query $k$ the algorithm also takes the previous queries $\qmem{k} = (q_1,\ldots,q_{k-1})$ as well as two lists $\dmempl{k} = (\dppl{1},\ldots, \dppl{k})$ and 
$\dmemmi{k} = (\dpmi{1},\ldots, \dpmi{k})$ of previously generated dummy preorders.
The sets $\qmem{k},\dmempl{k}$ and $\dmemmi{k}$ comprise the memory of Algorithm~\ref{alg:mergedAlgo}.
On the first call to Algorithm~\ref{alg:mergedAlgo},
$\qmem{1} = \emptyset$, $\dmempl{1} = \{\mathcal{E}_{(x,y)}^Q\}$, and
$\dmemmi{1} = \smash{\{\mathcal{E}_\emptyset^Q\}}$.
Algorithm~\ref{alg:mergedAlgo} first checks if $\mpre$ is a dummy preorder, 
and if so, replies $\mathsf{STOP}$.
The algorithm then calls a function \textsc{\genk} to generate preorders that (1) share winning regions, and possibly a punishment strategy, with $\mpre$, and (2) are not in $\dmempl{k} \cup \dmemmi{k}$.
If such preorders exist, a player can respond truthfully without revealing whether $\mpre \in \frU$.
Additionally, a player can use the generated preferences as dummy preferences for the next query.
When queried for a punishment strategy, we assume player $1$ simply returns the whole B\"uchi game  at line~\ref{ln:bucReturn}.

The function \textsc{\genk} must satisfy a series of assumptions that ensure Algorithm~\ref{alg:mergedAlgo} facilitates privacy.

\begin{assumption}
\label{ass:GenKAss}
Let $g = \textsc{\genk}$, and let $\calP^+,\calP^- = g(E, (\qc_j)_{j=1}^k, \frU_{xy}, K, f)$. 
The following conditions hold.
\begin{enumerate}
    \item $\calP^+ \subset \frU_{xy}$, $\calP^- \subset \frU_{xy}^c$, $|\calP^+| \leq K$, and $|\calP^-| \leq K$.
    \item For all $E' \in \calP^+ \cup \calP^-$ we have, for $j = 1,\ldots, k$, 
    \begin{equation}
        (V_r \times \{0\}) \cap \lsrP{1}{q_j}{E} =  (V_r \times \{0\}) \cap  \lsrP{1}{q_j}{E'} .
    \end{equation}
    \item If $f = p$, then $\{\qc_k\}_{\mpre}^\uparrow = \{\qc_k\}_{\mpre'}^\uparrow$.
    \item For $E' \in \calP^+ \cup \calP^-$, $j$ in $\{1,\ldots,k\}$, and $K'$ in $\mathbb{N}$,
    \begin{multline}
        \label{eq:winRegionEqualityCondition}
        g(E, (\qc_l)_{l=1}^j, \frU_{xy},K', f^*_j) = \\ g(E', (\qc_l)_{l=1}^j, \frU_{xy}, K', f^*_j),
    \end{multline}
    where $f^*_j = r$ for $j < k$, and $f^*_j = f$.
\end{enumerate}
Condition 1 requires the secret to be true and false for preferences in $\mathcal{P}^+$ and $\mathcal{P}^-$, respectively.
Conditions 2 and 3 stipulate that the preferences must share losing regions, and possibly punishment strategies, with $E$.
The final condition ensures running Algorithm~\ref{alg:mergedAlgo} on preferences in $\mathcal{P}^+ \cup \mathcal{P}^-$
gives the same outputs as running the algorithm with $E$.

When generating $\calP^+_k$ to compute new dummy preorder $\dpmi{k+1}$, we note that we only need to generate at most $k+1$ preorders, as $\dmempl{k} \cup \dmemmi{k}$ contains at most $k$ preorders in $\frU$. 

\end{assumption}

The following theorem codifies that Algorithm~\ref{alg:mergedAlgo} is privacy-preserving under Assumption~\ref{ass:GenKAss}.

\begin{theorem}
    \label{thm:privacyGuar}
    If Assumption~\ref{ass:GenKAss} holds, Algorithms~\ref{alg:MainAlg} and \ref{alg:mergedAlgo} comprise a privacy-preserving NE synthesis protocol.
\end{theorem}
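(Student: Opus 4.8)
The plan is to verify separately the two defining conditions of a privacy-preserving NE synthesis protocol, with $\mathsf{Med}$ taken to be Algorithm~\ref{alg:MainAlg} and $\resp$ taken to be Algorithm~\ref{alg:mergedAlgo}. For the completeness condition, I would observe that Algorithm~\ref{alg:mergedAlgo} only ever deviates from a truthful response by replying $\mathsf{STOP}$. Hence, if no player replies $\mathsf{STOP}$ on the query sequence Algorithm~\ref{alg:MainAlg} generates, then at every query each player $i$ returns the genuine reachable losing region $\lsrP{i}{q}{E_i}$ or the genuine B\"uchi game (line~\ref{ln:bucReturn}). Algorithm~\ref{alg:MainAlg} therefore runs on the true game data, so by Theorem~\ref{thm:NESuff} it returns a NE whenever the search for a nominal path $\mathbf{u}$ succeeds, and by Theorem~\ref{thm:NENec} it correctly certifies that no proper-strategy NE exists when it returns \textbf{false}. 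This settles the first condition.

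The core of the argument is the privacy condition~\eqref{eq:privCondition}, for which the key objects are the dummy preorders. Fix a preorder $\mpre$ and a query sequence $\bfy = (y_1,\dots,y_n)$, and let $k$ be the first query at which Algorithm~\ref{alg:mergedAlgo}, run on $\mpre$, replies $\mathsf{STOP}$ (set $k = n+1$ if it never does). I would establish, by induction on the query index $j$, the following invariant for the run on $\mpre$: the dummies satisfy $\dppl{j} \in \frU$ and $\dpmi{j} \in \frU^c$; each agrees with $\mpre$ on the reachable losing regions at queries $1,\dots,j-1$; and — this is where Assumption~\ref{ass:GenKAss} enters — running Algorithm~\ref{alg:mergedAlgo} afresh on $\dppl{j}$ (resp. $\dpmi{j}$) reproduces the same \textsc{\genk} outputs, hence the same memory $(\qmem{j},\dmempl{j},\dmemmi{j})$ and the same truthful responses, as the run on $\mpre$ through query $j$. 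The base case is immediate from the initialization $\dmempl{1} = \{\mathcal{E}_{(x,y)}^Q\}$ and $\dmemmi{1} = \{\mathcal{E}_\emptyset^Q\}$; the inductive step invokes Conditions~1--3 of Assumption~\ref{ass:GenKAss} to place the freshly generated preorders on the correct side of the secret with matching losing regions and punishment strategy, and Condition~4 (the recursive \textsc{\genk}-equality) to secure the tracking property.

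With the invariant in hand, I would split on $k$. If $k \leq n$, the dummy $\dppl{k} \in \frU$ reproduces $\mpre$'s truthful responses at queries $1,\dots,k-1$ and, being a member of its own dummy list at query $k$, replies $\mathsf{STOP}$ there; since the tracking property forbids it from stopping earlier, $\repResp(\dppl{k},\bfy) = \repResp(\mpre,\bfy)$, and symmetrically $\dpmi{k} \in \frU^c$ yields the second witness. If $k = n+1$, then $\mpre$ never stops, so the final \textsc{\genk} call returned nonempty $\calP^+ \subset \frU$ and $\calP^- \subset \frU^c$; any $\tilde{\mpre}^+ \in \calP^+$ and $\tilde{\mpre}^- \in \calP^-$ share all of $\mpre$'s losing regions and punishment strategy by Conditions~2--3, are excluded from every dummy list (so the dummy check never fires on them), and reproduce $\mpre$'s execution by Condition~4, delivering the two witnesses. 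In both cases the witnesses lie in $\frU$ and $\frU^c$ and induce the same response sequence, which is exactly~\eqref{eq:privCondition}.

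The main obstacle I anticipate is the inductive tracking step, since the $\mathsf{STOP}$-decision at each query depends on the accumulated memory $(\qmem{k},\dmempl{k},\dmemmi{k})$, so I must show that a fresh run on a generated preorder rebuilds, query-by-query, the identical dummy lists as the run on $\mpre$. Condition~4 of Assumption~\ref{ass:GenKAss} is tailored to close this induction, forcing \textsc{\genk} to agree on $\mpre$ and on each preorder it generates at every earlier query and flag setting. Care is nonetheless needed to confirm that a generated preorder never accidentally coincides with an earlier dummy, which would trigger a premature stop and break the matching; this follows because \textsc{\genk} excludes the current contents of $\dmempl{k} \cup \dmemmi{k}$ whenever it generates new preorders.
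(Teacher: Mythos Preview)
Your proposal is correct and follows essentially the same approach as the paper: completeness via Theorems~\ref{thm:NESuff} and~\ref{thm:NENec}, and privacy by choosing a dummy preorder and inducting on the query index to show that a fresh run on the dummy reproduces the same memory $(\qmem{j},\dmempl{j},\dmemmi{j})$ and the same responses, with the same case split on whether the run stops. One minor correction: \textsc{\genk} does not itself exclude the contents of $\dmempl{k}\cup\dmemmi{k}$, so in the non-stopping case you cannot take \emph{any} $\tilde{\mpre}^{\pm}\in\calP^{\pm}$ as a witness; the paper instead picks $\tilde{\mpre}^{-}=\dpmi{k+1}\in\calP^-_k\setminus(\dmempl{k}\cup\dmemmi{k})$ (and symmetrically $\dppl{k+1}$), which is exactly what guarantees the premature-stop check at line~\ref{ln:STOPtest} never fires on the witness.
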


\textit{Proof sketch\footref{fn:extended_version}: }
When no player returns $\mathsf{STOP}$, the protocol proceeds via Algorithm~\ref{alg:MainAlg}, and is correct by Theorems~\ref{thm:NESuff} and \ref{thm:NENec}. 
We validate privacy.
Let preorder $\mpre \in \frU_{xy}$ and query sequence $(\qc_j,f_j)_{j=1}^k$ be given.
If the response $\rc_k$ to query $k$ is $\mathsf{STOP}$, then by definition of $\dpmi{k}$, and Assumption~\ref{ass:GenKAss}, using Algorithm~\ref{alg:mergedAlgo} with $\mpre$ and $\dpmi{k}$ gives the same output. 
If $\rc_k \neq \mathsf{STOP}$, the same logic implies that  $\mpre$ and $\dpmi{k+1}$ give the same output. Symmetric results hold for $\mpre \in \frU_{xy}^c.$
$\square$

\begin{algorithm}[t]
\caption{A response function for player $1$.}\label{alg:mergedAlgo}
\begin{algorithmic}[1]
\STATE \textbf{Input:} $\frU_{xy}$, $\mpre, \mathcal{U} = (\qc_j)_{j=1}^{k-1}, (\qc_k, \text{flag}), $ \ldots \\ $ \quad \quad \quad \quad \dmempl{k} = (\dppl{j})_{j=1}^k, \dmemmi{k} = (\dpmi{j})_{j=1}^k$
\STATE  \textbf{if} \ $\mpre = \dppl{k}$ or $\mpre = \dpmi{k}$ \textcolor{orange}{\textbf{Reply}} $\mathsf{STOP}$\label{ln:STOPtest}
\STATE $\calP^+_k, \calP^-_k \gets  $
$\textsc{\genk}( 
\mpre, (\qc_j)_{j=1}^k,\frU_{xy}, k+1, \text{flag})$
\STATE  \textbf{if} \ $\calP^+_k \subset \dmempl{k} \cup \dmemmi{k}$ \textcolor{orange}{\textbf{Reply}} $\mathsf{STOP}$ \label{ln:pProcStart}
\STATE  \textbf{if} \ $\calP^-_k \subset \dmempl{k} \cup \dmemmi{k}$
\textcolor{orange}{\textbf{Reply}} $\mathsf{STOP}$ 
\STATE $\dppl{k+1} \gets \dpc'$ for some $\dpc' \in \calP^+_k \setminus ( \dmempl{k}\cup \dmemmi{k} )$
\STATE $\dpmi{k+1} \gets \dpc'$ for some $\dpc' \in \calP^-_k \setminus ( \dmempl{k}\cup \dmemmi{k} )$ \label{ln:pProcEnd}

\IF {flag = $r$}
    \STATE\textcolor{orange}{\textbf{Reply}} with 
    $(V_r \times \{0\}) \cap \lsrP{1}{q_k}{E}$\label{ln:wrReturn}
    \STATE \textcolor{blue}{\textbf{Store}} 
    $\qmem{k+1} = (\qc_j)_{j=1}^k,$\\ $\quad \quad \ \ \dmempl{k+1} = (\dppl{j})_{j=1}^{k+1} , \dmemmi{k+1} = (\dpmi{j})_{j=1}^{k+1} $ \label{ln:updGamma}
\ELSE
    \STATE \textcolor{orange}{\textbf{Reply}} with whole $\mathsf{Buchi}$ game $\mathsf{Buchi}(\bgobjP{1}{q_k}{E})$ \label{ln:bucReturn}
\ENDIF
\end{algorithmic}
\end{algorithm}

\subsection*{Implementing \textsc{\genk}.}

We generate preorders that satisfy a set of constraints on strict preference.
For $u,v$ in $Q$, a constraint is a tuple $(u,v,\succ)$ or $(u,v,\not\succ)$.
A preorder $\mpre$ satisfies $(u,v,\succ)$ if and only if $u \succ_\mpre v$, and satisfies $(u,v,\not\succ)$ if and only if $u \not\succ_\mpre v$. 
The \textsc{CheckConstraints} procedure in Algorithm~\ref{alg:MainBodyPrefGeneration}
returns preorders that satisfy a set $\calC$ of constraints, %
by iteratively adding edges to a preorder $\calE$ until it satisfies constraints or we get a contradiction.
\textsc{\genfc} uses \textsc{CheckConstraints} to generate sets of these preorders.
Algorithm~\ref{alg:MainBodyPrefGeneration} is correct by Lemmas~\ref{lem:prefGenCorr1} and \ref{lem:prefGenCorr2}.
Lemma~\ref{lem:prefGenCorr2} additionally codifies \textsc{\genfc}'s completeness, as it always returns $M$ preorders if $M$ such preorders exist.

\begin{lemma}
    \label{lem:prefGenCorr1}
    $\textsc{CheckConstraints}(Q, \mathcal{C}, \mathcal{E}_{init})$ returns the smallest preorder satisfying constraints $\mathcal{C}$ that contains $\mathcal{E}_{init}$
\end{lemma}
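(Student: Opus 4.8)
The plan is to split the claim into \emph{soundness}---the relation $\calE$ returned by $\textsc{CheckConstraints}(Q,\calC,\mathcal{E}_{init})$ is a preorder containing $\mathcal{E}_{init}$ and satisfying $\calC$---and \emph{minimality}---$\calE$ is contained in every preorder with those two properties. Call a preorder \emph{admissible} if it contains $\mathcal{E}_{init}$ and satisfies every constraint in $\calC$. I would first observe that admissible preorders are closed under intersection: reflexivity and transitivity survive intersection, a required strict preference $(u,v,\succ)$ is preserved because both operands place $(u,v)$ in and keep $(v,u)$ out, and a forbidden strict preference $(u,v,\not\succ)$ cannot be manufactured by intersecting two relations that each avoid it. Hence, when the admissible set is nonempty, it has a unique least element under inclusion, so ``the smallest'' is well defined. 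The goal is then to show the algorithm outputs exactly this least element and reports a contradiction precisely when no admissible preorder exists.

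The main device is a loop invariant: after every iteration the working relation $\calE$ is contained in every admissible preorder $E$. Initially $\calE = \mathcal{E}_{init} \subseteq E$ by definition of admissibility. For the inductive step I would argue that each edge added by the procedure is \emph{forced}. If the algorithm adds $(u,v)$ to meet an as-yet-unmet strict constraint $(u,v,\succ)$, then every admissible $E$ already contains $(u,v)$, so $\calE \cup \{(u,v)\} \subseteq E$. If instead it adds the reverse pair $(v,u)$ to repair a constraint $(u,v,\not\succ)$ that the current relation violates via $u \succ v$, then the invariant gives $(u,v) \in \calE \subseteq E$; since $E$ satisfies $(u,v,\not\succ)$ while containing $(u,v)$, it must contain $(v,u)$, so the added pair again lies in $E$. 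Taking the reflexive-transitive closure preserves the invariant, because every admissible $E$ is reflexive and transitive and already contains $\calE$ together with the new pair, hence contains its closure. Termination follows since $Q$ is finite, $\calE$ only grows, and there are finitely many pairs to add.

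When the loop halts without a contradiction, its stopping condition guarantees $\calE$ satisfies $\calC$; by the closure step $\calE$ is a preorder, and it contains $\mathcal{E}_{init}$ since it only grew from it. Thus $\calE$ is admissible, and by the invariant $\calE \subseteq E$ for every admissible $E$, so $\calE$ is the least admissible preorder---exactly the smallest preorder satisfying $\calC$ that contains $\mathcal{E}_{init}$. When the loop instead reports a contradiction, this occurs because some required $(a,b,\succ)$ cannot hold, i.e.\ the working relation has been forced to contain $(b,a)$; the invariant then places $(b,a)$ in every admissible $E$, so no admissible $E$ can realize $a \succ b$, and the admissible set is empty. A violated $(u,v,\not\succ)$ never causes an irreparable conflict by itself, since it can always be satisfied by adding the reverse pair; the only obstruction is when that repair clashes, through the closure, with a required strict preference, which is again a contradiction of the above form.

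The step I expect to be the main obstacle is the inductive case for $\not\succ$ constraints together with the closure. I must argue carefully that the reverse pair added to repair a $\not\succ$ violation is genuinely forced in every admissible preorder, and that the cascade of closures and newly induced violations settles at a fixpoint that truly satisfies \emph{all} of $\calC$ rather than merely the constraints examined so far. Confirming that every contradiction the algorithm detects reduces to a forced reverse edge colliding with a required strict preference---so that reporting a contradiction is sound and not overly eager---is the most delicate point.
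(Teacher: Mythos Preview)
Your proposal is correct and follows essentially the same approach as the paper's proof: both establish the invariant that the working relation $\calE$ is contained in every admissible preorder by arguing each added edge (from initialization, transitive closure, or repairing a $\not\succ$ violation) is forced, then use this invariant to conclude minimality on success and nonexistence on $\mathsf{Fail}$. Your additional observation that admissible preorders are closed under intersection (so ``smallest'' is well defined) is a nice touch the paper omits, but it is not strictly needed once you show the returned $\calE$ is itself admissible and contained in every admissible preorder.
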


\textit{Proof sketch\footref{fn:extended_version}:}
Let $\calH$, a preorder containing $\mathcal{E}_{init}$ and satisfying $\mathcal{C}$, be arbitrary.
Every edge that we add to $\mathcal{E}$, on lines \ref{ln:pa1}, \ref{ln:pa2}, and \ref{ln:pa3}, lies in $\calH$.
If we return $\calE$ it is a preorder and satisfies $\calE \subseteq \calH$, and satisfies $\calC$.
If we return $\mathsf{Fail}$, the existence of $\calH$ leads to a contradiction. $\square$

\begin{lemma}
    \label{lem:prefGenCorr2}
    Let
    $ \mathcal{P} = \textsc{\genfc}(Q,\mathcal{C},M)$.
    Either $|\mathcal{P}| = M$ or $\mathcal{P}$ contains all preorders satisfying $\mathcal{C}$.
\end{lemma}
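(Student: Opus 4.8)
The plan is to show that \textsc{\genfc} is a capped exhaustive enumeration of the preorders satisfying $\mathcal{C}$, layered on top of \textsc{CheckConstraints}. Write $R$ for the set of all preorders on $Q$ that satisfy $\mathcal{C}$. I would first dispatch the soundness half: by Lemma~\ref{lem:prefGenCorr1}, every preorder that \textsc{\genfc} emits is the output of \textsc{CheckConstraints} on a constraint set that refines $\mathcal{C}$, and hence lies in $R$; so $\mathcal{P} \subseteq R$, and since the routine stops adding preorders once it holds $M$ of them, $|\mathcal{P}| \le M$.

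For completeness I would model the recursion as a finite branching tree over the space of preorders. Each node carries a partial specification (a set of forced and forbidden relations refining $\mathcal{C}$); the node invokes \textsc{CheckConstraints} to complete the specification to the minimal consistent preorder or to detect $\mathsf{Fail}$, and then branches by fixing the status of some pair $(u,v) \in Q \times Q$ whose membership in the preorder is not yet determined, recursing on the two resulting refinements. The crux is a coverage invariant, which I would prove by induction on the number of undetermined pairs: the preorders emitted in the subtree rooted at a node are exactly the members of $R$ consistent with that node's specification, each emitted once. The base case has zero undetermined pairs, where \textsc{CheckConstraints} returns the unique consistent preorder (or $\mathsf{Fail}$, contributing nothing); the inductive step uses that every preorder either contains the chosen pair $(u,v)$ or not, so the two children partition the parent's solution set into disjoint, jointly exhaustive classes. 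Finiteness and termination follow because $Q \times Q$ is finite and the count of undetermined pairs strictly decreases along every branch.

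Combining the two halves yields the dichotomy. Run \textsc{\genfc} to its conclusion. If the accumulated set ever reaches size $M$, the routine returns immediately and $|\mathcal{P}| = M$, giving the first disjunct. Otherwise the cap is never triggered, so the search visits every leaf of the tree; applying the coverage invariant at the root gives $\mathcal{P} = R$, so $\mathcal{P}$ contains all preorders satisfying $\mathcal{C}$, the second disjunct.

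The main obstacle is making the coverage invariant airtight against the preorder axioms and the constraint language. Because \textsc{CheckConstraints} takes reflexive and transitive closure, fixing the status of one pair can force the status of others, so I must verify that after this closure the two branches remain both mutually exclusive and jointly exhaustive, that $\mathsf{Fail}$ branches genuinely correspond to empty solution sets, and that the branching variable (membership in the $\succeq$ relation versus the strict-preference constraints that make up $\mathcal{C}$) is reconciled so that no preorder in $R$ --- including ones differing only in their indifferences --- is skipped or double-counted. Establishing this exhaustiveness-without-duplication is precisely what guarantees both that $|\mathcal{P}|$ is counted exactly and that early termination can be caused only by the budget $M$.
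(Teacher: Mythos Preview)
Your proposal rests on a mischaracterization of \textsc{\genfc}. The procedure in Algorithm~\ref{alg:MainBodyPrefGeneration} is not a recursive branch-and-bound that splits on whether a pair $(u,v)$ is forced in or forced out; it is an iterative upward search. It seeds $\mathcal{P}$ with the single minimal preorder returned by \textsc{CheckConstraints}$(Q,\mathcal{C},\emptyset)$ and then, in each pass of the \textbf{while} loop, tries to \emph{add} one edge $e$ to some $E\in\mathcal{P}$ and close under \textsc{CheckConstraints}. There is no step that ever forbids an edge, so there are no ``two resulting refinements,'' no partition of the solution set, and no tree whose leaves correspond bijectively to preorders in $R$. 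Your coverage invariant, your disjointness claim, and your induction on ``undetermined pairs'' therefore do not attach to the algorithm as written.

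The argument the paper uses (and the one you need) exploits only the add-an-edge structure. Soundness is as you say. For completeness, suppose the loop exits with \npvar\ false but some $E\in R$ is missing from $\mathcal{P}$. By Lemma~\ref{lem:prefGenCorr1} the minimal preorder $E'$ lies in $\mathcal{P}$ and satisfies $E'\subseteq E$. Take $E''\in\mathcal{P}$ maximal with $E''\subseteq E$ (such an $E''$ exists since $E'$ qualifies), pick any edge $e^*\in E\setminus E''$, and apply Lemma~\ref{lem:prefGenCorr1} to \textsc{CheckConstraints}$(Q,\mathcal{C},E''\cup\{e^*\})$: the call succeeds (because $E$ is a witness) and returns some $B'$ with $E''\subsetneq B'\subseteq E$. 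By maximality of $E''$, $B'\notin\mathcal{P}$, so the for-loop would have set \npvar\ to true --- a contradiction. This is the step your proposal does not supply, and it is what replaces your tree-partition argument.
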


\textit{Proof sketch\footref{fn:extended_version}: }
We show $|\calP|$ to increase when it does not contain all preorders satisfying $\calC$.
If there exists a preorder $E'$ outside $\calP$ that satisfies  $\calC$, then we can find $E$ in $ \calP$ and an edge $e$ such that 
$E \cup \{e\} \subset E'$, but $ E \cup \{e\} \not\subset B$ for any $B$ in $ \calP$. Thus, $\textsc{CheckConstraints}(Q, \mathcal{C}, E \cup \{e\})$ generates a new preorder, by Lemma~\ref{lem:prefGenCorr1}, and $\calP$ keeps growing.
$\square$

\begin{algorithm}
\caption{Generating preorders that satisfy constraints.}
\label{alg:MainBodyPrefGeneration}
\begin{algorithmic}[1]
\STATE \textsc{\genfc}($Q, \mathcal{C}, K$)
\STATE $\mathcal{E} \gets \textsc{CheckConstraints}(Q,\mathcal{C},\emptyset)$
\STATE \textbf{if} $\calE = \mathsf{Fail}$,  \textbf{return} $\emptyset$
\STATE \npvar $ \ \gets $ \textbf{true}, $\calP \gets \{\calE\}$
\WHILE {$|\mathcal{P}| < K$, \npvar \ is \textbf{true}}
    \STATE \npvar \ $\gets$ \textbf{false}
    \FOR {$e \in Q^2, E \in \mathcal{P}$}
        \STATE $\mathcal{E}' \gets \textsc{CheckConstraints}(Q,\mathcal{C},E \cup e)$
        \IF{$\calE' \neq \mathsf{Fail}$ and $\mathcal{E}'$ not in $\mathcal{P}$ }
            \STATE $\mathcal{P} \gets \mathcal{P} \cup \{\mathcal{E}'\}$,  \npvar \ $\gets$ \textbf{true}
        \ENDIF
    \ENDFOR
\ENDWHILE
\STATE \textbf{return} first $K$ elements of $\calP$
\vspace{0.4\baselineskip}
\STATE \textsc{CheckConstraints}($Q, \mathcal{C}, \mathcal{E}_{init}$)
    \STATE $\mathcal{E} \gets \mathcal{E}_{init} \cup \{(q,q) | q \in Q\} \cup \{(u,v) | (u,v,\succ) \in \mathcal{C}\}$ \label{ln:pa1}
    \WHILE{\textbf{true}}
        \STATE $\mathcal{E} \gets \textsc{TransitiveClosure}(\mathcal{E})$, 
        $\text{Flag} \gets \mathsf{Valid}$ \label{ln:pa2}
        \STATE \textbf{if} {$(v,u) \in \mathcal{E}$ for some $(u,v,\succ) \in \mathcal{C}$} \textbf{return} $\mathsf{Fail}$ \label{ln:contCheck}
        \IF {$(u,v) \in \mathcal{E} \wedge (v,u) \notin \calE$ for some $(u,v,\not\succ) \in \mathcal{C}$}
            \STATE $\mathcal{E} \gets \mathcal{E} \cup (v,u)$,  $\text{Flag} \gets \mathsf{Invalid}$ \label{ln:pa3}
        \ENDIF
        \STATE \textbf{if} {$\text{Flag} = \mathsf{Valid}$} \textbf{return} $\mathcal{E}$ \label{ln:valCheck}
    \ENDWHILE    
\end{algorithmic}
\end{algorithm}

Algorithm~\ref{alg:genKImp} implements \textsc{\genk} by using
Algorithm~\ref{alg:MainBodyPrefGeneration} 
to generate preorders $E'$ that share upper closures with a reference preorder $\mpre$. 
If $\upc{\qc}{E} = \upc{q}{E'}$, then preorders $E$ and $E'$ induce the same B\"uchi objective $\bgobj{1}{q}$ for semi-automaton state $q$, and thus the algorithm satisfies conditions 2 and 3 in Assumption~\ref{ass:GenKAss}.
As Algorithm~\ref{alg:genKImp} only depends on a preorder through the upper closures, and the returned preorders share upper closures, it will also satisfy condition 4 in Assumption~\ref{ass:GenKAss}.
Combining Algorithms~\ref{alg:mergedAlgo} and \ref{alg:genKImp} gives a privacy-preserving response function, and we term
 this method of ensuring privacy as \emph{upper-closure-privacy}.
 While \emph{upper-closure-privacy} ensures privacy, it does not use the obfuscation that the game's structure provides.

\begin{algorithm}
\caption{An implementation of \textsc{\genk}.}
\label{alg:genKImp}
\begin{algorithmic}[1]
\STATE $\textsc{\genk}(E, (\qc_1,\ldots,\qc_k), \frU_{xy}, K, f)$
\STATE $\calC \gets \bigcup_{j=1}^k \{(v,\qc_j,\succ) | v \in \{\qc_j\}^\uparrow_{\mpre}\}$ 
\STATE $\calC \gets \calC \cup \bigcup_{j=1}^k \{(v,\qc_j,\not\succ) | v \not\in \{\qc_j\}^\uparrow_{\mpre}\}$ 
\STATE $\mathcal{C}^+ \gets \calC \cup \{(x,y,\succ)\}$ \ $\mathcal{C}^- \gets \calC \cup \{(x,y,\not\succ)\}$
\STATE $\mathcal{P}^+ \gets $ \textsc{\genfc}($Q,\mathcal{C}^+,K$)
\STATE $\mathcal{P}^- \gets $ \textsc{\genfc}($Q,\mathcal{C}^-,K$)
\end{algorithmic}
\end{algorithm}

Given losing regions $(\lsrP{1}{\qc_j}{E})_{j=1}^k$, 
\ie, complements of winning regions, 
we can search for preorders $E'$ such that $\lsrP{1}{\qc_j}{E'} = \lsrP{1}{\qc_j}{E}$ for all $j$, and we term this method \emph{winning-region-privacy}.
As this method is only a function of losing regions and returns preorders that share losing regions, it will satisfy condition 4 in Assumption~\ref{ass:GenKAss}.
We implement \emph{winning-region-privacy} for a small game in the next section.

\section{Experimental results}

We evaluate privacy with a shared-delivery scenario involving a drone and a truck. The drone can not traverse areas with dense tree cover, while the truck struggles on poor-quality roads, requiring the vehicles to exchange the package to complete delivery. 
The vehicles have different preferences over the final location of the package, so each vehicle must carefully consider when to exchange the package. 
We synthesize Nash equilibria in this scenario and explore whether we can synthesize equilibria while ensuring privacy.

We model this scenario as a game on a graph. A network $\mathcal{G} = (\mathcal{T}, \mathcal{F})$ defines the game graph, and for each player $i \in \{1, 2\}$, a set $\mathcal{F}_i \subseteq \mathcal{F}$ defines where their vehicle can transport the package. 
The game state is a tuple $(t, \iota)$, where $t \in \mathcal{T}$ is the package's location, and $\iota$ indicates the vehicle holding it.
We set $AP = \mathcal{T}$ where $L(t,\iota)$ is the set $\{t\}$.

The players have preferences on the final location of the package, which we encode via a preference automaton tracking this location.
For a set $\mathcal{D}\subset \calT$, we define the semi-automaton
$
    \langle \calD \cup \{\top\}, 2^\mathcal{T}, \delta, \bot\rangle,
$
where, for all $q$,
$\delta(q,\{t\}) = t$ for $t \in \mathcal{D}$ and $\delta(q,\{t\}) = \top$ for $t$ not in $\mathcal{D}$.

\subsection{Synthesizing Nash equilibria without violating privacy}

We first give an example of a game where we can compute non-trivial Nash equilibria without violating privacy using upper-closure-privacy (UC-privacy).
Figure~\ref{fig:envExample} shows an environment with $\mathcal{D} = \{a,b,c,d\}$. 
The players have the following preferences, up to transitive closure.
\begin{align*}
    \begin{matrix}
         a \ \succ_{E_1} \  d, 
        & 
         d \ \succ_{E_1} \  b,
        &
        d \ \succ_{E_1} \  c,
        &
        a,b,c,d \ \succ_{E_1} \top.
        \\
        b \ \succ_{E_2} \  d,
        &
        d \ \succ_{E_2} \  a,
        &
         d \ \succ_{E_2} \  c,
         &
         a,b,c,d \ \succ_{E_2} \top.
        \normalcolor
    \end{matrix}
\end{align*}
The secret is given by $\frU_{ac} = \{E| a \succ_E c\}$.

There exists a query sequence that allows the players to synthesize the Nash equilibrium (NE) shown in Figure~\ref{fig:prefExplanations} without revealing the secret.
Suppose players exchange winning regions corresponding to semi-automaton states in the order $(\top, b, d )$.
A NE, shown in Figure~\ref{fig:prefExplanations}, exists terminating with the semi-automaton state at $d$, and these queries find this equilibrium.
The exchanged information is not enough to recover whether $\mpre_1 \in \frU$ or $\mpre_1 \in \frU^c$.
Figure~\ref{fig:prefExplanations} shows two preference relations for player $1$, one in $\frU$ and one in $\frU^c$,  that produce the exchanged messages.

\begin{figure}
    \centering
    \begin{minipage}{0.19\textwidth}
        \includegraphics[width=\linewidth]{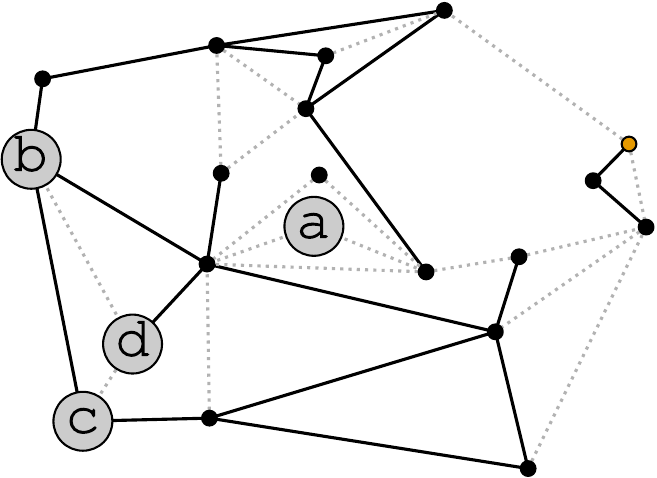}
        \label{fig:envExamplePlayer1}
    \end{minipage}
    \hfill
    \begin{minipage}{0.19\textwidth}
        \includegraphics[width=\linewidth]{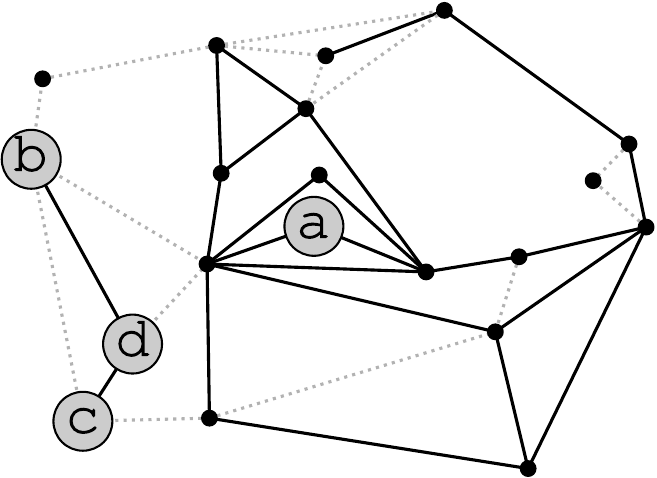}
        \label{fig:envExamplePlayer2}
    \end{minipage}
    \caption{An example environment. Gray nodes are in $\calD$, and the initial node is orange. 
    The left graph is $(\mathcal{T},\mathcal{F}_1)$, and the right graph is $(\mathcal{T},\mathcal{F}_2)$
    }
    \label{fig:envExample}
\end{figure}

\begin{figure}[ht]
\centering
\begin{minipage}{0.19\textwidth}
    \centering
    \includegraphics[width=\textwidth]{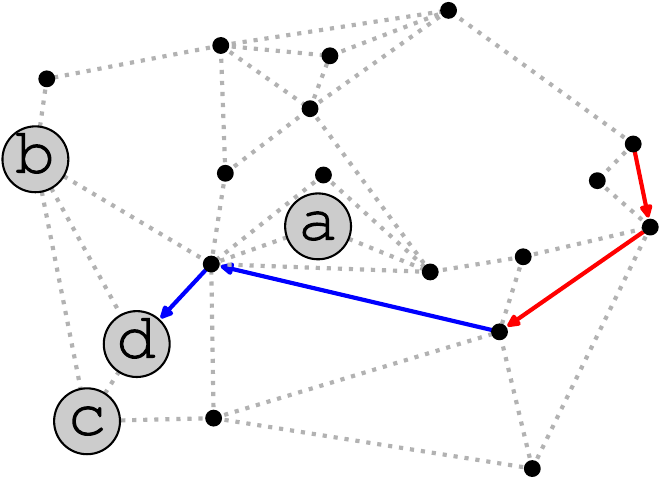}
    (a)
    \label{fig:left}
\end{minipage}%
\hspace{30pt}
\begin{minipage}{0.12\textwidth}
    \centering
    \includegraphics[width=0.7\textwidth]{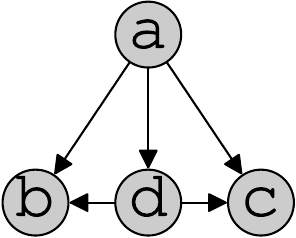}\ (b)
    \vspace{5pt}
    \label{fig:top}
    \includegraphics[width=0.7\textwidth]{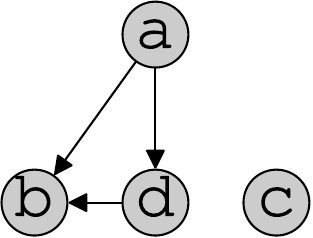}\ (c)
    \label{fig:bottom}
\end{minipage}
    \caption{
    (a) Nominal paths of NE strategies in the game in Figure~\ref{fig:envExample}. A blue and red edge indicates that players $1$ and $2$ are in control respectively.
    (b,c) Preorders $\mpre_1 \in \frU$ and  $\tilde{\mpre}_1 \in \frU^c$  for player 1 that produce the same query responses. These preorders satisfy $\{v\}^\uparrow_{\mpre_1} = \{v\}^\uparrow_{\tilde{\mpre}_1}$ for $v$ in $\{\top,b,d\}$. }
    \label{fig:prefExplanations}
\end{figure}

\subsection{Permissiveness of privacy checks}
\label{sec:privPerm}

The above game only gives an example of a protocol run that results in \ac{ne} synthesis.
We now measure whether privacy inhibits \ac{ne} synthesis by estimating the number of queries before player $1$ stops the protocol, for randomly sampled preferences and query sequences.

\begin{table}[h]
\centering
\caption{Probability that player $1$ replies $\mathsf{STOP}$ on the $k^{\text{th}}$ query.}
\begin{tabular}{|c|c|c|c|c|c|c|}
\hline
k& $1$ & $2$ & $3$ & $4-6$ & $7-9$ &$\geq 9$ \\
\hline
UC-privacy & 0.079 & 0.084 & 0.064 & 0.137 & 0.636 & 0.0 \\
\hline
WR-privacy & 0.0 & 0.064 & 0.113 & 0.428 & 0.392 & 0.003 \\
\hline
\end{tabular}
\label{tab:successRate}
\end{table}

We consider a game where $(\mathcal{T},\mathcal{F})$ is a $3 \times 3$ grid and $(\mathcal{F}_1,\mathcal{F}_2)$ are such that players $1$ and $2$  can move along rows and columns respectively.
The players have preferences on nodes in $\calT$, \ie, $\calD = \calT$.
We set $\frU_{xy}$ such that $x = (0,0)$, and $y = (0,1)$, \ie, player $1$ hides their preference on two nodes in the first row.
Given a reference losing region $\tilde{U}$, we derive necessary and sufficient conditions on upper closures $\upc{\qc}{E'}$ of a preorder $E'$ so that $\lsrP{1}{\qc}{E'} = \tilde{U}$, and we use these conditions to implement 
{winning-region-privacy} (WR-privacy).\footref{fn:extended_version}
Given %
losing regions $(\lsrP{1}{\qc_j}{E})_{j=1}^k$,
we compute upper closure constraints $(Z_j)_{j=1}^k$ that satisfy these conditions. 
We then Algorithm~\ref{alg:MainBodyPrefGeneration} to find preorders $E'$ such that $\upc{\qc_j}{E'} = Z_j$ for all $j$ in $1,\ldots,k$.
This method requires searching over the set of constraint sequences, and we limit the search to a fixed number of constraint sequences.

We sample $1000$ preorders by sampling directed trees and taking transitive closures.
For each preorder, we sample a query sequence by shuffling $Q$, and we place $(x,y)$ at the end of the query sequence to avoid trivial protocol terminations.
Reordering query-sequences as above is reasonable if the mediator knows $x$ and $y$.
Note that this assumption does not imply that the mediator knows whether or not $E_1 \in \frU_{xy}$. 

Table~\ref{tab:successRate} confirms that {upper-closure-privacy} allows a mediator to query a player for multiple winning regions without violating privacy.
Indeed, over half of the sampled preferences allow up to six queries. 
However, we notice that, on some preference-query pairs, the protocol stops instantly.

As winning-region-privacy involves a search over a larger space of preorders to certify privacy,  it reduces the probability that a single query leads to a $\mathsf{STOP}$ reply.
However, winning-region-privacy is limited for extended query sequences due to the need to search over sets of constraints.

\section{Conclusion}

We developed an algorithm for Nash equilibrium synthesis in games on graphs where players have partial preferences over temporal goals.
We studied a setting where players' preferences are private, but they communicate to synthesize equilibria. 
We proved the algorithm's correctness and proved that the algorithm preserved the privacy of the players' preferences. 
We experimentally validated that we could synthesize non-trivial Nash equilibria while maintaining privacy.

\addtolength{\textheight}{-12cm}   %

\bibliography{GogSansPref}

\addtolength{\textheight}{12cm}
\clearpage

\section*{APPENDIX}

We first elaborate on the proof of the termination property of proper strategies.

\begin{proof_alt}[Proof of Propostion~\ref{prop:ProperPolicyTerm}]
    Suppose that $ \rho = \mathsf{Path}_{\gterm{G}}((s_0,0),{\pi_1},{\pi_2})$ never visits $S\times\{1\}$. It can not hold that there exists $N_1$ and $N_2$ such that $\rho[n] \in S_1\times\{0\}$ for all $n \geq N_1$ and $\rho[n] \in S_2\times\{0\}$ for all $n \geq N_2$. Hence, for one of $i\in \{1,2\}$, there must exist $n$ such that $\pi_i(\rho[0]\ldots\rho[n]) = \tau$. However, the definition of the terminating game then implies that $\rho[n+1] \in S\times\{1\}$, a contradiction.
\end{proof_alt}

\section{Nash synthesis proofs}

We give here proof of the correctness of Algorithm~\ref{alg:MainAlg}. We first show that when Algorithm~\ref{alg:MainAlg} returns a strategy pair, said strategy pair is a Nash equilibrium. We then prove that when Algorithm~\ref{alg:MainAlg} returns \textbf{false}, a Nash equilibrium does not exist.
Throughout this section, the preference automaton preorders $E_1$ and $E_2$ are fixed and implicit.

We first describe extra notation.
For a set $A$, $A^\omega$ is the set of infinite sequences on $A$.
The set of paths in the game $\gterm{H}$ is $\PathSet{\gterm{H}}$.
When players $1$ and $2$ follow strategies $\pi_1$ and $\pi_2$ respectively from an initial history $z_0\ldots z_k$ in the product game $\gterm{H}$, we denote the resulting path by
$\PathsInd{\gterm{H}}{z_0\ldots z_k}{\pi_1}{\pi_2}$.
The set of strategies for player $i$ in the product game $\gterm{H}$
is $\StratSpace{i}{\gterm{H}}$, for $i \in \{1,2\}$.
For two finite sequences $\rho$ and $\rho'$, the maximal common prefix is the longest prefix that is common to both paths.

\subsubsection{Sufficiency of Algorithm~\ref{alg:MainAlg}}
We first prove that we can construct proper strategies from strategies that Algorithm~\ref{alg:MainAlg} returns. 
Given a memoryless winning strategy $\sigma_{-2}$ for player $1$ in player $2$'s B\"uchi game, we show that we may construct a new winning strategy, 
\begin{equation}
    \hat{\sigma}_{-2}(v,0) = M_2({\sigma}_{-2})(v,0) = 
    \begin{cases}
       \tau & \text{if } v \in V_2^-(q), \\ 
       \sigma_{-2}(v,0) & \text{otherwise}
    \end{cases}
\end{equation}
which is also winning. We additionally prove that this strategy ensures the game visits no vertex in $V_1\times\{0\}$ more than once.

\begin{lemma}[Restatement of Lemma~\ref{lem:noRepMain}]
\label{lem:noRep}
     If $\sigma_{-2}$ is a punishment strategy in $\mathsf{Buchi}(\bgobj{2}{q})$, 
    then $\hat{\sigma}_{-2} = M_2(\sigma_{-2})$ is also a punishment strategy.
    Also, for any $u\in \lsr{2}{q}$ and $\pi_2 \in \Pi_2^{\gterm{H}}$, $\rho = \PathsInd{\gterm{H}}{u}{\hat{\sigma}_{-2}}{\pi_2}$ visits each vertex in $V_1\times\{0\}$ at most once.
\end{lemma}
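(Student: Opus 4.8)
The lemma has two parts, and I would prove them in sequence because the second relies on the memorylessness of $\hat\sigma_{-2}$ established while reasoning about the first. The overall strategy is to compare the behavior of $\hat\sigma_{-2}$ against the original winning strategy $\sigma_{-2}$ on a play-by-play basis, exploiting the single place where they differ: a node $v \in V_2^-(q) \cap V_1$, where $\hat\sigma_{-2}$ plays $\tau$ but $\sigma_{-2}$ may not.

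\textbf{Part 1: $\hat\sigma_{-2}$ is a punishment strategy.} Recall a punishment strategy for player $1$ in $\mathsf{Buchi}(\bgobj{2}{q})$ is a winning strategy, so I must show $\hat\sigma_{-2}$ wins the Büchi game from any node in $\lsr{2}{q}$. Fix an arbitrary play $\rho$ consistent with $\hat\sigma_{-2}$ starting in $\lsr{2}{q}$. I would split on whether $\rho$ ever reaches a node in $V_2^-(q) \cap V_1 \times \{0\}$ where the two strategies diverge. If it never does, then $\rho$ coincides with a play of $\sigma_{-2}$, which is winning by hypothesis, so $\rho$ satisfies the Büchi condition. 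If it does reach such a node $(v,0)$, then $\hat\sigma_{-2}$ plays $\tau$ and the game terminates at $(v,1)$ with $v \in V_2^-(q)$; I must check this terminating play still witnesses the punishment property, namely that the outcome is one player $2$ does not strictly prefer to $q$, which is immediate since $v \in V_2^-(q) = V \setminus V_2^+(q)$ means $\Lift(v) \not\succ_{\mPreI{2}} q$. The key observation to record is that whenever the strategies diverge, the game halts in $V_2^-(q)$, so player $1$ ``wins'' in the punishment sense either way.

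\textbf{Part 2: no vertex of $V_1 \times \{0\}$ repeats.} Here I would argue by contradiction, leveraging the memorylessness of $\hat\sigma_{-2}$ as flagged in the proof sketch. Suppose for some $\pi_2 \in \Pi_2^{\gterm H}$ the path $\rho = \PathsInd{\gterm{H}}{u}{\hat{\sigma}_{-2}}{\pi_2}$ visits some $(v,0) \in V_1 \times \{0\}$ twice. Because $\hat\sigma_{-2}$ is memoryless, the segment of the path between the two visits to $(v,0)$ is a cycle whose player-$2$ choices can be repeated indefinitely: I can construct a (possibly non-memoryless) deviating strategy $\pi_2'$ that replays this cycle forever, yielding a play $\rho' = \PathsInd{\gterm{H}}{u}{\hat{\sigma}_{-2}}{\pi_2'}$ that visits $(v,0)$ infinitely often and never terminates. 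Now I invoke Part 1: since $\hat\sigma_{-2}$ is winning for player $1$ in $\mathsf{Buchi}(\bgobj{2}{q})$, the play $\rho'$ must satisfy the Büchi condition, so some node of $\bgobj{2}{q} = V_2^+(q) \times \{0,1\}$ occurs infinitely often along $\rho'$. Since $\rho'$ is a single cycle repeated, the infinitely-occurring nodes are exactly those on the cycle, and $(v,0)$ is among them with $v \in V_1$. The contradiction is then that $v$ must lie in $V_2^+(q)$ (to meet the Büchi condition along a non-terminating play), yet by definition $\hat\sigma_{-2}$ would have played $\tau$ at $(v,0)$ if $v \in V_2^-(q)$, forcing termination; since the play is non-terminating, $v \notin V_2^-(q)$, i.e.\ $v \in V_2^+(q) \cap V_1$, and one then derives the contradiction that a node controlled by player $1$ in $V_2^+(q)$ cannot be revisited infinitely under a winning-and-terminating-where-possible strategy without the game having already halted at it.

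\textbf{Main obstacle.} The delicate point is Part 2: I must be careful that constructing $\pi_2'$ from the repeated cycle genuinely produces a valid, non-terminating play consistent with $\hat\sigma_{-2}$, and that the Büchi winning property of $\hat\sigma_{-2}$ (which I only established against arbitrary opponents in Part 1) correctly forces a node of $V_2^+(q)$ to recur. Reconciling ``$\hat\sigma_{-2}$ terminates the game at any $v \in V_2^-(q) \cap V_1$'' with ``the cycle is non-terminating and revisits $(v,0)$'' is exactly where the contradiction lives, and stating precisely why $v$ cannot simultaneously be a winning (Büchi) node and a $\tau$-triggering node requires pinning down that $V_2^+(q)$ and $V_2^-(q)$ partition $V$. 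The rest is routine bookkeeping about paths in a deterministic product game.
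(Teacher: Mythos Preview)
Your overall architecture matches the paper's: Part~1 splits on whether the play ever reaches $(V_2^-(q)\cap V_1)\times\{0\}$, and Part~2 argues by contradiction using memorylessness to pump a repeated visit into an infinite cycle. Part~1 is fine, modulo the minor point that you should phrase the conclusion as ``player~1 wins the co-B\"uchi game'' rather than the informal ``punishment property''---terminating in $(v,1)$ with $v\in V_2^-(q)$ means the only state visited infinitely often lies outside $\bgobj{2}{q}$, so player~2 fails her B\"uchi objective.

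Part~2, however, contains a genuine error: you have the winning condition backwards. The objective $\bgobj{2}{q}=V_2^+(q)\times\{0,1\}$ is \emph{player~2's} B\"uchi goal; the punishment strategy $\hat\sigma_{-2}$ is winning for player~1 precisely when it \emph{prevents} any node of $\bgobj{2}{q}$ from recurring infinitely often. So your sentence ``since $\hat\sigma_{-2}$ is winning for player~1 \ldots\ some node of $\bgobj{2}{q}$ occurs infinitely often'' is inverted, and this is why your contradiction never closes: you land at $v\in V_2^+(q)\cap V_1$, correctly observe that $\hat\sigma_{-2}$ does \emph{not} play $\tau$ there, and are left without an inconsistency.

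The paper's argument (and the fix for yours) runs the implication the right way. Because $\hat\sigma_{-2}$ is winning for player~1 from $u\in\lsr{2}{q}$, the cycling play $\rho'$ has no node of $V_2^+(q)\times\{0,1\}$ recurring infinitely often; since the infinitely-recurring nodes are exactly those on the cycle, every cycle node---in particular $(v,0)$---lies in $V_2^-(q)\times\{0\}$. Hence $v\in V_1\cap V_2^-(q)$, so $\hat\sigma_{-2}(v,0)=\tau$, and the play terminates at $(v,0)$, contradicting that the cycle revisits $(v,0)$. That is the whole contradiction; no further reasoning about ``winning-and-terminating-where-possible'' is needed.
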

\begin{proof}
    We first prove that $\hat{\sigma}_{-2}$ is still winning from all nodes in $U_2(q)$. Indeed, let $\rho\in \PathSet{\gterm{H}}$ be a path consistent with $\hat{\sigma}_{-2}$, which starts in $U_2(q)$. Consider two cases.

    \textit{Case 1: } Assume  $\rho$ does not visit $(V^-_2(q)\cap V_1) \times\{0\}$. On this path $\hat{\sigma}_{-2}(h) = \sigma_{-2}(h)$ where $h$ is a prefix of $\rho$. As ${\sigma}_{-2}$ is winning from $u$, $\rho$ satisfies player $1$'s objective in player $2$'s B\"uchi game.

    \textit{Case 2: } Assume $\rho$ does visit $(V^-_2(q)\cap V_1) \times\{ 0\}$ at some time-step $n$. Then
    \begin{equation}
        \rho[n+1] = \Delta(\rho[n], \hat{\sigma}_{-2}(\rho[n])) = \Delta(\rho[n], \tau).
    \end{equation}
    As $\rho[n] \in V^-_2(q) \times\{ 0\}$, we have $\rho[n+1] \in V^-_2(q) \times\{ 1\}$. Hence, this path again satisfies player $1$'s objective in player $2$'s B\"uchi game.

    We next prove that 
    $\PathsInd{\gterm{H}}{u}{\hat{\sigma}_{-2}}{\pi_2}$
    visits each vertex in $V_1\times\{0\}$ at most once for any $\pi_2 \in \Pi_2^{\gterm{H}}$. Suppose by contradiction that there exists $\pi_2'$ such that $\PathsInd{\gterm{H}}{u}{\hat{\sigma}_{-2}}{\pi_2'}$ repeats some vertex $(v,0)$ in $V_1 \times \{0\}$. That is,
    \begin{equation}
        \PathsInd{\gterm{H}}{u}{\hat{\sigma}_{-2}}{\pi_2'} = u \cdot h \cdot (v,0) \cdot c \cdot (v,0) \cdot \rho,
    \end{equation}
    where $h\in (V\times\{0\})^*$, $c\in (V\times\{0\})^*$, $v\in V_1$ and $\rho \in (V\times\{0,1\})^\omega$. 
    Because $\hat{\sigma}_{-2}$ is memoryless, there exists a strategy $\pi_2''\in \Pi_2^{\gterm{H}}$ such that 
    \begin{equation}
        \PathsInd{\gterm{H}}{u}{\hat{\sigma}_{-2}}{\pi_2''}
         = u\cdot h\cdot(v,0)\cdot c\cdot (v,0)\cdot c\cdot (v,0),\ldots.
    \end{equation}

    As $\hat{\sigma}_{-2}$ is winning from $u$ for player $1$, it then must hold that all nodes in $(v,0)\cdot c$ lie in $V_2^-(q)$. 
    However, it then follows that $(v,0)$ lies in $V_1 \cap V_2^-(q)$ and so, by definition of $\hat{\sigma}_{-2}$, player $1$ will play $\tau$, which contradicts the fact that the path does not terminate.
    $\square$
\end{proof}

In the sequel, we assume, without loss of generality, that $\sigma_{-2}$ is of the form $M_2(\sigma_{-2})$.

We now give a formal definition of the strategy construction procedure. We leverage Lemma~\ref{lem:noRep} to ensure that we can make the final strategy proper without introducing opportunities for players to deviate to outcomes they strictly prefer.
For player $1$, the strategy construction procedure is given by
\begin{multline}
    C_1(\mathbf{u}, \sigma_{-2})(h) 
    = \\
    \begin{cases}
        a & \text{if } u_0 \ldots u_k = h,
        T(a,u_k) = u_{k+1}, \\
        \sigma_{-2}(v_k) & \text{if  $h = u_0 \ldots u_l v_{l+1} \ldots v_k$, where} \\
         & \text{$u_0,\ldots,u_l$ is a maximal common}\\  
         & \text{prefix of $\mathbf{u}$ and $h$, and} \\ 
         & \text{$v_{l+1},\ldots,v_k$ repeats no node in $V_1\times\{0\}$},\\
         \tau & \text{otherwise. }
    \end{cases}
\end{multline}
Informally, the first clause of this construction defines the nominal behavior. The second clause ensures player $1$ punishes player $2$ if player $2$ deviates. The final clause ensures that the strategy terminates when any node in $V_1 \times \{0\}$ repeats after the initial deviation.

We now prove that this strategy is proper.
\begin{proposition}[Restatement of Proposition~\ref{prop:ProperCombination}]
    \label{prop:ProperCombinationApp}
    The strategy $C_1(\mathbf{u}, \sigma_{-2})$ is proper.
\end{proposition}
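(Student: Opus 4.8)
The plan is to verify that every path $\rho$ consistent with $C_1(\mathbf{u},\sigma_{-2})$ satisfies one of the two defining conditions of a proper strategy: either player~$1$ eventually plays $\tau$ along $\rho$, or the path eventually stays out of player~$1$'s states $S_1 \times \{0,1\}$. I would split the analysis according to the three clauses of the construction $C_1$, keyed to how much of the nominal path $\mathbf{u}$ the history $h$ agrees with.

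First I would dispose of the case where $\rho$ never deviates from $\mathbf{u}$. Since $\mathbf{u} = u_0\ldots u_k$ is a finite nominal path ending in a terminating state $(w,1)$ with $\Lift(w)=q$, the first clause of $C_1$ drives the play along $\mathbf{u}$, and player~$1$'s own moves follow the edges of $\mathbf{u}$ up to the point where $\tau$ is taken to reach $(w,1)$. Hence $\rho$ terminates and the $\tau$-condition holds. The substantive case is when $\rho$ deviates from $\mathbf{u}$: let $u_0\ldots u_l$ be the maximal common prefix, after which the play follows the second clause, i.e.\ player~$1$ switches to the punishment strategy $\sigma_{-2} = M_2(\sigma_{-2})$. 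Here I would invoke Lemma~\ref{lem:noRep}: from the deviation point onward, the path produced against \emph{any} player-$2$ strategy visits each vertex of $V_1\times\{0\}$ at most once. Consequently only finitely many visits to $V_1\times\{0\}$ can occur after deviation.

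The remaining step is to turn this finiteness into the proper condition. After the deviation, suppose player~$1$ never plays $\tau$; then by the second clause the path keeps following $\sigma_{-2}$, and the "repeats no node in $V_1\times\{0\}$" side-condition stays satisfied, so the third clause is never triggered. But by Lemma~\ref{lem:noRep} the post-deviation segment meets $V_1\times\{0\}$ at most $|V_1|$ times, so there is some $N$ beyond which the path never again lies in $V_1\times\{0\}$. Since states in $V\times\{1\}$ are absorbing and the only player-$1$ states in the game are $V_1\times\{0\}$ together with their terminated copies, after $N$ the path stays in $(V\setminus V_1)$-type states, which is exactly the second bullet of the definition of proper. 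If instead the side-condition of the second clause fails at some point---i.e.\ a node in $V_1\times\{0\}$ would repeat---then the third clause fires and player~$1$ plays $\tau$, again terminating the game and satisfying the first bullet.

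The main obstacle I anticipate is the bookkeeping of the third clause: I must check that the "maximal common prefix plus non-repeating suffix" form of $h$ is genuinely exhaustive, so that the three clauses partition all reachable histories without gaps or overlaps, and that the switch to $\tau$ in the third clause is consistent with the memoryless punishment behavior already committed to. Once that case analysis is airtight, the result follows directly from Lemma~\ref{lem:noRep} together with the fact that $V_1$ is finite, so no delicate estimates are needed beyond the no-repetition property.
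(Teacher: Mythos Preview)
Your proposal is correct and follows essentially the same approach as the paper: split into the case where $\rho$ follows $\mathbf{u}$ (trivial termination) and the case where $\rho$ deviates, then argue that after deviation the path visits $V_1\times\{0\}$ only finitely often, so it eventually stays in player~$2$'s states. The paper's appendix proof derives this finiteness directly from the third clause of $C_1$ (a repeated $V_1\times\{0\}$ node would trigger $\tau$, contradicting non-termination) rather than via Lemma~\ref{lem:noRep}, but your route through Lemma~\ref{lem:noRep} is equally valid and in fact matches the main-body proof sketch.
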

\begin{proof}
    If a path $\rho\in \PathSet{\gterm{H}}$ terminates, then it trivially satisfies one of the conditions in the proper policy definition.
    Instead, let $\rho\in \PathSet{\gterm{H}}$ be a path consistent with $C_1(\mathbf{u}, \sigma_{-2})$ that never visits $V\times\{1\}$. This path must eventually deviate from $\mathbf{u}$. 
    If $\rho$ visits $V_1\times\{0\}$ infinitely often, then some node in $V_1\times \{0\}$ must repeat after the initial deviation. However, by definition of $C_1$ at this node, player $1$ will take action $\tau$, contradicting the fact that the path visits $V_1\times\{0\}$ infinitely often.
    Hence there exists $N$ such that $n \geq N \implies \rho[n] \in V_2\times\{0\}$, and the path satisfies one of the proper strategy conditions.
    $\square$
\end{proof}

We remark that Lemma~\ref{lem:noRep} and Proposition~\ref{prop:ProperCombination} have analogous results when we reverse the players.

Finally, we show that the strategies form a Nash equilibrium strategy pair.
\begin{theorem}
    If the algorithm terminates for final state $w\in V$, the proper strategy pair $(C_1(\mathbf{u}, \sigma_{-2}),C_2(\mathbf{u}, \sigma_{-1}))$ forms a Nash equilibrium. 
\end{theorem}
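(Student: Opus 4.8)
The plan is to check the two clauses of the Nash equilibrium definition directly, reducing every outcome comparison to its terminal semi-automaton state through the lifted preorders $\ppref{i}$. First I would record the nominal outcome: since $C_1(\mathbf{u},\sigma_{-2})$ and $C_2(\mathbf{u},\sigma_{-1})$ each follow $\mathbf{u}$ until the opponent leaves it, and against one another neither deviates, the induced play $\PathsInd{\gterm{H}}{(v_0,0)}{\pi_1}{\pi_2}$ is exactly $\mathbf{u}$ and terminates at $(w,1)$ with $\Lift(w)=q$. Hence both players evaluate the equilibrium outcome by the terminal state $q$, and it suffices to show neither can force a terminal state strictly preferred to $q$ under $\mPreI{i}$.

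Next I would rule out a profitable deviation for player $2$; the case for player $1$ is the mirror image. Fix a proper $\pi_2'$ and let $\rho = \PathsInd{\gterm{H}}{(v_0,0)}{C_1(\mathbf{u},\sigma_{-2})}{\pi_2'}$. If $\pi_2'$ never leaves $\mathbf{u}$, the outcome is the nominal one, and irreflexivity of strict preference ($q \not\succ_{\mPreI{2}} q$) means there is no gain. Otherwise $\pi_2'$ first departs from $\mathbf{u}$ at a player-$2$ node $u_l \in V_2\times\{0\}$ on $\mathbf{u}$, after which player $1$ switches to $\hat{\sigma}_{-2}=M_2(\sigma_{-2})$. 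The crucial observation is that $\mathsf{Occ}(\mathbf{u}) \subseteq \lsr{1}{q}\cap\lsr{2}{q}$ by the loop guard of Algorithm~\ref{alg:MainAlg}, so $u_l \in \lsr{2}{q}$; since $\lsr{2}{q}$ is player $2$'s losing region (player $1$'s winning region) and it is closed under player $2$'s moves, player $2$'s deviating action keeps the play there, and $\hat{\sigma}_{-2}$, a punishment strategy by Lemma~\ref{lem:noRep}, stays winning for player $1$ in $\mathsf{Buchi}(\bgobj{2}{q})$ throughout the punishment phase.

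I would then combine properness with the winning property to pin down the terminal state. By Proposition~\ref{prop:ProperCombinationApp} the strategy $C_1(\mathbf{u},\sigma_{-2})$ is proper, and $\pi_2'$ is proper by hypothesis, so Proposition~\ref{prop:ProperPolicyTerm} gives that $\rho$ terminates, say at $(v',1)$. Because $\hat{\sigma}_{-2}$ denies player $2$ their Büchi objective from $u_l$, the set $\bgobj{2}{q} = V_2^+(q)\times\{0,1\}$ occurs only finitely often along $\rho$; since $(v',1)$ is absorbing, a terminal state in $V_2^+(q)$ would recur forever, so necessarily $v' \in V_2^-(q)$, i.e.\ $\Lift(v') \not\succ_{\mPreI{2}} q$. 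Translating back through $\ppref{2}$ and $\wPreI{2}$, the deviation trace is not strictly preferred to the nominal trace, so player $2$ has no profitable deviation. Repeating the argument with $\lsr{1}{q}$, $\sigma_{-1}$, and $C_2$ handles player $1$, and both clauses of the definition hold.

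The hard part will be the third step: making rigorous that a strategy winning in player $2$'s Büchi game, once composed with proper terminating play, can \emph{only} terminate in $V_2^-(q)$. This hinges on the interaction between the termination mechanism and the Büchi acceptance condition — an absorbing terminal state in $V_2^+(q)$ would make that state recur infinitely and contradict the winning property — together with the guarantee of Lemma~\ref{lem:noRep} that the punishment phase visits each vertex of $V_1\times\{0\}$ at most once, so that the terminating clause of $C_1$ never fires prematurely and $\hat{\sigma}_{-2}$ is played faithfully while remaining winning.
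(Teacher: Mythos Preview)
Your proposal is correct and follows essentially the same approach as the paper's proof: you identify the nominal outcome, split on whether player~$2$ deviates, use $\mathsf{Occ}(\mathbf{u})\subseteq\lsr{2}{q}$ to put the deviation point in player~$1$'s winning region for $\mathsf{Buchi}(\bgobj{2}{q})$, invoke Lemma~\ref{lem:noRep} to ensure the $\tau$-clause of $C_1$ never fires so that $\hat\sigma_{-2}$ is actually played, and then combine properness with the winning property to force the terminal state into $V_2^-(q)$. The paper's argument is slightly terser (it does not separately treat the no-deviation case or the closure of $\lsr{2}{q}$ under player~$2$'s moves), but the logical skeleton is identical.
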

\begin{proof}
    Let $\mathbf{u} = u_0 u_1 \ldots u_l$ where $u_0 = (v_0,0)$ and $u_l = (w,1)$.
    $\PathsInd{\gterm{H}}{(v_0,0)}{C_1(\mathbf{u}, \sigma_{-2})}{C_2(\mathbf{u}, \sigma_{-1})}$ reaches $(w,1)$ trivially.

    Suppose player $2$ instead follows a different proper strategy $\pi_2$, such that $\rho = 
    \PathsInd{\gterm{H}}{(v_0,0)}{C_1(\mathbf{u}, \sigma_{-2})}{\pi_2}$ deviates from $\mathbf{u}$ at some time-step $k$. That is, $\rho[k] \in V_2\times\{0\}$, $\rho[k] = u_k$ and $\rho[k+1] \neq u_{k+1}$. By definition of $\mathbf{u}$, $u_k$ lies in the losing region $U_2(q)$. We may re-express the path $\rho$ as 
    \begin{equation}
        \rho = u_0 \ldots u_{k-1},\PathsInd{\gterm{H}}{u_0 \ldots u_k}{C_1(\mathbf{u}, \sigma_{-2})}{\pi_2}.
    \end{equation}
    In $\PathsInd{\gterm{H}}{u_0 \ldots u_k}{C_1(\mathbf{u}, \sigma_{-2})}{\pi_2}$, player $1$ will follow $\sigma_{-2}$ until a node in $V_1\times\{0\}$ repeats, but by Lemma~\ref{lem:noRep}, this repetition never occurs.
    Thus, we have 
    \begin{equation}
        \rho = u_0 \ldots u_{k-1} \cdot 
        \PathsInd{\gterm{H}}{u_0 \ldots u_k}{\sigma_{-2}}{\pi_2}.
    \end{equation}
    As the original strategies $C_1(\mathbf{u}, \sigma_{-2})$ and $\pi_2$ are proper, $\rho$ must reach $V\times \{1\}$. Let $(v,1)$ be the terminating state in $\gterm{H}$. As $\sigma_{-2}$ is winning for player $1$ in player $2$'s B\"uchi game from $u_k$, and $(v,1)$ is the only state that the path visits infinitely often, then we have $v \in V^-_2(q)$. Thus, player $2$'s deviation does not yield a strictly preferred outcome. Symmetric reasoning holds for player $1$, so the strategy pair is a Nash equilibrium.
\end{proof}

\subsubsection{Necessity of Algorithm~\ref{alg:MainAlg}}
We show that when Algorithm~\ref{alg:MainAlg} returns \textbf{false}, no Nash equilibrium exists in proper strategies in $\gterm{H}$. 

\begin{theorem}
    Suppose Algorithm~\ref{alg:MainAlg} returns \textbf{false}. Then no proper strategy Nash equilibrium exists in $\gterm{H}$. 
\end{theorem}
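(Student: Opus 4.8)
The plan is to prove the contrapositive: assuming a proper-strategy Nash equilibrium $(\sigma_1, \sigma_2)$ exists in $\gterm{H}$, I will exhibit a single semi-automaton state $q$ and a witnessing path $\mathbf{u}$ for which the loop body of Algorithm~\ref{alg:MainAlg} succeeds, so the algorithm cannot return \textbf{false}. First I would set $\rho = \PathsInd{\gterm{H}}{(v_0,0)}{\sigma_1}{\sigma_2}$; by Proposition~\ref{prop:ProperPolicyTerm} this path terminates, say at $(w,1)$, and I set $q = \Lift(w)$ and let $\mathbf{u}$ be the terminating prefix of $\rho$. By construction $\mathbf{u}$ runs from $(v_0,0)$ to $(w,1)$ with $\Lift(w)=q$, so it is exactly the kind of path the algorithm searches for at iteration $q$. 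The only thing left to verify is the membership condition $\mathsf{Occ}(\mathbf{u}) \subseteq \lsr{1}{q} \cap \lsr{2}{q}$; once this holds, the existential check at iteration $q$ succeeds and the algorithm returns an equilibrium rather than \textbf{false}.

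The crux is to establish $\mathsf{Occ}(\mathbf{u}) \subseteq \lsr{1}{q} \cap \lsr{2}{q}$ by contradiction. Note first that the terminating vertex $(w,1)$ always lies in both losing regions (Remark~\ref{rem:Restriction}), so any violating vertex is a non-terminating vertex $u_j = (v_j,0)$ lying in some player's winning region; without loss of generality suppose $u_j$ lies in player $1$'s winning region for $\mathsf{Buchi}(\bgobj{1}{q})$, with memoryless winning strategy $\sigma$. I would then build a profitable deviation for player $1$: follow $\sigma_1$ along the equilibrium history until the play reaches $u_j$ (which it does, since up to that point player $1$ plays as in equilibrium while player $2$ still plays $\sigma_2$), then switch to the strategy $\hat{\sigma}$ that plays $\tau$ at every vertex of $(V_1^+(q)\cap V_1)\times\{0\}$ and follows $\sigma$ elsewhere. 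Since the equilibrium outcome is $\Lift(w)=q$, it suffices to show this deviation terminates inside $V_1^+(q)$, yielding an outcome strictly preferred to $q$ under $\wPreI{1}$ and contradicting the equilibrium condition.

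The hard part, which is the analog of Lemma~\ref{lem:noRep} for the \emph{winner} rather than the punisher, is showing that $\hat{\sigma}$ played against the proper strategy $\sigma_2$ both terminates and terminates in $V_1^+(q)$. For termination I would suppose the play never reaches $V\times\{1\}$: then player $1$ never plays $\tau$, so the play never visits $(V_1^+(q)\cap V_1)\times\{0\}$ and hence agrees with $\sigma$, so the objective $\bgobj{1}{q}$ is met infinitely often; but properness of $\sigma_2$, together with non-termination, forces the play to remain eventually within $V_1\times\{0\}$, so it must eventually sit at a player-$1$ vertex of $V_1^+(q)$, where $\hat{\sigma}$ would play $\tau$ --- a contradiction. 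For the location of termination I would distinguish who plays $\tau$: if player $1$ does, the definition of $\hat{\sigma}$ places the outcome in $V_1^+(q)$ immediately; if player $2$ plays $\tau$ at some $(v,0)$, then player $1$ has not yet played $\tau$, so the prefix is consistent with $\sigma$, and since $\sigma$ wins against the player-$2$ strategy that reproduces this prefix and then terminates, the absorbing vertex $(v,1)$ must lie in $\bgobj{1}{q}$, i.e. $v\in V_1^+(q)$.

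Finally I would observe that the player-$2$ case is symmetric, completing the contradiction: no vertex of $\mathbf{u}$ can lie in either winning region, so $\mathsf{Occ}(\mathbf{u})\subseteq \lsr{1}{q}\cap\lsr{2}{q}$, and at iteration $q$ the algorithm returns a Nash equilibrium instead of \textbf{false}. The single point demanding the most care is that the deviation must itself be a \emph{proper} strategy, since the equilibrium only protects against proper deviations; this is precisely what the $\tau$-at-$V_1^+(q)\cap V_1$ modification and the opponent's properness secure.
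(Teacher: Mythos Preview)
Your overall strategy---take the equilibrium play, read off $q=\Lift(w)$, and argue that every vertex of the terminating prefix lies in both losing regions by exhibiting a profitable deviation from a vertex in a winning region---is exactly the paper's argument, and your termination and location-of-termination analyses against the fixed opponent $\sigma_2$ are correct and in fact collapse the paper's two-case split into one clean case.

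There is, however, a real gap in your final claim that the deviation is \emph{proper}. Properness (Definition following Proposition~\ref{prop:ProperPolicyTerm}) is a condition on \emph{every} path consistent with the strategy, not merely on the play against $\sigma_2$. Your $\hat\sigma$ plays $\tau$ only at vertices of $(V_1^+(q)\cap V_1)\times\{0\}$; a non-proper opponent can, in general, cycle the play between player-1 vertices lying \emph{outside} $V_1^+(q)$ and player-2 vertices lying \emph{inside} $V_1^+(q)$ (this is compatible with $\sigma$ being B\"uchi-winning, since $\bgobj{1}{q}$ is still visited infinitely often), so the play visits $V_1\times\{0\}$ infinitely often without player~1 ever playing $\tau$---violating the proper condition. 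Invoking properness of $\sigma_2$ secures termination only against $\sigma_2$; it says nothing about other opponents, and hence nothing about properness of your deviation.

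The repair is the paper's truncation trick: your argument already shows that against $\sigma_2$ the play terminates at some finite time $T$ in $V_1^+(q)\times\{1\}$; now replace the deviation by the strategy that follows your construction for the first $2T$ steps and plays $\tau$ unconditionally thereafter. This strategy is trivially proper (it always eventually plays $\tau$), and it induces the same play against $\sigma_2$, hence the same strictly preferred outcome. With this one-line addition your proof goes through.
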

\begin{proof}
    Let $(\sigma_1,\sigma_2)$ be a pair of proper strategies. By Proposition~\ref{prop:ProperPolicyTerm} there exists some $w\in S\times Q$ such that 
    $\PathsInd{\gterm{H}}{(v_0,0)}{\sigma_1}{\sigma_2}$ reaches $(w,1)$. Because the algorithm returned \textbf{false},
    $\rho = \PathsInd{\gterm{H}}{(v_0,0)}{\sigma_1}{\sigma_2}$ must visit one of the player's winning regions for the B\"uchi games $\mathsf{Buchi}(\bgobj{1}{q})$ and $\mathsf{Buchi}(\bgobj{2}{q})$ that we define for final semi-automaton state $q = \mathsf{Semi}(w)$. Assume without loss of generality that the path visits the winning region of player $2$, and there exists $k$ such that $\rho[k] \notin U_2(q)$. Player $2$ has a potentially non-proper strategy $\pi_2$ to win their B\"uchi objective. We construct a proper strategy using $\pi_2$. 

    \textit{Case 1: } Suppose $V^+_2(q) \times \{1\}$ appears infinitely often in $\rho' = \PathsInd{\gterm{H}}{(v_0,0)}{\sigma_1}{\pi_2}$. In this case, $\rho'$ visits a state $(z,1)$ at some time $T$, such that $z \succ_{\mathcal{E}_2} w$. Define $\pi_2^T$ as 
    \begin{equation}
        \pi_2^T(h) = \begin{cases}
            \pi_2(h) & \text{ if } |h| \leq 2T, \\
            \tau & \text{ if } |h| > 2T.
        \end{cases}
    \end{equation}
    As the path $\rho'$ terminates in $T$ steps, we have 
    $\PathsInd{\gterm{H}}{(v_0,0)}{\sigma_1}{\pi_2^T} = \rho'$.
    However, $\pi_2^T$ is now a proper strategy that leads to an outcome that player $2$ strictly prefers to $w$, and thus there exists a proper deviating policy.

    \textit{Case 2: } 
    Suppose $V^+_2(w) \times \{0\}$ appears infinitely often in $\rho' = 
    \PathsInd{\gterm{H}}{(v_0,0)}{\sigma_1}{\pi_2}$,
    but $V^+_2(w) \times \{1\}$ does not appear.
    In this case, the game does not terminate.
    However, as $\sigma_1$ is proper, we must have that there exists $N$ such that $n\geq N $ implies $\rho'[n] \in V_2$, otherwise, the path must terminate.
    As such, the only states visited infinitely often are in $V_2\times\{0\}$, but as $V^+_2(w) \times \{0\}$ appears infinitely often, it must follow that $\rho'$ visits $(V_2\cap V^+_2(w))\times \{0\}$.
    As such, we can construct a strategy $\hat{\pi}_2$ such that $\PathsInd{\gterm{H}}{(v_0,0)}{\sigma_1}{\hat{\pi}_2}$ visits $V^+_2(w)\times \{1\}$.
    To construct this new strategy, we simply ensure player $2$ terminates the game on reaching $(V_2\cap V^+_2(w))\times \{0\}$. We can then pass to \textit{Case 1}, as we have constructed a strategy $\hat{\pi}_2$ such that the game reaches $V_2^+(w) \times \{1\}$.

    We can conclude that there always exists a proper strategy $\pi_2$, which achieves a strictly preferred outcome for player $2$. Thus, $(\sigma_1,\sigma_2)$ is not a Nash equilibrium strategy pair, and as the strategy pair was arbitrary, we conclude that no such strategy pair exists.
    $\square$
\end{proof}

\section{Privacy proofs}

In this section, we give an extended version of the proofs of Theorem~\ref{thm:privacyGuar} and Lemmas~\ref{lem:prefGenCorr1} and \ref{lem:prefGenCorr2}.

\subsection{Proof of Theorem~\ref{thm:privacyGuar}}

\textit{Outline: }
For a given preorder $E$, we prove this theorem by constructing another preorder $\tilde{E}$ and showing that, when we step through the privacy protocol, the outputs are the same. 
We induct on the state of the algorithm's memory, which comprises the sets $\qmem{j}$, $\dmemmi{j}$, and $\dmempl{j}$ before query $j$. We then prove that, when the memory is the same at each time step, the responses are also the same.

\textit{Proof}
Given a preorder $E \in \frU_{xy}$ and a query sequence $(\qc_1,f_1),\ldots, (\qc_k,f_k)$, we construct a preorder $\tilde{E} \in \frU_{xy}^c$ that produces the same sequence of responses.

Let $n_j$ denote the memory-state of the algorithm before query $j$ when run with preorder $E$.
This memory-state comprises the lists of dummy preferences
\begin{equation}
    \dmempl{j} = (\dppl{1},\ldots,\dppl{j}) \text{ and } \dmemmi{j} = (\dpmi{1},\ldots,\dpmi{j}),
\end{equation}
along with the previous queries
\begin{equation}
    \qmem{j} = (\qc_1,\ldots,\qc_{j-1}).
\end{equation}
That is, $n_j = (\dmempl{j},\dmemmi{j}, \calU_j)$.
We will use $\tilde{n}_j = (\tilde{\Gamma}_{j}^+,\tilde{\Gamma}_j^-,\tilde{\calU}_j)$ to denote the similar values when we run $\resp$ with $\tilde{E}$,
where 
\begin{equation}
    \tilde{\Gamma}_k^+ = (\tilde{\dpc}_1^+,\ldots,\tilde{\dpc}_k^+) \text{ and } \tilde{\Gamma}_k^- = (\tilde{\dpc}_1^-,\ldots,\tilde{\dpc}_k-).
\end{equation}
We note that $n_1 = (\Gamma^+_1,\Gamma^-_1, \calU_1)$ and $\tilde{n}_1 =  (\tilde{\Gamma}^+_1,\tilde{\Gamma}^-_1,\tilde{\calU}_1)$ where
\begin{align}
    & \mathcal{U}_1 = \tilde{\calU}_1 = \emptyset, \\
    & \Gamma^-_1 = \tilde{\Gamma}^-_1 = \{\calE_{\empty}\}, \text{ and}\\
    & \Gamma^+_1 = \tilde{\Gamma}^+_1 = \{\calE_{x,y}\}.
\end{align}

For a fixed secret $\frU_{xy}$, the function $\resp$, in Algorithm~\ref{alg:mergedAlgo}, is a function that, for a given preorder, updates the memory, and produces a response $\rc_j$.
That is, we can define
\begin{equation}
    n_{j+1}, \rc_j = \resp(E, n_j,  (\qc_j,f_j))
\end{equation}
and
\begin{equation}
    \tilde{n}_{j+1}, \tilde{\rc}_j = \resp(\tilde{E}, \tilde{n}_j,  (\qc_j,f_j)),
\end{equation}
where the secret $\frU_{xy}$ is implicit.

We want to show
\begin{equation}
    \rc_j = \tilde{\rc}_j \quad \forall j \in \{1,\ldots, k\}.
\end{equation}

We consider two cases. 
\paragraph{Case 1} 
$\rc_k = \mathsf{STOP}$, but $\rc_j \neq \mathsf{STOP}$ for $j \in \{1,\ldots,k-1\}$. 

Let $n_k = (\dmempl{k}, \dmemmi{k}, \mathcal{U}_k)$, where
$\dmemmi{k} = (\dpmi{1},\ldots,\dpmi{k})$. 

In this case, we set $\tilde{E} = \dpmi{k}$. 
By definition of $\dpmi{k}$ we have 
$
    \dpmi{k} \in \calP^- \setminus (\Gamma_{k}^+ \cup \Gamma_{k}^-),
$
where
\begin{equation*}
    \calP^+,\calP^- = \textsc{\genk}(E, (\qc_1,\ldots, \qc_{k-1}), \frU_{xy}, k, f_{k-1}).
\end{equation*}
Thus we can deduce that
$
    \dpmi{k} \in \frU^{c}_{xy},
$
and 
\begin{multline}
    \label{eq:winEquality}
    (V_r\times \{0\}) \cap U_1^E(\qc_j) = 
     (V_r\times \{0\}) \cap U_1^{\dpmi{k}}(\qc_j)
    \\ \forall j = 1,\ldots, k-1.
\end{multline}
We also have $\dpmi{k} \neq \dpmi{t}$ and $\dpmi{k} \neq \dppl{t}$ for all $t < k$. 
Finally, by the invariance condition, we can deduce that
\begin{multline}
    \label{eq:gkpEqPriv1}
    \textsc{\genk}(E, (\qc_1,\ldots, \qc_j), \frU_{xy}, j+1, r)
    \\ = \textsc{\genk}(\dpmi{k}, (\qc_1,\ldots, \qc_j), \frU_{xy}, j+1, r)
\end{multline}
for all $j = 1,\ldots,k-1$.
For brevity, we will use $g$ to denote the function $\textsc{\genk}$.

For $\tilde{E} = \dpmi{k}$, we first show that the memory states are equal at each iteration, \ie, $n_j = \tilde{n}_j$ for $j = 1$ to $k$. 
By definition, $n_1 = \tilde{n}_1$. 
Suppose $n_j = \tilde{n}_j$ for $j < k$, i.e, 
\begin{itemize}
    \item $\mathcal{U}_j = \tilde{\mathcal{U}}_j$,
    \item $\Gamma_j^+ = \tilde{\Gamma}_j^+$, and 
    \item $\Gamma_j^- = \tilde{\Gamma}_j^-$.
\end{itemize}
We step through $\resp$, and show that the changes to memory are the same for $E$ and $\tilde{E}$.

By assumption $\rc_j \neq \mathsf{STOP}$ for $j < k$, so the if statement on line~\ref{ln:STOPtest} does not trigger with $E$ as the preference. By assumption, $\dpmi{k} \neq \dpmi{t}$ and $\dpmi{k} \neq \dppl{t}$ for all $t < k$.
As $n_j = \tilde{n}_j$, $\tilde{\Gamma}_j^+ = \dmempl{j}$ and $\tilde{\Gamma}_j^- = \dmemmi{j}$, we conclude that $\dpmi{k} \neq \tilde{\dpc}_{j}^+$ and $\dpmi{k} \neq \tilde{\dpc}_{j}^-$, such that line~\ref{ln:STOPtest} does not trigger with $\tilde{E}$ either.

Let $\calP^+_j,\calP^-_j = g(E, (\qc_1,\ldots, \qc_{j}), \frU_{xy}, j+1, f_{j})$, 
and let $\tilde{\calP}^+_j,\tilde{\calP}^-_j = g(\tilde{E}, (\qc_1,\ldots, \qc_{j}), \frU_{xy}, j+1, f_{j})$. 
By \eqref{eq:gkpEqPriv1}, and the fact that $f_j = r$, as $j < k$, we have $\calP^+_j = \tilde{\calP}^+_j$ and $\calP^-_j = \tilde{\calP}^-_j$.

The inductive assumption implies that $\dmempl{j} = \tilde{\Gamma}_j^+$ and $\dmemmi{j} = \tilde{\Gamma}_j^-$, and we argued above that $\calP^+_j = \tilde{\calP}^+_j$ and $\calP^-_j = \tilde{\calP}^-_j$.
Thus, lines \ref{ln:pProcStart}-\ref{ln:pProcEnd} and line \ref{ln:updGamma} in Algorithm~\ref{alg:mergedAlgo} will involves the exact same operations for both $E$ and $\tilde{E}$, which implies $\Gamma^+_{j+1} = \tilde{\Gamma}_{j+1}^+$ and $\Gamma^-_{j+1} = \tilde{\Gamma}_{j+1}^-$. 
As the queries are the same, it also trivially holds that $\calU_{j+1} = \tilde{\calU}_{j+1}$. 

Thus, by the principle of mathematical induction, we can deduce that $n_j = \tilde{n}_j$ for $j = 1,\ldots, k$.

We now argue that $\rc_j = \tilde{\rc}_j$ for all $j = 1,\ldots, k-1$.

As a byproduct of the above reasoning, we have
\begin{equation}
    (\calP_j^+,\calP_j^-) = (\tilde{\calP}_j^+,\tilde{\calP}_j^-)
\end{equation}
for all $j = 1,\ldots, k-1$.
Additionally, the first $k-1$ queries must have the flag as $r$ by assumption, as the mediator will only pass the punishment query as the final query.
Finally, with neither preorder does the protocol stop at line~\ref{ln:STOPtest}, as we argued earlier.

Thus, for both $E$ and $\tilde{E}$, $\resp$ will generate the respective responses $\rc_j$ and $\tilde{\rc}_j$ to the $j^{\text{th}}$ query at line~\ref{ln:wrReturn}.
By \eqref{eq:winEquality} we can then deduce that $\rc_j = \tilde{\rc}_j$

For the case when $j = k$, we have, by assumption that $\rc_k = \mathsf{STOP}$, while, by definition of $\tilde{E}$ as $\dpmi{k}$, and the fact that $n_j = \tilde{n}_j$, we have $\tilde{\rc}_k = \mathsf{STOP}$.

\paragraph{Case 2}
$\rc_k \neq \mathsf{STOP}$. 

In this case, we set $\tilde{\mpre} = \dpmi{k+1}$. 
By definition, we have $ \dpmi{k+1} \in \calP_k^-$ where
\begin{equation*}
    \calP_k^+,\calP_k^- = 
    \textsc{\genk}
    (\mpre, (\qc_1,\ldots,\qc_k), \frU_{xy},k+1,f_k)
\end{equation*}
and
$
    \dpmi{k+1} \notin (\Gamma_k^+ \cup \Gamma_k^-).
$
Thus we can deduce that 
$
    \dpmi{k+1} \in \frU^{c}_{xy},
$
and, 
\begin{multline}
    \label{eq:winEquality2}
    (V_r\times \{0\}) \cap U_1^E(\qc_j) = 
     (V_r\times \{0\}) \cap U_1^{\dpmi{k+1}}(\qc_j)
    \\ \forall j = 1,\ldots, k.
\end{multline}
Also, if $f_k = p$, we have
\begin{equation}
    \label{eq:upEquality}
    \{\qc_k\}^\uparrow_{\dpmi{k+1}} = \{\qc_k\}^\uparrow_{\mpre}.
\end{equation}
Finally, by the invariance condition, we have
\begin{multline}
    \textsc{\genk}(E, (\qc_1,\ldots, \qc_j), \frU_{xy}, j+1, r)
    \\ = \textsc{\genk}(\dpmi{k+1}, (\qc_1,\ldots, \qc_j), \frU_{xy}, j+1, r)
\end{multline}
for all $j = 1,\ldots k-1$.

We now again induct on $j$ to prove that $n_j = \tilde{n}_j$.
By definition of $n_1$ and $\tilde{n}_1$ we have $n_1 = \tilde{n}_1$. 

Suppose $n_j = \tilde{n}_j$. 
By construction $\dpmi{k+1}$ is not contained in the set 
$\dmemmi{j} \cup \dmempl{j}$, and by the inductive assumption $\dmempl{j} = \tilde{\Gamma}_j^+$ and $ \dmemmi{j} = \tilde{\Gamma}_j^-$. Thus, similar to the first case, we can argue that line~\ref{ln:STOPtest} does not trigger.

Let $\calP^+_j,\calP^-_j = g(E, (\qc_1,\ldots, \qc_{j}), \frU_{xy}, j+1, f_{j})$, and let $\tilde{\calP}^+_j,\tilde{\calP}^-_j = g(\tilde{E}, (\qc_1,\ldots, \qc_{j}), \frU_{xy}, j+1, f_{j})$, where we again use $g$ as shorthand for $\textsc{\genk}$. By definition of $\tilde{E} = \dpmi{k+1}$, the fact that $f_j = r$, as $j < k$, and the invariance assumptions on $g$, we have $\calP^+_j = \tilde{\calP}^+_j$ and $\calP^-_j = \tilde{\calP}^-_j$.

Thus, by similar arguments to case $1$, we can deduce that $n_{j+1} = \tilde{n}_{j+1}$, and conclude that $n_{j} = \tilde{n}_j$ for all $j \in \{1,\ldots,k\}$.

We now argue that $\rc_j = \tilde{\rc}_j$ for all $j = 1,\ldots,k-1$.
By assumption, for any $j = 1,\ldots, k-1$, preorder $E$ does not cause $\mathsf{STOP}$ to return at line~\ref{ln:STOPtest}.
The same holds for $\tilde{E}$, as $n_j = \tilde{n}_j$
Furthermore, $\calP^+_j = \tilde{\calP}^+_j$, $\calP_j^- = \tilde{\calP}_j^-$ and $\tilde{n}_j = n_j$ for all $j = 1,\ldots,k-1$, and thus \textsc{\genk} executes identically from line~\ref{ln:pProcStart} to \ref{ln:pProcEnd} for both $E$ and $\tilde{E}$.
Finally, by definition of $\tilde{E}$ as $\dpmi{k+1}$, and the fact that $f_j$ must be $r$ for $j < k$, $\resp$ will generate $\rc_j$ and $\tilde{\rc}_j$ through line~\ref{ln:wrReturn}, and by \eqref{eq:winEquality2}, 
The values $\resp$ generates as a reply will be the same, that is, $\rc_j = \tilde{\rc}_j$.

For the $k^{\text{th}}$ query we remark that, by assumption $\rc_k \neq \mathsf{STOP}$, and so depending on the flag $f_k$, $\rc_k$ will either be the winning region $\lsrP{1}{\qc_k}{E}$ or the B\"uchi game $\mathsf{Buchi}(\bgobjP{1}{\qc}{E})$.

Meanwhile, by definition, $ \dpmi{k+1} \notin \Gamma_k^+ \cup \Gamma_k^- $ and hence $ \dpmi{k+1} \notin \tilde{\Gamma}_k^+ \cup \tilde{\Gamma}_k^-$ and so $\resp$ will not return $\mathsf{STOP}$ at line~\ref{ln:STOPtest}.
Furthermore, again by the invariance condition, $\calP_k^+ = \tilde{\calP}_k^+$ and $\calP_k^- = \tilde{\calP}_k^-$ where 
\begin{align}
    & \calP^+_k,\calP^-_k = g(E, (\qc_1,\ldots, \qc_{k}), \frU_{xy}, k+1, f_{k}),\\
    & \tilde{\calP}_k^+,\tilde{\calP}_k^- = g(\tilde{E}, (\qc_1,\ldots, \qc_{k}), \frU_{xy}, k+1, f_{k}),
\end{align}
and thus $\resp$ will not return $\mathsf{STOP}$ between lines~\ref{ln:pProcStart} and \ref{ln:pProcEnd} on query $k$ for $\tilde{E}$, as these lines execute identically for preorders $E$ and $\tilde{E}$.
Thus, depending on the flag $f_k$, $\tilde{\rc}_k$ will either be the winning region $\lsrP{1}{\qc_k}{\tilde{E}}$ or the B\"uchi game $\mathsf{Buchi}(\bgobjP{1}{\qc}{\tilde{E}})$.
In either case, by \eqref{eq:winEquality2} and \eqref{eq:upEquality},we can deduce that $\rc_k = \tilde{\rc}_k$.

The above proof assumed $E \in \frU$, but symmetric reasoning holds for the case where $E \in \frU^c$.
$\square$

\subsection{Proofs of Lemmas~\ref{lem:prefGenCorr1} and \ref{lem:prefGenCorr2}}

\begin{lemma}[Restatement of Lemma~\ref{lem:prefGenCorr1}]
    $\textsc{CheckConstraints}(Q, \mathcal{C}, \mathcal{E}_{init})$ returns the smallest preorder satisfying constraints $\mathcal{C}$ that contains $\mathcal{E}_{init}$
\end{lemma}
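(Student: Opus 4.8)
The plan is to fix an arbitrary preorder $\calH$ that contains $\mathcal{E}_{init}$ and satisfies $\calC$, and then to track the running set $\calE$ against $\calH$ throughout the procedure. The backbone of the argument is the containment invariant $\calE \subseteq \calH$: it simultaneously certifies minimality when a preorder is returned and impossibility when $\mathsf{Fail}$ is returned. First I would dispatch termination, which is routine: $\calE$ only grows, it lives in the finite set $Q\times Q$, and every pass through the \textbf{while} loop either returns or strictly enlarges $\calE$ at line~\ref{ln:pa3}, so the loop halts after at most $|Q|^2$ iterations.

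Next I would prove the invariant $\calE \subseteq \calH$ by induction over the operations that modify $\calE$. At line~\ref{ln:pa1} it holds because $\mathcal{E}_{init}\subseteq \calH$ by hypothesis, the reflexive pairs lie in $\calH$ since $\calH$ is a preorder, and each edge $(u,v)$ with $(u,v,\succ)\in\calC$ lies in $\calH$ because $\calH$ satisfies $u\succ_\calH v$. At line~\ref{ln:pa2} the transitive closure stays inside $\calH$ because $\calH$ is transitive. The delicate step is line~\ref{ln:pa3}: there $(v,u)$ is added because $(u,v,\not\succ)\in\calC$ and $(u,v)\in\calE\subseteq\calH$, so $u\succeq_\calH v$; since $\calH$ satisfies $u\not\succ_\calH v$ while already containing $u\succeq_\calH v$, it must also contain $v\succeq_\calH u$, i.e.\ $(v,u)\in\calH$, and the invariant survives.

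Then I would case-split on the return value. If the procedure returns $\calE$ at line~\ref{ln:valCheck}, then $\calE$ is reflexive (from line~\ref{ln:pa1}) and transitive (the closure at line~\ref{ln:pa2} was just applied with no later addition), hence a preorder; it satisfies every $\succ$ constraint because the corresponding strict edge was inserted and the check at line~\ref{ln:contCheck} did not fire, and it satisfies every $\not\succ$ constraint because $\text{Flag}=\mathsf{Valid}$ means no violating pair remains. Combined with $\calE\subseteq\calH$, this shows $\calE$ lies inside every valid preorder and is therefore the smallest. If the procedure instead returns $\mathsf{Fail}$ at line~\ref{ln:contCheck}, some $(u,v,\succ)\in\calC$ has $(v,u)\in\calE\subseteq\calH$, giving $v\succeq_\calH u$ and contradicting $u\succ_\calH v$; since $\calH$ was arbitrary, no preorder containing $\mathcal{E}_{init}$ satisfies $\calC$, so $\mathsf{Fail}$ is the correct output, and the two cases are mutually exclusive since a returned $\calE$ is itself such a preorder.

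The main obstacle is the treatment of the $\not\succ$ constraints. One must verify both that forcing the reverse edge at line~\ref{ln:pa3} is genuinely unavoidable (so the invariant is preserved) and that the returned set actually satisfies all such constraints, while bearing in mind that each re-closure at line~\ref{ln:pa2} can introduce new derived edges that trigger further $\mathsf{Fail}$ checks or additional fixes. Consequently, both termination and the containment invariant must be argued to persist across the repeated closures driven by the loop, rather than established in a single pass.
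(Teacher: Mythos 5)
Your proof is correct and follows essentially the same route as the paper's: both establish the containment invariant $\calE \subseteq \calH$ for an arbitrary valid preorder $\calH$ containing $\mathcal{E}_{init}$ (by induction over the modifications to $\calE$), argue termination from the strict growth of $\calE$ inside the finite set $Q \times Q$, and then case-split on whether the procedure returns $\calE$ or $\mathsf{Fail}$. Your treatment of the $\not\succ$ step at line~\ref{ln:pa3} and of the two return cases simply fills in details that the paper's proof leaves implicit.
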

\begin{proof}
We first note that $\mathcal{E}$ is always a subset of any preference relation that satisfies $\mathcal{C}$ and contains $\calE_{init}$.
Indeed, we can deduce this fact by inducting on the construction of $\calE$ in \textsc{CheckConstraints}.

We next note that \textsc{CheckConstraints} terminates, that is, always returns $\mathsf{Pass}$ or $\mathsf{Fail}$.
Indeed, this claim can only fail if the while loop runs indefinitely. 
However, if the algorithm does not terminate for a given loop instance, at least one edge must be added to $\mathcal{E}$.
As such, $\mathcal{E}$ will eventually be the complete graph, and for the complete graph, the while loop must terminate via \textbf{return}.

Finally, we argue that if the algorithm returns $\mathcal{E}$ then $\mathcal{E}$ is a preorder satisfying the constraints, and if the algorithm returns $\mathsf{Fail}$, no such preorder exists that also contains $\calE_{init}$.
Indeed, if the algorithm returns $\mathcal{E}$ after a given loop iteration, $\mathcal{E}$
is a preorder, and satisfies all constraints.
Alternatively, if the algorithm returns $\mathsf{Fail}$ there exists a node pair $u,v$ such that $(u,v) \in \mathcal{E} $ and $ (v,u)\in\mathcal{E} $, and $(u,v \succ) \in \mathcal{C}$. 
However, any relation satisfying $\mathcal{C}$ and containing $\calE_{init}$ must contain $\mathcal{E}$, as we argued at the beginning of this proof, and thus can not satisfy $u \succ v$ . 
We can then deduce that no such preorder exists.
$\square$
\end{proof}

We additionally note that \textsc{CheckConstraints} runs in polynomial time.
\begin{lemma}
    \textsc{CheckConstraints} terminates in polynomial time in the number $|Q|$ of semi-automaton states.
\end{lemma}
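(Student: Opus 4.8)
The plan is to bound two quantities separately---the number of iterations of the main loop and the cost incurred within a single iteration---and then multiply them. Since each quantity will be polynomial in $|Q|$, their product is too. Throughout, I would treat the constraint set $\mathcal{C}$ as part of the input and note that, without loss of generality, it contains at most one $\succ$- and one $\not\succ$-constraint per ordered pair (duplicates are redundant), so $|\mathcal{C}| = O(|Q|^2)$.

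First I would bound the iteration count. The key structural fact is that $\mathcal{E}$ is monotone: the initialization, the call to \textsc{TransitiveClosure}, and the assignment $\mathcal{E} \gets \mathcal{E} \cup (v,u)$ only ever add edges, and no line removes one. As $\mathcal{E} \subseteq Q \times Q$ at all times, we have $|\mathcal{E}| \le |Q|^2$ throughout. I would then reuse the progress argument already invoked for Lemma~\ref{lem:prefGenCorr1}: an iteration returns $\mathcal{E}$ exactly when $\text{Flag} = \mathsf{Valid}$, i.e.\ precisely when the $\mathbf{if}$ block is \emph{not} entered; and whenever that block is entered it adds the pair $(v,u)$, which the guard conjunct $(v,u)\notin\mathcal{E}$ certifies is genuinely new. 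Hence every non-returning iteration strictly enlarges $\mathcal{E}$, and since $\mathcal{E}$ can hold at most $|Q|^2$ edges, the loop executes at most $|Q|^2 + 1$ times.

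Next I would bound the work per iteration. \textsc{TransitiveClosure} on a relation over $|Q|$ elements runs in $O(|Q|^3)$ time (e.g.\ Floyd--Warshall). The two constraint scans each perform one $O(1)$ membership test in $\mathcal{E}$ per constraint, costing $O(|\mathcal{C}|) = O(|Q|^2)$, and the single edge addition is $O(1)$. Thus one iteration costs $O(|Q|^3)$. Multiplying by the iteration bound gives a total running time of $O(|Q|^2)\cdot O(|Q|^3) = O(|Q|^5)$, which is polynomial in $|Q|$.

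The only step that demands genuine care, rather than routine accounting, is the iteration bound: I must confirm that entering the $\mathbf{if}$ block always adds an edge not already present, so that the loop cannot spin without progress. This is exactly what the conjunct $(v,u)\notin\mathcal{E}$ in the guard supplies, and it is the crux on which the polynomial bound rests. Everything else---the transitive-closure cost and the linear scans over $\mathcal{C}$---is standard and can be cited rather than derived.
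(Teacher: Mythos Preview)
Your proof is correct and follows the same approach as the paper: both argue that each non-terminating iteration strictly enlarges $\mathcal{E}$ (hence at most $|Q|^2$ iterations) and that the per-iteration work is polynomial. Your version is simply more explicit---isolating the guard $(v,u)\notin\mathcal{E}$ as the progress witness and giving concrete bounds ($O(|Q|^3)$ for transitive closure, $O(|Q|^5)$ overall)---whereas the paper states the argument in two sentences.
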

\begin{proof}
    This fact follows from the fact that, at each stage where the algorithm does not terminate, the size of $\mathcal{E}$ increases by $1$. However, this can only happen $|Q|^2$ times.
    The individual steps in each loop also run in polynomial time.
    $\square$
\end{proof}

\begin{lemma}[Restatement of Lemma~\ref{lem:prefGenCorr2}]
    Let
    $ \mathcal{P} = \textsc{\genfc}(Q,\mathcal{C},K)$.
    Either $|\mathcal{P}| \geq K$ or $\mathcal{P}$ contains all preferences that satisfy the constraints in $\mathcal{C}$.
\end{lemma}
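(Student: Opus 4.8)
The plan is to prove this completeness property of \textsc{\genfc} by a fixed-point/saturation argument on the set $\calP$, showing that the \texttt{while} loop cannot terminate with $|\calP| < K$ unless $\calP$ already contains every preorder satisfying $\calC$. The loop exits only when either $|\calP| \geq K$ (giving the first disjunct) or the flag \npvar\ remains \textbf{false} through an entire pass over $Q^2 \times \calP$. So it suffices to show: if $\calP$ does not contain all preorders satisfying $\calC$, then some iteration of the inner \texttt{for} loop must add a new preorder and set \npvar\ to \textbf{true}. This is the contrapositive of loop termination with an incomplete $\calP$.

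The key step, and the main obstacle, is the edge-extension argument. Suppose there is a preorder $E'$ satisfying $\calC$ with $E' \notin \calP$. I would select a witness $E \in \calP$ that is maximal with respect to being a subset of $E'$; concretely, I would argue there exists $E \in \calP$ with $E \subseteq E'$ (for instance the minimal preorder produced by \textsc{CheckConstraints}$(Q,\calC,\emptyset)$, which is contained in every constraint-satisfying preorder by Lemma~\ref{lem:prefGenCorr1}, is always in $\calP$). Since $E \subsetneq E'$, pick an edge $e \in E' \setminus E$. Then $E \cup \{e\} \subseteq E'$, so \textsc{CheckConstraints}$(Q,\calC,E \cup \{e\})$ does not fail and, by Lemma~\ref{lem:prefGenCorr1}, returns the smallest preorder $\calE'$ satisfying $\calC$ that contains $E \cup \{e\}$; in particular $\calE' \subseteq E'$. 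The delicate part is arguing that this $\calE'$ is genuinely new, \ie\ $\calE' \notin \calP$, so that the inner loop actually grows $\calP$. I would handle this by choosing $E$ and $e$ carefully: take $E \in \calP$ to be a member of $\calP$ of maximum cardinality among those contained in $E'$, and let $e \in E' \setminus E$; then $\calE' \supseteq E \cup \{e\} \supsetneq E$, and by maximality of $E$ no member of $\calP$ contained in $E'$ strictly contains $E$, so $\calE' \notin \calP$.

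With that, each full pass of the \texttt{while} loop either already has $|\calP| \geq K$ or strictly increases $|\calP|$. Since $\calP$ consists of distinct subsets of the finite set $Q^2$, it cannot grow without bound, so the loop terminates; and at termination either $|\calP| \geq K$ or no new preorder could be added in the last pass, which by the contrapositive of the edge-extension argument means $\calP$ contains every preorder satisfying $\calC$. This yields the stated dichotomy. I would also remark that the two disjuncts are exactly the two ways the \texttt{while} guard $|\calP| < K \wedge \npvar$ can become false, so the case analysis is exhaustive.

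I expect the subtlety to lie entirely in the ``genuinely new preorder'' claim: one must rule out the possibility that \textsc{CheckConstraints}$(Q,\calC,E \cup \{e\})$ returns a preorder already in $\calP$, since otherwise \npvar\ need not flip and the loop could stall prematurely. The maximal-cardinality choice of $E$ is the cleanest way to close this gap, and it mirrors the informal sketch given earlier in the excerpt, which asserts we can find $E \in \calP$ and an edge $e$ with $E \cup \{e\} \subseteq E'$ but $E \cup \{e\} \not\subseteq B$ for any $B \in \calP$.
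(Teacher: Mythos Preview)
Your proposal is correct and follows essentially the same route as the paper's proof: both argue by contradiction that if \npvar\ stays \textbf{false} while some constraint-satisfying preorder $E'$ is missing from $\calP$, one can pick a suitably maximal $E \in \calP$ with $E \subsetneq E'$ and an edge $e \in E' \setminus E$ so that $\textsc{CheckConstraints}(Q,\calC,E\cup\{e\})$ returns a preorder that (by Lemma~\ref{lem:prefGenCorr1}) is contained in $E'$, strictly contains $E$, and hence cannot already lie in $\calP$. The only cosmetic difference is that you select $E$ of maximum cardinality among members of $\calP$ contained in $E'$, whereas the paper picks $E$ maximal in the containment order; both choices force the same contradiction.
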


\begin{proof}
    All of the preorders that \textsc{\genfc} adds to $\mathcal{P}$ satisfy the constraints, by the properties of \textsc{CheckConstraints}.
    Now assume that the initial preorder is not $\mathsf{Fail}$, and consider the while loop.
    For each iteration of the while loop, either \npvar \  is \textbf{false} at the end of the loop iteration, or the size of $\mathcal{P}$ increases. 
    Now consider the case where \npvar \  is \textbf{false} at the end of the loop, that is, $\mathcal{P}$ is constant throughout the loop.
    We claim that $\mathcal{P}$ contains all preorders that satisfy the constraints. 
    Indeed, suppose by contradiction that there exists a preorder $E$ on $Q$ which is not in $\mathcal{P}$ but satisfies the constraints.
    Let $E' = \textsc{CheckConstraints}(Q,\mathcal{C},\emptyset)$.
    $E'$ is a preorder, $E' \subset E$ by Lemma~\ref{lem:prefGenCorr1}, and $E' \in \calP$.
    Now take a preorder $E''$ in $\mathcal{P}$ such that
    \begin{enumerate}
        \item $E'' \subset E$
        \item There does not exist a preorder $B \in \mathcal{P}$ such that $B \supsetneq E''$ and $B \subset E$.
    \end{enumerate}
    Such a preorder exists as we can construct it by induction.
    Now let $e^*$ be an edge such that $e^* \in E \setminus E''$, which exists as $E'' \subsetneq E$.
    At some point in the course of the while loop iteration we will run \textsc{CheckConstraints}($Q,\mathcal{C},E'' \cup \{e^*\}$).
    We know that this call will not return $\mathsf{Fail}$ as there exists a preorder $E$ that satisfies the constraints $\mathcal{C}$ and contains $E'' \cup \{e^*\}$.
    Let $B' = \textsc{CheckConstraints}(Q,\mathcal{C},E'' \cup \{e^*\})$.
    We must have $B' \notin \mathcal{P}$, otherwise we would violate condition 2. above on $E''$, \ie, that $E''$ is not a subset of another preorder in $\mathcal{P}$. 
    However, this new preorder $B'$ would have caused \npvar \  to be set to \textbf{true}, and thus we have arrived at a contradiction.

    If \npvar \  is ever \textbf{false} at the end of the while loop, the while loop will exit, the algorithm will return $\mathcal{P}$, and $\mathcal{P}$ will contain all preorders that satisfy the constraints $\mathcal{C}$.
    In the other case, \npvar \  is never \textbf{false}, and on each iteration, $\mathcal{P}$ increases in size by one. 
    Thus, we will eventually have $|\mathcal{P}| \geq K$. 
\end{proof}

\begin{remark}
    \textsc{\genfc} makes at most $k^2 |Q|^2$ calls to \textsc{CheckConstraints}.
\end{remark}

\subsection{Implementing \textsc{\genk}}

In this section of the appendix, we give extra details on the implementation and properties of \textsc{\genk}.

\begin{proposition}
    Algorithm~\ref{alg:genKImp} satisfies Assumption~\ref{ass:GenKAss}.
\end{proposition}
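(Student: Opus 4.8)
The plan is to verify that Algorithm~\ref{alg:genKImp} satisfies each of the four conditions in Assumption~\ref{ass:GenKAss} in turn, relying on the correctness of \textsc{\genfc} established in Lemmas~\ref{lem:prefGenCorr1} and \ref{lem:prefGenCorr2}. The key observation tying everything together is that Algorithm~\ref{alg:genKImp} depends on the reference preorder $E$ only through the strict upper closures $\{\qc_j\}^\uparrow_E$: the constraint set $\calC$ is built entirely from whether each $v$ lies in $\{\qc_j\}^\uparrow_E$, and the returned preorders are exactly those satisfying these constraints. I would state this dependence as a preliminary remark, since it is the mechanism behind conditions 2, 3, and 4.

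First, for condition 1, I would note that Algorithm~\ref{alg:genKImp} explicitly augments $\calC$ with $(x,y,\succ)$ to form $\calC^+$ and with $(x,y,\not\succ)$ to form $\calC^-$, so by Lemma~\ref{lem:prefGenCorr1} every preorder in $\calP^+$ satisfies $x \succ y$ (hence lies in $\frU_{xy}$) and every preorder in $\calP^-$ satisfies $x \not\succ y$ (hence lies in $\frU_{xy}^c$). The cardinality bounds $|\calP^+| \le K$ and $|\calP^-| \le K$ follow because \textsc{\genfc} returns at most its first $K$ elements. For conditions 2 and 3, the central claim is that sharing upper closures forces equality of B\"uchi objectives: if $\{\qc_j\}^\uparrow_E = \{\qc_j\}^\uparrow_{E'}$ then $V^+_1(\qc_j)$ computed from $E$ equals that computed from $E'$, since $V^+_1(\qc_j) = \{v : \Lift(v) \succ_{E} \qc_j\}$ depends only on the upper closure of $\qc_j$. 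Equal B\"uchi objectives give equal B\"uchi games and hence equal losing regions, yielding condition 2, while condition 3 (relevant when $f = p$) is the upper-closure equality itself. By construction the constraints in $\calC$ enforce exactly $\{\qc_j\}^\uparrow_{E'} = \{\qc_j\}^\uparrow_{E}$ for all $j$, so every $E' \in \calP^+ \cup \calP^-$ has matching upper closures.

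For condition 4, the invariance condition, I would argue that running Algorithm~\ref{alg:genKImp} on any $E' \in \calP^+ \cup \calP^-$ with the truncated query sequence $(\qc_l)_{l=1}^j$ produces the same output as running it on $E$, because the constraint set the algorithm builds depends only on the upper closures $\{\qc_l\}^\uparrow$, which agree for $E$ and $E'$ by the preceding paragraph. Since \textsc{\genfc} is deterministic given its inputs, identical constraint sets yield identical outputs, establishing \eqref{eq:winRegionEqualityCondition}.

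The main obstacle I anticipate is condition 4 combined with the flag handling. The equality must hold for the truncated flag $f^*_j$, which is $r$ for $j < k$ and $f$ at $j = k$; I would need to check that the flag only affects whether a punishment strategy is later requested and does not alter the constraint generation in \textsc{\genk} itself, so that the upper-closure dependence carries through uniformly. A secondary subtlety is ensuring the claimed equivalence of losing regions restricted to $(V_r \times \{0\})$ rather than all of $V$; here I would invoke Remark~\ref{rem:Restriction}, noting that the shared part of the losing region is exactly the reachable, non-terminating portion, and that this restriction is preserved because it too is a function of the B\"uchi objective alone.
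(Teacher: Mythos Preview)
Your proposal is correct and follows essentially the same approach as the paper: both arguments hinge on the observation that Algorithm~\ref{alg:genKImp} depends on $E$ only through the upper closures $\{\qc_j\}^\uparrow_E$, that the constraints in $\calC$ force any returned preorder to share those upper closures (giving conditions 2 and 3 via equality of the B\"uchi objectives), and that this upper-closure dependence yields the invariance condition 4. Your anticipated obstacles are non-issues---the flag $f$ is received but never used in the body of Algorithm~\ref{alg:genKImp}, and equality of the full losing regions (which you get from equal B\"uchi objectives) a fortiori gives equality of their restrictions to $V_r\times\{0\}$, so no appeal to Remark~\ref{rem:Restriction} is needed.
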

\begin{proof}
    By correctness of \textsc{\genfc}, any preorder $E'$ in $\calP^+$ satisfies the constraint $(x,y,\succ)$, \ie,  $x\succ_E y$, which implies $\calP^+ \subset \frU_{xy}$. 
    The same argument implies $\calP^- \subset \frU_{xy}^c$.
    Trivially, $|\calP^+| \leq K$ and $|\calP^-| \leq K$.

    Additionally, by correctness of \textsc{\genfc}, any preorder $E' \in \calP^+ \cup \calP^-$ satisfies
    \begin{equation}
        v \succ_{E'} \qc_i \quad \forall v \in \upc{\qc_i}{E},  i = 1,\ldots, k,
    \end{equation}
    and
    \begin{equation}
        v \not\succ_{E'} \qc_i \quad \forall v \notin \upc{\qc_i}{E}, i = 1,\ldots, k.
    \end{equation}
    Thus we deduce that, for all $E' \in \calP^+ \cup \calP^-$,
    \begin{equation}
        \upc{\qc_i}{E'} = \upc{\qc_i}{E} \quad \forall i = 1,\ldots, k,
    \end{equation}
    and thus $\bgobjP{1}{\qc_i}{E} = \bgobjP{1}{\qc_i}{E'}$ for all $i = 1,\ldots, k$. This fact implies $\lsrP{1}{\qc_i}{E'} = \lsrP{1}{\qc_i}{E}$ for all $i = 1,\ldots,k$.
    Thus, we have proven that  Algorithm~\ref{alg:genKImp} satisfies the first three conditions.

    \textsc{\genk} only depends on its arguments through the upper closures, \ie, there exists some function $\gamma$ such that, when we implement \textsc{\genk} through Algorithm~\ref{alg:genKImp}
    \begin{multline}
        \textsc{\genk}(E,(\qc_1,\ldots,\qc_j), \frU_{xy}, K,f) \\
        = 
        \gamma(\upc{\qc_1}{E},\ldots,\upc{\qc_j}{E}, \frU_{xy}, K),
    \end{multline}
    for any natural integer $K$.
    For any $E' \in \calP^+ \cup \calP^-$, we have
    \begin{equation}
        \upc{\qc_i}{E'} = \upc{\qc_i}{E} \quad \forall i = 1,\ldots, k,
    \end{equation}
    and thus
    \begin{multline}
        \textsc{\genk}(E,(\qc_1,\ldots,\qc_j), \frU_{xy}, K,f) \\
        = 
        \gamma(\upc{\qc_1}{E},\ldots,\upc{\qc_j}{E}, \frU_{xy},K) \\
        =
        \gamma(\upc{\qc_1}{E'},\ldots,\upc{\qc_j}{E'}, \frU_{xy}, K) \\
        =
        \textsc{\genk}(E',(\qc_1,\ldots,\qc_j), \frU_{xy}, K,f)
    \end{multline}
    for any natural integer $K$.
    $\square$
\end{proof}

\section{Experimental details}

In this section, we elaborate on the experiments listed in Section~\ref{sec:privPerm}.

\subsection{Representations of the Shared-Delivery Example}
\label{sec:prodComp}

We first discuss the representation of the product game for the shared-delivery example.

For the shared delivery example, the preference automaton portion of the state is always equal to the state of the original game. Indeed, the preference automaton simply tracks the last state visited. As such, suppose we follow a path $\rho = (l_0, p_0), \ldots, (l_k,p_k)$ in the shared delivery example, where $l_j$ and $p_j$ are a sequence of locations and ownerships of the package. Then it follows that $\delta(q_0, \rho) = l_k$. That is, the semi-automaton portion of the product game state duplicates the location in the game state.

Because of this duplication of the game state in the preference automaton state, the product game and the original game are the same. That is, we can, by an abuse of notation, set $H_\tau = G_\tau$ and define preferences on the states of the product game directly from the preference relations, i.e.,
\begin{equation}
    (l,p) \succeq_{\calE_i} (l',p') \iff l \succeq_{E_i} l'.
\end{equation}

We use this representation of the product game in the following sections to characterize the winning regions of the grid game we study in Section~\ref{sec:privPerm}.

\subsection{Methods for sampling preferences and query sequences}

\subsubsection{Generating random preferences}

We generate random preferences by randomly sampling directed trees according to Algorithm~\ref{alg:GenerateRandomTrees}.
The function \textsc{GenerateRandomUndirectedTree} randomly samples a Pr\"ufer sequence on $|Q|$ nodes, and converts it to a random tree \cite{gottlieb2001prufer}.
We transform the random undirected tree into a directed tree by selecting a root node and conducting a breadth-first search in the tree from that root node.
For each edge that we add to the directed tree, we add it to the preference relation in reverse order.
For example, suppose we have an undirected tree $T$ on $\{1,2,3,4,5\}$ with edges
\begin{equation}
    \{\{1,2\},\{1,3\},\{1,4\},\{4,5\}\}.
\end{equation}
When we root this tree at node $1$, and convert it into a preorder, we obtain the preorder
\begin{multline}
    \calE = \{(2,1),(3,1),(4,1),(5,4),(5,1), \\
    (1,1),(2,2),(3,3),(4,4),(5,5)
    \}.
\end{multline}
That is, node $1$ becomes the least preferred node.

\begin{algorithm}
\caption{Generating random preorders from trees on a set $Q$.}
\label{alg:GenerateRandomTrees}
\begin{algorithmic}[1]
\STATE \textsc{GenerateRandomTree}($Q$)
        \STATE $\calE \gets \emptyset$
        \STATE $\hat{\mathcal{E}} \gets \textsc{GenerateRandomUndirectedTree}($Q$)$ 
        \STATE $q_{\text{root}} \gets \textsc{RandomSample}(Q)$
        \STATE Convert undirected tree $\hat{\mathcal{E}}$ to a directed tree  rooted at $q_{\text{root}}$, and add edges to $\calE$ in reverse order
        \STATE $\mathcal{E} \gets \textsc{TransitiveClosure}(\mathcal{E})$
        \STATE $\calE \gets \calE \cup \{(q,q)| q \in Q\}$
    \RETURN $\calE$
\end{algorithmic}
\end{algorithm}

We sample random trees as they represent a class of preference relations that can still encode incomparability, while being easy to uniformly sample from.

\subsubsection{Sampling random query sequences}

We sample random query sequences by randomly shuffling the list of nodes $Q$ while ensuring the nodes $x$ and $y$ that define the secret $\frU_{xy}$ are at the end of this list, and we detail this procedure in Algorithm~\ref{alg:GenerateRandomSeqs}.
\begin{algorithm}
\caption{Generating random query sequences.}
\label{alg:GenerateRandomSeqs}
\begin{algorithmic}[1]
\STATE \textsc{GenerateRandomQueries}($Q, x , y$)
    \STATE $\hat{Q} \gets Q \setminus \{x,y\}$
    \STATE $\mathcal{U} \gets \textsc{RandomShuffle}(\hat{Q})$
    \STATE Append $x$ and $y$ to $\mathcal{U}$
    \RETURN $\mathcal{U}$
\end{algorithmic}
\end{algorithm}

This method of generating queries helps ensure that the protocol does not terminate prematurely. We note that, in practice, we may consider the secret, i.e., $\frU_{xy}$, to be known to the mediator. 
Note that this assumption does not imply the mediator can determine whether a player's preorder lies in $\frU_{xy}$ or $\frU^c_{xy}$.

\subsection{A game where we can generate winning regions}

In our experiments we use a specific instance of the shared-delivery example to investigate whether we can make more queries without violating privacy. 

We use an example that represents a grid in which player $1$ can only traverse rows while player $2$ can only traverse columns.
In particular, we define
\begin{equation}
    \mathcal{T} = \{(i,j)| i \in \{1,\ldots,k\}, j \in \{1,\ldots,k\}\}
\end{equation}
for some $k \in \mathbb{N}$.
We define $\mathcal{F}_i$ as follows
\begin{align}
    & \mathcal{F}_1 = \{ ((i,j),(i',j')) | i = i' \wedge |j - j'| = 1\} \cup \calF_{sl}, \\
    & \mathcal{F}_2 = \{ ((i,j),(i',j')) | |i - i'| = 1 \wedge j = j' \}\cup \calF_{sl},
\end{align}
where $\calF_{sl}$ is the set of self-loops on $\calT$.
That is, player $1$ can only traverse rows, while player $2$ can only traverse columns.
The state of the game is a triple $(i,j,p)$ comprising a row $i$, column $j$, and a flag $p \in \{p_1,p_2\}$ indicating the player in control of the package.
In this game $Q = \calT$, that is, players have preferences over the location of the package, and $V_r = \calT$, \ie, all nodes are reachable. 

The set of the states of the terminating product game, when we use the representation from Section~\ref{sec:prodComp}, is $\calT \times \{p_1,p_2\} \times \{0,1\}$ where the flag in $\{0,1\}$ indicates termination.

\subsubsection{Characterizing the winning regions}

In Proposition~\ref{prop:gridGameWinningCharac}, we characterize the set of preorders which share winning regions with $E$.
For a semi-automaton state $\qc$, a preorder $E$ on $Q$ will induce a winning region $W = \calT \times \{p_1,p_2\} \times \{0,1\} \setminus \lsrP{1}{\qc}{E}$.
In Proposition~\ref{prop:gridGameWinningCharac}, we characterize the sets $Q^+ \subseteq Q$, such that a certain B\"uchi objective induces the same winning regions as $E$.
The conditions on these sets $Q^+$ represent conditions on the upper closures of preorders so that they share winning regions with $E$.
That is, given $W$, $\upc{\qc}{E}$ satisfies the conditions in Proposition~\ref{prop:gridGameWinningCharac} if and only if $$W \cap (\calT \times \{0\}) = (\calT \times \{p_1,p_2\} \times \{0\}) \setminus \lsrP{1}{\qc}{E}.$$
Note that Proposition~\ref{prop:gridGameWinningCharac} only depends on $W \cap (\calT \times \{0\})$.

\begin{proposition}
    \label{prop:gridGameWinningCharac}
    Let $W$ be the winning region for player $1$ of the B\"uchi game on the terminating product game arena, where the objective is $Q^+ \times \{p_1,p_2\} \times \{0,1\}$.
    \begin{align}
        & (i,j,p_1,0) \in W \iff \exists j' : (i,j') \in Q^+
        \\& (i,j,p_2,0) \in W \iff \forall i' : (i',j) \in Q^+
    \end{align}
\end{proposition}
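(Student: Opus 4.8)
The plan is to solve each of the two Büchi games directly by exhibiting explicit winning strategies, exploiting two structural facts: (i) while a player holds the package it controls the movement but is confined to a single line — rows for player~1 under flag $p_1$, columns for player~2 under flag $p_2$ — and (ii) every location carries a self-loop (since $\calF_{sl}\subseteq\calF_1\cap\calF_2$) as well as a termination action, and neither a self-loop, nor a package exchange, nor termination changes the location component. Throughout I use that the objective $Q^+\times\{p_1,p_2\}\times\{0,1\}$ is met exactly when the location lands in $Q^+$ infinitely often, that the location changes only through a genuine row move (available to the holder under $p_1$) or column move (under $p_2$), and that an absorbing terminal state lies in $W$ iff its location is in $Q^+$. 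Since the statement constrains only $W\cap(\calT\times\{0\})$, I work with the non-terminated states.

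For the first equivalence I consider $(i,j,p_1,0)$. For ``$\Leftarrow$'', if $(i,j')\in Q^+$ for some $j'$, then since player~1 holds the package it walks along row $i$ to $(i,j',p_1,0)$ and self-loops forever, visiting $Q^+$ at every subsequent step, so the state is in $W$. For ``$\Rightarrow$'' I argue the contrapositive: assuming row $i$ is disjoint from $Q^+$, I give a player~2 strategy visiting $Q^+$ only finitely often. The key point is that the location cannot leave row $i$ unless player~1 surrenders control by exchanging the package, because while holding it player~1 is confined to row $i$ and neither exchange nor termination moves the location. Letting player~2 self-loop whenever in control, either player~1 never exchanges, so the play stays in the $Q^+$-free row $i$ forever, or player~1 exchanges at some (still $Q^+$-free) cell of row $i$, after which player~2 self-loops there indefinitely and never returns control; in both cases $Q^+$ is never visited, so $(i,j,p_1,0)\notin W$.

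For the second equivalence I consider $(i,j,p_2,0)$, where player~2 holds the package and is confined to column $j$. For ``$\Leftarrow$'', suppose column $j\subseteq Q^+$, and let player~1 always play the self-loop when in control. Then the location never leaves column $j$: player~2's column moves stay in column $j$, self-loops/exchanges/terminations leave the location fixed, and the only escape — a row move — is never chosen by player~1. As every cell of column $j$ is in $Q^+$, every visited state is in $Q^+$, so $(i,j,p_2,0)\in W$. For ``$\Rightarrow$'', by contraposition suppose $(i^*,j)\notin Q^+$; then player~2 walks along column $j$ from row $i$ to row $i^*$ without ever exchanging the package, and self-loops at $(i^*,j,p_2,0)$ forever. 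Player~1 never gains control, so $Q^+$ is visited only during the finite walk and the play thereafter remains at the $Q^+$-free cell, giving $(i,j,p_2,0)\notin W$.

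The asymmetry between the existential condition for $p_1$-states and the universal condition for $p_2$-states is exactly the asymmetry of control: under $p_1$ the protagonist steers within the row, so one good cell suffices, whereas under $p_2$ the antagonist steers within the column, so player~1 wins only when the adversary cannot escape to any bad cell. I expect the main obstacle to be rigorously ruling out that the package-exchange action helps the losing party — confirming that exchanging never lets the protagonist manufacture a $Q^+$-visit in the $p_1$ case (the recipient simply parks) and never lets the antagonist escape $Q^+$ in the all-$Q^+$-column $p_2$ case (the recipient parks on a good cell). Once the role of the self-loops and the confinement-to-a-line property are pinned down, termination is dispatched by the remark that a terminal state is winning iff its location lies in $Q^+$.
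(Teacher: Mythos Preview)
Your proof is correct and follows essentially the same route as the paper: both arguments exploit the row/column confinement to give explicit winning and spoiling strategies, using contraposition for the $\Rightarrow$ directions. The only cosmetic difference is that you park via self-loops (available since $\calF_{sl}\subseteq\calF_1\cap\calF_2$) whereas the paper parks via the termination action $\tau$; either choice works because the Büchi objective is indifferent between an absorbing terminal state in $Q^+$ and an infinite self-loop at a $Q^+$-cell.
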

\begin{proof}
    \textit{(Statement 1 $\implies$)}
    We go by contraposition.
    We assume 
     $(i,j') \notin Q^+$ for all $j' \in \{1,\ldots,k\}$, and want to prove that $(i,j,p_1,0)$ is not in the winning region.
     Player $1$ can only move in row $i$ while staying in their nodes, and thus can not win the game without passing control to player $2$. 
     However, if player $1$ passes control to player $2$, player $2$ can terminate the game in the corresponding node and force player $1$ to lose.

     \textit{(Statement 1 $\impliedby$)}
     If there exists $j'$ such that $(i,j') \in Q^+$, then player $1$ can simply implement a strategy that induces the path which starts at $(i,j,p_1,0)$, goes to $(i,j',p_1,0)$, and then terminates at $(i,j',p_1,1)$, thus winning the game.

    \textit{(Statement 2 $\implies$)}
    We go by contraposition.
    We assume $(i',j) \notin Q^+$ for some $i'$, and we aim to prove that $(i,j,p_2,1) \notin W$.
    Indeed, from $(i,j,p_2,0)$ player $2$ can simply force the game state to $(i',j,p_2,1)$, and force player $1$ to lose in their B\"uchi game.

    \textit{(Statement 2 $\impliedby$)}
    If $(i',j) \in Q^+$ for all $i' $, then if player $2$ never transfers control to player $1$, then player $1$ wins, as the game is in player $1$'s winning region for all time steps.
    If player $2$ does transfer control, then they transition the game to $(i',j,p_1,0)$ and by assumption $(i',j) \in Q^+$ so player $1$ can then terminate the game, and win. 
    $\square$

\end{proof}

We can use this winning region characterization to search for preorders that induce the same winning regions as some reference preorder $E$.

\subsubsection{Generating preorders that share winning regions}

In this section, we give an implementation of \textsc{\genk} that uses the above characterization of winning regions.

Algorithm~\ref{alg:GeneratingThresholdsFromWinningRegion} searches the powerset of $Q$ for up to $l$ sets $Q^+$ that induce a reference winning region $W$.

Algorithm~\ref{alg:GeneratingPreferencesWithWinningRegions} takes a preorder $E$, and a query list $(\qc_j)_{j=1}^k$ and generates preorders $E'$ such that 
\begin{equation}
    (\calT \times \{0\}) \cap \lsrP{1}{\qc_j}{E} = (\calT \times \{0\}) \cap \lsrP{1}{\qc_j}{E'} 
\end{equation}
for all $j$ in $1,\ldots, k$.
For each winning region $W_j$, restricted to $\calT \times \{0\}$, Algorithm~\ref {alg:GeneratingPreferencesWithWinningRegions} generates a set of possible constraints on upper closures using Algorithm~\ref{alg:GeneratingThresholdsFromWinningRegion}.
Algorithm~\ref{alg:GeneratingPreferencesWithWinningRegions} then searches over combinations of these constraints and applies \textsc{\genfc} to generate preorders.
We implement the search over the product set of threshold constraints lexicographically, but we remark that different search orderings may yield different results.

While the algorithms below do not run in polynomial time, we can trade off computation time and the likelihood of finding preorders by tuning the value of the constant $l$ in Algorithm~\ref{alg:GeneratingPreferencesWithWinningRegions}.

\begin{algorithm}
\caption{Generating upper closure constraints from winning regions.}
\label{alg:GeneratingThresholdsFromWinningRegion}
\begin{algorithmic}[1]
\STATE \textsc{GenerateThresholds}($W$, $l$)
    \STATE $\mathcal{Q} \gets \emptyset$
    \FOR{$Q^+ \in 2^Q$}
        \IF{$Q^+$ satisfies constraints in Proposition~\ref{prop:gridGameWinningCharac}}
            \STATE $\mathcal{Q} \gets \mathcal{Q} \cup \{Q^+\}$
        \ENDIF
        \IF{$|\mathcal{Q}| \geq l$}
            \STATE \textbf{break}
        \ENDIF
    \ENDFOR
    \RETURN $\mathcal{Q}$
\end{algorithmic}
\end{algorithm}

\begin{algorithm}
\caption{Implementation of \textsc{\genk} for \emph{winning-region-privacy}.}
\label{alg:GeneratingPreferencesWithWinningRegions}
\begin{algorithmic}[1]
\STATE \textsc{\genk}$_l$($E,(\qc_1,\ldots,\qc_k),\frU_{xy},K,f$)
    \FOR{$j = 1,\ldots, k-1$}
        \STATE $W_j \gets (\calT \times \{0\}) \setminus \lsrP{1}{\qc_j}{E}$
        \STATE $\mathcal{Q}_j \gets \textsc{GenerateThresholds}(W_j,l)$
    \ENDFOR
    \IF{$f = p$}
        \STATE $\mathcal{Q}_k \gets \{\{\qc_k\}_E^\uparrow\}$
    \ELSE
        \STATE $W_k \gets (\calT \times \{0\}) \setminus \lsrP{1}{\qc_k}{E}$
        \STATE $\mathcal{Q}_k \gets \textsc{GenerateThresholds}(W_k ,l)$ 
    \ENDIF
    \STATE $\mathcal{P}^+ \gets \emptyset$
    \STATE $\mathcal{P}^- \gets \emptyset$
    \STATE $c \gets 0$
    \FOR{$(Q_1,\ldots,Q_k) \in \mathcal{Q}_1 \times \ldots, \times \mathcal{Q}_k$}
        \STATE $c \gets c + 1$
        \STATE $\mathcal{C} \gets \emptyset$
        \FOR{$j = 1\ldots,k$}
            \STATE $\mathcal{C} \gets \mathcal{C} \cup \{(v,\qc_j,\succ)| v \in Q_j\}$
            \STATE $\mathcal{C} \gets \mathcal{C} \cup \{(v,\qc_j,\not\succ)| v \notin Q_j\}$
        \ENDFOR
        \STATE $\mathcal{C}^+ \gets \mathcal{C} \cup \{(x,y,\succ)\}$
        \STATE $\mathcal{C}^- \gets \mathcal{C} \cup \{(x,y,\not\succ)\}$
        \STATE $\mathcal{P}^+ \gets \calP^+ \cup \textsc{\genfc}(Q,\calC^+,K)$
        \STATE $\mathcal{P}^- \gets \calP^- \cup \textsc{\genfc}(Q,\calC^-,K)$
        \IF{$c \geq l$ or ($|\mathcal{P}^+| > K$ and $|\mathcal{P}^-| > K$)}
            \STATE \textbf{break}
        \ENDIF
    \ENDFOR
    \STATE Truncate $\calP^+,\calP^-$ to $K$ values
    \RETURN $\mathcal{P}^+,\calP^-$
\end{algorithmic}
\end{algorithm}

For completeness, we validate that Algorithm~\ref{alg:GeneratingPreferencesWithWinningRegions} satisfies Assumption~\ref{ass:GenKAss}, but we remark that this fact is already clear from the statement of Algorithm~\ref{alg:GeneratingPreferencesWithWinningRegions}.
\begin{proposition}
    Algorithm~\ref{alg:GeneratingPreferencesWithWinningRegions} satisfies Assumption~\ref{ass:GenKAss}.
\end{proposition}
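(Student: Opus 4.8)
The plan is to verify the four conditions of \refAssume{ass:GenKAss} one at a time, using the soundness of \textsc{\genfc} (Lemmas~\ref{lem:prefGenCorr1} and~\ref{lem:prefGenCorr2}) together with the winning-region characterization of \refProp{prop:gridGameWinningCharac}. I abbreviate the implementation in Algorithm~\ref{alg:GeneratingPreferencesWithWinningRegions} by $g$ and write $\mathcal{P}^+, \mathcal{P}^- = g(E, (\qc_j)_{j=1}^k, \frU_{xy}, K, f)$.

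For Condition~1, I would note that each preorder added to $\mathcal{P}^+$ is returned by a call to $\textsc{\genfc}(Q, \mathcal{C}^+, K)$ with $(x,y,\succ) \in \mathcal{C}^+$; by Lemma~\ref{lem:prefGenCorr1} any such preorder satisfies $x \succ y$ and hence lies in $\frU_{xy}$, and symmetrically $\mathcal{P}^- \subseteq \frU_{xy}^c$ via $(x,y,\not\succ) \in \mathcal{C}^-$. The final truncation enforces the cardinality bounds $|\mathcal{P}^+| \leq K$ and $|\mathcal{P}^-| \leq K$.

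For Conditions~2 and~3, the key observation is that for each enumerated tuple $(Q_1,\ldots,Q_k)$ the constructed constraints force any generated $E'$ to satisfy $\upc{\qc_j}{E'} = Q_j$ for every $j$. When $f = p$ the set $\mathcal{Q}_k$ is the singleton $\{\upc{\qc_k}{E}\}$, so $Q_k = \upc{\qc_k}{E}$ is forced and Condition~3 follows immediately. For Condition~2, each $Q_j$ produced by \textsc{GenerateThresholds}$(W_j, l)$ satisfies the hypotheses of \refProp{prop:gridGameWinningCharac} for $W_j = (\calT \times \{0\}) \setminus \lsrP{1}{\qc_j}{E}$; since $\upc{\qc_j}{E'} = Q_j$ makes $E'$ induce the B\"uchi objective $\bgobjP{1}{\qc_j}{E'} = Q_j \times \{p_1,p_2\} \times \{0,1\}$, that proposition equates the induced winning region on $\calT \times \{0\}$ with $W_j$, giving $(V_r \times \{0\}) \cap \lsrP{1}{\qc_j}{E'} = (V_r \times \{0\}) \cap \lsrP{1}{\qc_j}{E}$.

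The main work is Condition~4, the invariance requirement, and I expect this to be the crux even though it is structurally simple. The plan is to argue that $g$ reads $E$ only through the winning regions $W_1,\ldots,W_k$ (equivalently the sets $\mathcal{Q}_j$ they determine) when the flag is $r$, and additionally through $\upc{\qc_k}{E}$ when the flag is $p$; every subsequent step---the lexicographic enumeration of $\mathcal{Q}_1 \times \cdots \times \mathcal{Q}_k$, the constraint construction, the calls to \textsc{\genfc}, and the break and truncation logic---is a deterministic function of those sets, the fixed bound $l$, and the cardinality argument. By Conditions~2 and~3, every $E' \in \mathcal{P}^+ \cup \mathcal{P}^-$ shares $W_1,\ldots,W_k$ with $E$, and shares $\upc{\qc_k}{E}$ when $f = p$. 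Restricting to a prefix $(\qc_l)_{l=1}^j$ with $f^*_j = r$ for $j < k$ invokes only the winning-region branch, which is therefore identical for $E$ and $E'$, so $g(E, (\qc_l)_{l=1}^j, \frU_{xy}, K', f^*_j) = g(E', (\qc_l)_{l=1}^j, \frU_{xy}, K', f^*_j)$ for every $K'$. The one point I would check carefully is that the break conditions in the inner loop depend on $E$ only through these shared quantities and on $K'$, so the enumeration halts at the same index and the truncation returns identical sets; this is immediate from the pseudocode but is exactly where one must confirm that nothing else about $E$ leaks into the control flow.
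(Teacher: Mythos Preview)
Your proposal is correct and follows essentially the same route as the paper's proof: verify the four conditions of Assumption~\ref{ass:GenKAss} directly, using the constraint soundness of \textsc{\genfc} for Condition~1, the characterization in Proposition~\ref{prop:gridGameWinningCharac} together with the forced upper closures $\upc{\qc_j}{E'} = Q_j$ for Conditions~2 and~3, and the observation that Algorithm~\ref{alg:GeneratingPreferencesWithWinningRegions} depends on $E$ only through the losing-region intersections (and, when $f = p$, the upper closure $\upc{\qc_k}{E}$) for Condition~4. The paper separates Condition~4 into explicit $f = r$ and $f = p$ cases and notes separately that the reasoning for Condition~2 covers the singleton case $\mathcal{Q}_k = \{\upc{\qc_k}{E}\}$, but the substance is identical to what you outline.
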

\begin{proof}
    Trivially, $\calP^+$ and $\calP^-$ satisfy condition 1) in Assumption~\ref{ass:GenKAss} as the constraint sets from which we generate $\calP^+$ and $\calP^-$ contain $(x,y,\succ)$ and $(x,y,\not\succ)$ respectively.
    
    For condition 2), first let $E'$ be an element of $\calP^+ \cup \calP^-$ and note that, for all $j = 1,\ldots,k$,
    \begin{equation}
        \upc{\qc_j}{E'} \in \calQ_j.
    \end{equation}
    By Proposition~\ref{prop:gridGameWinningCharac} and Algorithm~\ref{alg:GeneratingThresholdsFromWinningRegion}, it then follows that
    \begin{equation}
        \lsrP{1}{\qc_j}{E'} \cap (V_r \times \{0\})
        = \lsrP{1}{\qc_j}{E} \cap (V_r \times \{0\}),
    \end{equation}
    where $V_r = \calT$.
    Note that the above reasoning incorporates the case where $\calQ_k = \{\upc{\qc_k}{E}\}$.

    Condition 3) also holds, as we define $\calQ_k$ conditional on $f$, and if $f = p$, the constraints on $E'$ imply that $\upc{\qc_k}{E} = \upc{\qc_k}{E'}$.

    Finally, regarding the invariance condition 4), we consider two cases.
    
    If $f = r$, Algorithm~\ref{alg:GeneratingPreferencesWithWinningRegions} is only a function of preorder $E$ thorough the losing region intersections $(V_r \times \{0\} ) \cap (\lsrP{1}{\qc_j}{E})_{j=1}^k$, and any preorder $E'$ that Algorithm~\ref{alg:GeneratingPreferencesWithWinningRegions} returns through $\calP^+$ and $\calP^-$ satisfies $$\lsrP{1}{\qc_j}{E'} \cap (V_r \times \{0\} )
        = \lsrP{1}{\qc_j}{E} \cap (V_r \times \{0\})$$
        for all $j$ in $1,\ldots,k$.
    Thus, for all $j = 1,\ldots,k$ and natural integer $K$,
    \begin{multline}
        \textsc{\genk}(E,(q_1,\ldots,q_j),\frU_{xy},K,f_j) \\ = \textsc{\genk}(E',(q_1,\ldots,q_j),\frU_{xy},K,f_j)
    \end{multline}
    where $f_j = r$ for all $j = 1,\ldots,k$.

    If $f = p$, Algorithm~\ref{alg:GeneratingPreferencesWithWinningRegions} is only a function of preorder $E$ through the losing region intersections $ (V_r \times \{0\} ) \cap(\lsrP{1}{\qc_j}{E})_{j=1}^{k-1}$ and the upper closure $\upc{\qc_k}{E}$.
    Any preference that Algorithm~\ref{alg:GeneratingPreferencesWithWinningRegions} returns through $\calP^+$ and $\calP^-$ satisfies $$\lsrP{1}{\qc_j}{E'} \cap (V_r \times \{0\} )
        = \lsrP{1}{\qc_j}{E} \cap (V_r \times \{0\}) \quad $$
    for all $j$ in $1,\ldots, k -1$,
    and $\upc{\qc_k}{E} = \upc{\qc_k}{E'}$.
    Thus, for all $j = 1,\ldots, k-1$ and natural integer $K$,
    \begin{multline}
        \textsc{\genk}(E,(q_1,\ldots,q_j),\frU_{xy},K,r) \\= \textsc{\genk}(E',(q_1,\ldots,q_j),\frU_{xy},K,r),
    \end{multline}
    and
    \begin{multline}
        \textsc{\genk}(E,(q_1,\ldots,q_k),\frU_{xy},K,p) \\= \textsc{\genk}(E',(q_1,\ldots,q_k),\frU_{xy},K,p).
    \end{multline}
    $\square$
\end{proof}

\end{document}